\documentclass[11pt]{article}
\usepackage{amsmath}
\usepackage{amsthm}
\usepackage{tikz}
\usepackage{amssymb,float}
\usepackage[utf8]{inputenc}
\usepackage[T1]{fontenc}
\usepackage{caption}
\usepackage{algorithm,mathtools}
\usepackage{algpseudocode}
\usepackage{subcaption}
\usepackage{mathtools, mathabx} 
\usepackage{thm-restate}
\usepackage{fullpage}
\usepackage[margin=1in]{geometry}\geometry{
 letterpaper
 }
\usepackage{complexity}
\usepackage{xcolor,hyperref}
\usepackage{tcolorbox}
\usepackage{graphicx}
\newtheorem{theorem}{Theorem}
\newtheorem{observation}[theorem]{Observation}
\newtheorem{lemma}[theorem]{Lemma}
\newtheorem{claim}[theorem]{Claim}
\newtheorem{subclaim}[theorem]{{Sub-Claim}}

\theoremstyle{definition}

\newcommand{\ncna}{{\mathbb{F}_{\bar{A},\bar{C}}[X]}}
\newcommand{\cna}{{\mathbb{F}_{\bar{A},C}[X]}}


\newtheorem{defn}{Definition}





\newcommand{\F}{\mathbb{F}}

\newcommand{\pit}{\mbox{\rm PIT}}

\renewcommand{\int}{\rm{\mbox{int}}}

\renewcommand{\geq}{\geqslant}

\renewcommand{\leq}{\leqslant}

\title{Efficient Polynomial Identity Testing Over Nonassociative Algebras}
\date{}

\author{
C. Ramya\thanks{The Institute of Mathematical Sciences (a CI of Homi Bhabha National Institute), Chennai, India, \texttt{email: ramyac@imsc.res.in.}}
\and
Partha Mukhopadhyay\thanks{Chennai Mathematical Institute, Chennai, India \texttt{email: partham@cmi.ac.in.}}
\and
Pratik Shastri\thanks{The Institute of Mathematical Sciences (a CI of Homi Bhabha National Institute), Chennai, India, \texttt{email: pratiks@imsc.res.in.}}
}

\begin{document}

\maketitle

\begin{abstract}
We design the first efficient polynomial identity testing algorithms over the \emph{nonassociative} polynomial algebra. In particular, multiplication among the formal variables is commutative but it is not associative. This complements the strong lower bound results obtained over this algebra by Hrubeš, Yehudayoff, and Wigderson \cite{HWY10} and Fijalkow, Lagarde, Ohlmann, and Serre \cite{FLOS21} from the identity testing perspective. Our main results are the following:
\begin{itemize}
\item We construct nonassociative algebras (both commutative and noncommutative) which have no low degree identities. As a result, we obtain the first Amitsur-Levitzki type theorems \cite{AL50} over nonassociative polynomial algebras. As a direct consequence, we obtain randomized polynomial-time black-box $\pit$ algorithms for nonassociative polynomials which allow evaluation over such algebras. 

\item On the derandomization side, we give a deterministic polynomial-time identity testing algorithm for nonassociative polynomials given by arithmetic circuits in the white-box setting. Previously, such an algorithm was known with the additional restriction of noncommutativity \cite{ADMR17}. 

\item In the black-box setting, we construct a hitting set of quasipolynomial-size for nonassociative polynomials computed by arithmetic circuits of small depth. Understanding the black-box complexity of identity testing, even in the randomized setting, was open prior to our work.  
\end{itemize}
\end{abstract}


\section{Introduction}\label{sec:intro}

The goal of algebraic complexity is to study the complexity of computing multivariate polynomials using basic arithmetic operations, such as addition and multiplication. Arithmetic circuits and formulas are among the well-studied  models in this area.
In particular, arithmetic circuits are directed acyclic graphs whose leaves are labeled by variables or constants, and whose internal nodes (called gates) are labeled with either $+$ or $\times$. Formulas are circuits whose underlying graph is a tree. Each gate in a circuit computes a polynomial in the natural way and the output of a circuit is said to be the polynomial computed at a distinguished output gate. 

There are two central problems studied in algebraic complexity: one is the question of proving size lower bounds for circuits computing explicit polynomials, and the other is the question of derandomizing polynomial identity testing ($\pit$ for short). The $\pit$ problem comes in two variants. First is the black-box setting, where we are given evaluation access to a circuit (given as a black box), and we must decide whether the polynomial it computes is identically zero. The second, seemingly easier is the white-box setting, where we are explicitly given the circuit as a graph to determine whether it computes the zero polynomial.

Baur and Strassen \cite{BS83} proved that any circuit computing $\sum_{i=0}^nx_i^n$ requires size $\Omega(n\log n)$. This is the strongest known lower bound for arithmetic circuits. On the other hand, the $\pit$ problem admits a randomized polynomial-time black-box algorithm over the usual polynomial ring $\mathbb{F}[x_1, \ldots, x_n]$, thanks to the Polynomial Identity Lemma \cite{Zip79, Sch80, DL78} (see Lemma \ref{lem:SZ} for the exact statement). Here, $\F$ can be any field of sufficient size, and the variables $x_1, \ldots, x_n$ commute under multiplication. For the black-box case, the derandomization problem is essentially equivalent to the efficient construction of small {\em hitting sets}. Despite intense efforts over many years, proving strong lower bounds for circuits and derandomizing $\pit$ for circuits have both remained elusive goals. For more on these problems, the reader is referred to the excellent surveys by Shpilka and Yehudayoff \cite{SY10} and Saxena \cite{Sax14}.
In general, the problems of proving lower bounds and derandomizing $\pit$ are closely related and, in fact, nearly equivalent due to an influential result of Impagliazzo and Kabanets \cite{KI04}. For example, very recently, a subexponential-size hitting set for $\pit$ of low-depth arithmetic circuits was obtained via a breakthrough lower bound result by Limaye, Srinivasan, and Tavenas \cite{LST21}.

The limitation in our understanding of $\pit$ and lower bounds stems from the difficulty of analyzing how a monomial gets canceled at an intermediate step in an arithmetic circuit computation. In the setting of the usual polynomial ring $\F[x_1, \ldots, x_n]$, multiplication is commutative and associative. In other words, for all $x_i, x_j, x_k$ we have $x_ix_j=x_jx_i$ and $(x_i(x_jx_k))=((x_ix_j)x_k)$.
A central line of investigation studies arithmetic circuits by restricting the relations satisfied by the multiplication rule. The hope is that, in the absence of such relations, we can better understand \emph{cancellations} of monomials and gain insight into general circuits. For example, we can drop commutativity while preserving associativity.

If we drop commutativity, we obtain the noncommutative polynomial ring $\mathbb{F}\langle x_1, \ldots,x_n\rangle$ (which remains associative), and noncommutative circuits compute polynomials in the ring $\mathbb{F}\langle x_1, \ldots,x_n\rangle$. In his pioneering work \cite{Nis91}, Nisan proved that the noncommutative Permanent and Determinant polynomials require exponential-size noncommutative algebraic branching programs (ABPs for short). ABPs are a subclass of circuits. Up to polynomial blowup, they simulate formulas and are simulated by circuits. In the noncommutative setting, exponential separations are known between ABPs and circuits \cite{Nis91}. On the $\pit$ side, Raz and Shpilka \cite{RS04} developed a white-box deterministic polynomial-time algorithm for polynomials computed by noncommutative ABPs. In fact, a quasipolynomial time black-box $\pit$ algorithm has also been designed for the same problem by Forbes and Shpilka \cite{FS13}. 

However, if we look at \emph{general} noncommutative circuits (instead of ABPs), again we see that progress on the questions of lower bounds and derandomizing $\pit$ has remained elusive. In fact, for general circuits, the best lower bound and $\pit$ results over $\mathbb{F}\langle x_1,\ldots, x_n \rangle$ match those over $\mathbb{F}[x_1 \ldots, x_n]$.

Henceforth, we use $X$ to denote the set $\{x_1, \ldots, x_n\}$. $\F_{A,C}[X]$ and $\F_{A,\bar{C}}[X]$ stand for the rings $\F[X]$ and $\F\langle{X}\rangle$, respectively. Instead of dropping commutativity, we may choose to drop associativity. This leads us to the algebra $\F_{\bar{A},C}[X]$, the polynomial algebra where multiplication is commutative but nonassociative\footnote{Formally, an {\em algebra} $\mathbb{A}$ over a field $\mathbb{F}$ is a vector space equipped with a bilinear product.} \footnote{In $\cna$, $(x_i(x_j x_k)) \neq ((x_i x_j)x_k)$ even in the case when $x_i = x_j =x_k$.}. If we drop \emph{both} commutativity and associativity, we obtain the polynomial algebra $\ncna$. On the lower bounds side, very impressive progress has been made in the algebra $\F_{\bar{A},C}[X]$.
Hrubeš, Yehudayoff, and Wigderson \cite{HWY10} proved the first exponential-size lower bounds for circuits computing explicit polynomials in $\cna$. Subsequently, Fijalkow, Lagarde, Ohlmann, and Serre \cite{FLOS21} strengthened this result by providing an \emph{exact} characterization of the size of a minimal circuit computing a polynomial $f\in \cna$ in terms of the rank of a certain matrix of coefficients.

On the other hand, in the context of $\pit$, it is imperative to note that no efficient algorithm is known over the algebra $\cna$, not even a randomized white-box algorithm. This is surprising given the intimate connections between lower bounds and PIT in various settings. Over the algebra $\ncna$, Arvind et al.\ designed a deterministic \emph{white-box} polynomial-time algorithm for $\pit$ of arithmetic circuits \cite{ADMR17}. Furthermore, as mentioned by the authors in \cite{ADMR17}, a black-box algorithm is not known even over $\ncna$. Notably, over the algebra $\F\langle{X}\rangle$, such a randomized $\pit$ algorithm is known due to the Amitsur-Levitzki Theorem \cite{AL50, BW05} (see Theorem \ref{thm:amitsur-Levitzki} for a formal statement). 

We note that nonassociative computations are fundamental even beyond algebraic complexity. For example, the composition of operations in computer programs is typically nonassociative.  
As a concrete algorithmic example, in a seminal work, Valiant designed a sub-cubic algorithm for recognizing Context Free Languages (CFL) \cite{Val75}. In particular, Valiant developed an algorithm to compute the transitive closure of upper triangular matrices whose entries are elements of a nonassociative monoid \cite{Val75}. In algebraic complexity, lower bounds for nonassociative circuits have been used prove lower bounds for related \emph{associative} models of computation \cite{FLOS21}.

The main technical contributions of this paper are as follows. Over the polynomial algebra $\cna$, we design a white-box deterministic polynomial-time algorithm for the $\pit$ of arithmetic circuits. In the black-box setting, we develop the first randomized polynomial-time $\pit$ algorithm for nonassociative arithmetic circuits (over both $\cna$ and $\ncna$). This is achieved by proving analogues of the Amitsur-Levitzki theorem over the nonassociative polynomial algebras $\ncna$ and $\cna$. Moreover, for the classes of circuits over the algebras $\cna$ and $\ncna$ with polylogarithmic depth, we construct quasipolynomial-size hitting sets over nonassociative algebras of small dimension. To the best of our knowledge, this is the first black-box derandomization result for a well-studied circuit class over a nonassociative polynomial algebra. 
We elaborate on our results and techniques in the next section.

\subsection{Our Results}\label{sec:ourresult}

In this paper, we complement the strong lower bounds results obtained over the algebra $\cna$ (\cite{HWY10, FLOS21}) by designing efficient PIT algorithms. 
Our results work over all fields of sufficiently large size.

\subsubsection{Black-box randomized nonassociative $\pit$ (Section \ref{Randomized})}

Our first result is a black-box randomized polynomial-time identity testing algorithm for polynomials in $\cna$. Of course, to capture nonassociativity, the natural idea is to evaluate the given polynomial over nonassociative algebras. 

In mathematics, nonassociative algebras are very well-studied. For example, one can see the classic work of Albert \cite{JZBSO93}.
In particular, there are various specific algebras of interest which are nonassociative but commutative. The most important such algebras are, perhaps, the Jordan Algebras. Unfortunately, every Jordan Algebra $\mathbb{J}$ satisfies the \emph{Jordan Identity}: $\forall a, b\in \mathbb{J}$, we have $(ab)(aa)-(a(b(aa)))=0$. This means that performing PIT using Jordan Algebras is not possible. To see why, suppose we are given a circuit computing a non-zero polynomial in the ideal of $\cna$ generated by $\{(x_ix_j)(x_ix_i)-(x_i(x_j(x_ix_i)))\mid x_i,x_j\in X \}$. On any input from $\mathbb{J}$, the circuit will evaluate to $0$, whereas the circuit computes a non-zero polynomial. 
Even over $\ncna$, we have a similar difficulty. For example, Matrix Lie algebras are well studied algebras that are nonassociative and noncommutative, but they satisfy the Jacobi Identity \cite{Bour89}. 

Thus, in order to obtain a fast black-box algorithm, we need to construct a suitable nonassociative algebra that does not satisfy low (in terms of its dimension) degree identities. We should remark that over the noncommutative polynomial ring $\F\langle X\rangle$ (equivalently, $\F_{A, \bar{C}}[X]$), the Amitsur-Levitzki theorem allows us to do black-box randomized $\pit$ for polynomials of degree $\leq d$, using random matrices of dimension $(\lceil d/2\rceil+1)\times (\lceil d/2+1\rceil)$ \cite{AL50,BW05}. In our setting of polynomials in $\cna$, we construct a unital nonassociative, commutative algebra $\mathbb{C}_d$ of dimension $d(d+1)^2+1$ which does not satisfies \emph{any} identity of degree $\leq d$.

\begin{restatable}{lemma}{nonassocAL}
\label{lowerBound2}
    
    Let $f\in \cna$ be a non-zero polynomial of total degree $\leq d$. Then $f$ is not a polynomial identity (PI) for $\mathbb{C}_d$.

\end{restatable}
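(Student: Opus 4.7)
The plan is to construct $\mathbb{C}_d$ explicitly and exhibit a substitution under which any nonzero low-degree $f$ evaluates to a nonzero element. The first step is a standard multilinearization: given nonzero $f\in\cna$ of total degree $\le d$, replace each $x_i$ by $\sum_j t_{ij}y_j$ for fresh commuting indeterminates $t_{ij}$, expand in the $y_j$, and extract the coefficient of a multilinear monomial. Over a field of characteristic $0$ or size sufficiently larger than $d$, this yields a nonzero multilinear $\hat f(y_1,\ldots,y_k)\in\cna$ of degree $k\le d$, and any nonzero evaluation of $\hat f$ in $\mathbb{C}_d$ lifts to a nonzero evaluation of $f$. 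Hence it suffices to prove the lemma for multilinear polynomials of degree at most $d$.

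Next I would construct $\mathbb{C}_d$ as a unital commutative nonassociative algebra with non-unit basis $\{v_{\ell,i,j}:\ell\in\{1,\ldots,d\},\ i,j\in\{1,\ldots,d+1\}\}$, matching the target dimension $d(d+1)^2+1$. Here $\ell$ plays the role of an accumulated degree or ``level'', and $(i,j)$ encodes boundary labels of an underlying commutative binary tree; a natural explicitly symmetric rule is $v_{\ell,i,j}\cdot v_{\ell',i',j'}=v_{\ell+\ell',i,j'}+v_{\ell+\ell',i',j}$, set to zero whenever $\ell+\ell'>d$, with $1$ acting trivially. Commutativity holds by the symmetrization, nonassociativity appears since $(v\cdot w)\cdot u$ and $v\cdot (w\cdot u)$ produce different boundary patterns already for three single-level elements, and the cutoff makes the algebra finite-dimensional. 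Intuitively, $v_{\ell,i,j}$ remembers an ``outline'' of a commutative binary tree of size $\ell$ with extreme leaves labeled $i$ and $j$, while forgetting its internal shape.

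For the main argument, I would substitute $y_i\mapsto a_i:=v_{1,i,i}$ for $i=1,\ldots,k\le d$ and expand $\hat f(a_1,\ldots,a_k)$. Each monomial of $\hat f$, viewed as a commutative binary tree on leaves $\{1,\ldots,k\}$, then contributes a sum (over boundary traversals of the tree) of basis vectors of the form $v_{k,i,j}$. The claim to prove is that distinct monomials of $\hat f$ contribute sets of boundary triples that do not cancel against one another, so that some nonzero coefficient of a basis vector $v_{k,i,j}$ survives in $\hat f(a_1,\ldots,a_k)$. A fallback route, should the direct combinatorics be too delicate, is a generic substitution $y_i\mapsto\sum_v t_{i,v}v$ over the basis combined with the Polynomial Identity Lemma applied to a nonzero coordinate polynomial in the $t_{i,v}$, specialized over a sufficiently large field.

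The principal obstacle is the polynomial-dimension constraint: a naive truncation of the free commutative nonassociative algebra on $d$ generators has dimension of order $(2d-3)!!$, so the boundary-plus-level encoding must be tight enough to fit inside $d(d+1)^2+1$ dimensions yet rich enough to separate every multilinear polynomial of degree $\le d$ under the chosen evaluation. I expect the hardest step of the proof to be a clean combinatorial analysis showing that, despite the aggressive symmetrization forced by commutativity and the level cutoff, the boundary-triple contributions of distinct multilinear monomials really do distinguish $\hat f$ from $0$; verifying bilinearity, commutativity, and truncation-compatibility of the multiplication is routine by comparison.
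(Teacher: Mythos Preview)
Your proposed construction of $\mathbb{C}_d$ has a fatal flaw: it is a Jordan algebra, and hence already satisfies a degree-$4$ polynomial identity.

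To see this, extend your rule $v_{\ell,i,j}\cdot v_{\ell',i',j'}=v_{\ell+\ell',i,j'}+v_{\ell+\ell',i',j}$ bilinearly. Writing an element at level $\ell$ as a matrix $A=(a_{ij})$, a short computation gives that the $(p,q)$-entry of $A\cdot B$ at level $\ell+\ell'$ is $r_A(p)\,c_B(q)+r_B(p)\,c_A(q)$, where $r_A=A\mathbf{1}$ and $c_A=A^{T}\mathbf{1}$ are the row- and column-sum vectors. In matrix form this is
\[
A\cdot B \;=\; AJB + BJA,\qquad J=\mathbf{1}\mathbf{1}^{T}.
\]
But $A\bullet B:=AJB$ is an \emph{associative} product, and your $\cdot$ is exactly its anticommutator. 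Thus the matrix part of your algebra is a special Jordan algebra and satisfies the Jordan identity $(xy)(xx)=x(y(xx))$, i.e.\ the nonzero polynomial $(xy)(xx)-x(y(xx))\in\cna$ is a PI. Your level index merely records total degree (it is additive), so it does not break this: every term of the Jordan identity lands in the same level. Adjoining a unit does not help either, since substituting $1$ for any variable in the linearized Jordan identity
\[
(uy)(vw)+(vy)(uw)+(wy)(uv)-u(y(vw))-v(y(uw))-w(y(uv))
\]
collapses it to $0$ automatically. Hence your $\mathbb{C}_d$ admits a nontrivial multilinear identity of degree $4$, so the lemma is false for your algebra whenever $d\ge 4$. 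Note that this also kills your ``fallback'' of generic substitution plus the Polynomial Identity Lemma: the identity vanishes on \emph{all} inputs, generic or not.

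The paper's $\mathbb{C}_d$ is built differently and specifically to avoid this trap. It first defines a \emph{noncommutative} nonassociative algebra $\mathbb{A}_d$ on the same underlying space via
\[
(x\circ y)[i,j,k]=\sum_{l} x[i,l,k+1]\,y[l,j,k+1],
\]
so the third index tracks \emph{depth in the parse tree} (it \emph{decreases} under multiplication) rather than degree. This makes $\circ$ genuinely nonassociative, so the anticommutator $a\odot b=a\circ b+b\circ a$ is \emph{not} a Jordan product of an associative operation and need not satisfy the Jordan identity. The proof then avoids multilinearization entirely: it substitutes generic level-$1$ elements $Z_i$ (with fresh commuting parameters $z_{i,j,k}$ on the superdiagonal at each depth) and shows, via Claim~\ref{polynomialEntry2} together with Lemma~\ref{levelSeq} and Observation~\ref{commutativeObs}, that each monomial $m\in\cna$ contributes a \emph{disjoint} set of $z$-monomials to the $(1,d'+1,1)$-entry of $\Phi(f)$. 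Thus a nonzero $f$ cannot vanish under this substitution.

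In short, the missing idea is that the third coordinate must encode depth (so that different bracketings are distinguished), not merely degree; once the underlying product is allowed to be associative up to a degree grading, the symmetrization forces the Jordan identity and the construction collapses.
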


To the best of our knowledge, this is the first Amitsur-Levitzki type theorem in the nonassociative setting. As an immediate consequence, we obtain a nonassociative analogue of the Polynomial Identity Lemma. 

\begin{restatable}{theorem}{commnonassocSZ}
\label{Comm-NonAssociative-SZ}
    Let $\mathbb{F}$ be a field with $|\mathbb{F}|>d$, and $S \subset \mathbb{F}$. Let $f \in \cna$ be a non-zero polynomial of degree $\leq d$ given as a black-box with query access to evaluations of $f$ on elements of $\mathbb{C}_d$. 
    Let $S\subseteq \mathbb{F}$ with $|S|>d$. Sample $b_1, \ldots, b_n\in \mathbb{C}_{d}$ as follows: Pick each of the $d(d+1)^2+1$ entries of each of the $b_i$'s uniformly and independently from $S$. Then 
    $$\Pr_{b_1, \ldots, b_n \in \mathbb{C}_d}[f(b_1, \ldots, b_n) = 0] \leq d/|S|.$$
\end{restatable}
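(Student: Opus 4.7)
The plan is to reduce this statement to the standard Polynomial Identity Lemma (stated as Lemma~\ref{lem:SZ} in the paper) by viewing $f(b_1, \ldots, b_n)$ as a tuple of ordinary commutative polynomials in the coordinates of the $b_i$'s.

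First, I fix a basis $\{e_1, e_2, \ldots, e_N\}$ of $\mathbb{C}_d$, where $N = d(d+1)^2 + 1$. For each $i \in [n]$, I introduce formal commuting variables $y_{i,1}, \ldots, y_{i,N}$ and set $b_i = \sum_{j=1}^N y_{i,j} e_j$. Since multiplication in $\mathbb{C}_d$ is bilinear (by definition of an algebra), expanding $f(b_1, \ldots, b_n)$ yields an element of $\mathbb{C}_d$ of the form
\[
f(b_1, \ldots, b_n) = \sum_{j=1}^{N} g_j(\bar{y}) \, e_j,
\]
where each $g_j$ is an ordinary polynomial in the commuting variables $\bar{y} = \{y_{i,j}\}$. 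The crucial degree observation is that each occurrence of a variable $b_i$ in a given parenthesized monomial of $f$ contributes exactly one factor $b_i = \sum_j y_{i,j}e_j$ under bilinear expansion; hence every $g_j$ has total degree at most $d = \deg(f)$ in $\bar{y}$.

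Next, I invoke Lemma~\ref{lowerBound2}: since $f$ is a non-zero polynomial of degree $\leq d$, it is not a polynomial identity for $\mathbb{C}_d$, so there exist concrete $a_1, \ldots, a_n \in \mathbb{C}_d$ with $f(a_1, \ldots, a_n) \neq 0$. Translating back, this means that for some index $j^* \in [N]$, the polynomial $g_{j^*}(\bar y)$ is not identically zero as an element of $\F[\bar{y}]$ (it is non-zero on the evaluation corresponding to the $a_i$'s).

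Finally, sampling each $b_i$ as in the theorem corresponds exactly to sampling the $Nn$ variables $y_{i,j}$ independently and uniformly from $S$. Since $f(b_1, \ldots, b_n) = 0$ forces $g_{j^*}(\bar{y}) = 0$ at the chosen point, the Polynomial Identity Lemma applied to the non-zero commutative polynomial $g_{j^*} \in \F[\bar{y}]$ of degree at most $d$ yields
\[
\Pr[f(b_1, \ldots, b_n) = 0] \le \Pr[g_{j^*}(\bar{y}) = 0] \le \frac{d}{|S|}.
\]

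The only nontrivial step here is Lemma~\ref{lowerBound2} itself (the Amitsur--Levitzki analogue), which is where the hard combinatorial/algebraic work lies and which I am using as a black box. Given that lemma, the rest of the argument is essentially bookkeeping: writing the algebra evaluation coordinate-wise, checking that bilinearity keeps the coordinate polynomials of degree at most $d$, and applying the standard Schwartz--Zippel bound.
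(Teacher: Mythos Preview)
Your proof is correct and follows essentially the same approach as the paper: introduce fresh commuting variables for each coordinate of the $b_i$'s, invoke Lemma~\ref{lowerBound2} to conclude that at least one coordinate polynomial is non-zero, and then apply the Polynomial Identity Lemma (Lemma~\ref{lem:SZ}) to that coordinate. The paper presents this argument for the noncommutative case (Theorem~\ref{NonAssociative-SZ}) and simply notes that the commutative case follows identically from Lemma~\ref{lowerBound2}.
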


We construct $\mathbb{C}_d$ in two stages. First, we construct a \emph{noncommutative}, nonassociative algebra $\mathbb{A}_d$ (see Section \ref{nonassnoncomm} for the precise definition) and show that $\mathbb{A}_d$ does not satisfy identities of degree $\leq d$. Note that this also gives us a noncommutative, nonassociative analogue of Theorem \ref{Comm-NonAssociative-SZ} (see Theorem \ref{NonAssociative-SZ} for exact statement). We prove this by showing that monomials in $\ncna$ can be {\em isolated} using substitutions from $(\mathbb{A}_d)^n$. Each monomial $m\in\ncna$ can be viewed as a rooted, ordered binary tree whose leaves are labeled by variables (see Figure \ref{fig:fig1} for examples). To each occurrence of a variable $x$ in $m$, we may associate a level which indicates the depth at which it appears in the monomial $m$ (when viewed as a tree). Noncommutativity of $\ncna$ also induces a {\em left to right} order in which the variables appear in $m$. We show that given the left to right order and the corresponding sequence of levels, we can fully reconstruct $m$ (Lemma \ref{levelSeq}). Using this lemma and a \emph{three dimensional version} of the set-multilinearization procedure introduced by Forbes and Shpilka \cite{FS13} in the context of PIT for noncommutative ABPs, we show that $\mathbb{A}_d$ does not satisfy identities of degree $\leq d$. The third dimension ``keeps track" of the level at which each leaf appears in a monomial. After this, we define the commutative algebra $\mathbb{C}_d$ as follows: the $\mathbb{C}_d$ product of $x, y$ is the \emph{anticommutator}\footnote{The anticommutator of $x,y$ with respect to a product operation $\cdot$ is defined as $x\cdot y + y\cdot x$.} of $x,y$ with respect to the $\mathbb{A}_d$ product. In $\mathbb{C}_d$, there is no unique left to right order of the variables that we can associate with a monomial. But there is a \emph{set} of orders that we can associate with each monomial. This set, together with the corresponding sequence of levels, determines the monomial uniquely. Using this, we show that $\mathbb{C}_d$ also does not satisfy identities of degree $\leq d$.

\subsubsection{White-box deterministic $\pit$ over \texorpdfstring{$\cna$}{Lg} (Section \ref{WhiteboxPIT})}

Next, we consider the white-box identity testing problem over $\cna$. Raz and Shpilka \cite{RS04} give a white-box linear algebraic algorithm for identity testing of noncommutative algebraic branching programs. Subsequently, their algorithm has been adapted to obtain PIT algorithms in various settings, for example, for Read-Once Algebraic Branching Programs (ROABPs) \cite{FS13}, noncommutative Unique Parse Tree Circuits \cite{Lagarde2019} and circuits over $\ncna$ \cite{ADMR17}. In this work, we show that an adaption of the Raz-Shpilka algorithm can be used to do $\pit$ for circuits over $\cna$:

\begin{restatable}{theorem}{DetPIT}
\label{PITAlg}
    Let $\Psi$ be a nonassociative arithmetic circuit of size $s$ computing an $n$ variate, degree $\leq d$ polynomial $f\in\cna$. Given $\Psi$ as input, we can check whether $f \equiv 0$ deterministically in time $\poly(s, n, d)$.
\end{restatable}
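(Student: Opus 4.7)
The plan is to adapt the Raz--Shpilka framework \cite{RS04}, as extended to the nonassociative noncommutative setting by Arvind et al.\ \cite{ADMR17}, to the commutative nonassociative algebra $\cna$; the new difficulty is handling commutativity of monomials. First we homogenize $\Psi$ to an equivalent circuit $\Psi'$ of size $\poly(s,d)$ in which every gate computes a homogeneous polynomial, so testing $f \equiv 0$ reduces to testing each of the homogeneous pieces at the output. We then process the gates of $\Psi'$ in topological order and, for each degree $k \le d$, maintain a basis $\mathcal{B}_k$ for the subspace of $\cna$ spanned by the degree-$k$ polynomials at the gates processed so far; each basis element is stored implicitly as a formal expression over earlier gates. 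Sum gates contribute the union of component bases (reduced), and product gates $v = u \cdot w$ contribute the pairwise products $\{p \cdot q : p \in \mathcal{B}_{\deg u},\, q \in \mathcal{B}_{\deg w}\}$ (reduced).

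The crux is implementing linear-dependence testing in $\cna$ efficiently. A monomial of $\cna$ is an unordered rooted binary tree with variable-labeled leaves; we fix a canonical form by recursively ordering the two children of each internal node lexicographically on the canonical string of each subtree. For a polynomial $p$ given implicitly as a formal combination of gate polynomials, the coefficient $c_T(p)$ on any canonical tree $T$ can be computed recursively by descending through the formal expression: for sums it is additive, and for a product $p_1 \cdot p_2$ with $T = T_L \vee T_R$ and $T_L \preceq T_R$,
\[
c_T(p_1 \cdot p_2) = \begin{cases} c_{T_L}(p_1)\, c_{T_R}(p_2) + c_{T_R}(p_1)\, c_{T_L}(p_2) & \text{if } T_L \neq T_R, \\ c_{T_L}(p_1)\, c_{T_L}(p_2) & \text{if } T_L = T_R. \end{cases}
\]
The symmetry factor when $T_L \neq T_R$ is the main departure from the $\ncna$ algorithm of \cite{ADMR17}, where the parse tree of each monomial is ordered. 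Coefficient extraction takes $\poly(|\Psi|, |T|)$ time.

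I would then maintain a growing witness set $\mathcal{W}$ of canonical trees of degree $\le d$. To test whether a new polynomial $p$ lies in the span of a current basis, we compare coefficient vectors on $\mathcal{W}$ via Gaussian elimination. If independence holds on $\mathcal{W}$, we append $p$; otherwise, we form the residue $r = p - \sum_q \alpha_q q$ as a formal expression and search for a canonical tree on which $r$ is nonzero, adding it to $\mathcal{W}$. The hard part will be showing that such a distinguishing tree can always be found in polynomial time. Following the structural argument of \cite{ADMR17}, a nonzero polynomial represented by a subcircuit of size $\poly(s,d)$ admits a canonical witness tree of degree $\le d$ that can be constructed bottom-up by combining witnesses for the children gates via the product rule above, with the symmetry factor correctly accounting for commutative collisions. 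A new tree is added to $\mathcal{W}$ only when the total rank $\sum_k \dim \mathcal{B}_k$ strictly increases, and this rank is bounded by the number of gates of $\Psi'$, so $|\mathcal{W}| = \poly(s,d)$ throughout, yielding the claimed deterministic $\poly(s, n, d)$-time algorithm.
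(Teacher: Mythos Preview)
Your core insight is right: adapt Raz--Shpilka with a commutativity-aware product rule (the symmetric coefficient formula you wrote). But you have inverted the bookkeeping relative to what actually makes the argument go through, and this creates two concrete gaps.

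The paper works on the \emph{transpose}: for each monomial $m$ it tracks the vector $v_m\in\mathbb{F}^s$ with $v_m(g)=\coeff_m(g)$, and maintains for each degree $j$ a set $M_j$ of at most $s$ monomials whose vectors span $\{v_m:\deg m=j\}$. Candidate monomials for degree $j$ are simply the products $m'm''$ with $m'\in M_i$, $m''\in M_k$, $i+k=j$; one computes each $v_{m'm''}$ gate by gate using exactly your symmetric product rule, and then takes a maximal independent subset by Gaussian elimination on $s$-dimensional vectors. The span claim (every $v_m$ lies in the span of $B_j$) is proved by a double induction on degree and on gate depth. In this orientation the rank bound $|M_j|\le s$ is automatic and linear-independence testing is trivial.

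Your orientation (maintain bases $\mathcal{B}_k$ of \emph{polynomials} and test independence via a growing witness set $\mathcal{W}$ of monomials) runs into two problems. First, if product gates ``contribute the pairwise products $\{p\cdot q\}$'' to $\mathcal{B}_k$, then $\mathcal{B}_k$ spans a space strictly larger than the degree-$k$ gate polynomials (e.g.\ with $\mathcal{B}_1=\{x_1,x_2\}$ and a single product gate you already get $\dim\mathcal{B}_2=3$), so your claimed bound $\sum_k\dim\mathcal{B}_k\le|\Psi'|$ is false as stated. Second, the step ``search for a canonical tree on which the residue $r$ is nonzero'' is itself white-box PIT for a nonassociative circuit over $\cna$; your appeal to a ``structural argument of \cite{ADMR17}'' is misplaced, since that paper, like this one, works in the coefficient-vector orientation and never needs to search for a witness monomial in a formal expression. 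If you repair both issues---add only $f_v$ to $\mathcal{B}_k$, and build $\mathcal{W}$ degree by degree as products of lower-degree witnesses together with a span lemma---you recover exactly the paper's algorithm in dual form, but with the extra overhead of proving what the transpose view gives for free.
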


The main difference in the application of the Raz-Shpilka algorithm in our setting is that in all previous works (that we are aware of), if a monomial $m$ is generated at a product gate $g=g_1\times g_2$ in the circuit, then there is a \emph{unique way} it could have been generated: there exist monomials $m_1$, $m_2$ such that ${\sf coeff}_{m}(g)={\sf coeff}_{m_1}(g_1)\times{\sf coeff}_{m_2}(g_2)$. On the other hand since we are working in the commutative setting of $\cna$, there are two ways of generating $m$ at $g$. Either $m_1$ could be contributed by $g_1$ and $m_2$ by $g_2$, or $m_2$ could be contributed by $g_1$ and $m_1$ by $g_2$. We show (somewhat surprisingly) that the Raz-Shpilka algorithm, suitably modified, works even in this setting.

\subsubsection{Black-box deterministic nonassociative $\pit$ (Section \ref{deterministicBlackboxPIT})}

We consider next the question of derandomizing black-box $\pit$ for circuits over $\cna$. Towards this, we provide a \emph{hitting set} (consisting of elements of $(\mathbb{C}_d)^n$) for such circuits. A hitting set $H$ for a class $\mathcal{C}$ of circuits is a set of points such that for any non-zero circuit $\Psi\in \mathcal{C}$, there exists an $a\in  H$ such that $\Psi$ evaluated at $a$ is not $0$. 

\begin{restatable}{theorem}{hittingSet}
\label{commutative-hittingset}
    There exists a set $H_{n, s, d, \Delta}\subseteq (\mathbb{C}_d)^n$ of size $(nsd)^{O(\Delta)}$ of points in $(\mathbb{C}_d)^n$ such that for every nonassociative, commutative circuit $\Psi$ of size $\leq s$ and product depth $\leq \Delta$ computing a non-zero polynomial $f\in\cna$ of degree $\leq d$, there is a point in $H_{n, s, d, \Delta}$ at which $f$ is non-zero. Furthermore, we can compute $H_{n, s, d, \Delta}$ deterministically in time $(nsd)^{O(\Delta)}$.
\end{restatable}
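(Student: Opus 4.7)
The plan is to reduce the black-box identity testing problem over $\cna$ to black-box PIT for arithmetic circuits over $\F$, and then apply a standard hitting-set construction for low-depth $\F$-circuits.

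First, by Lemma \ref{lowerBound2}, a polynomial $f \in \cna$ of degree $\le d$ is non-zero iff there exists a tuple $(b_1,\ldots,b_n) \in (\mathbb{C}_d)^n$ such that $f(b_1,\ldots,b_n) \neq 0$ in $\mathbb{C}_d$. Hence it suffices to produce a deterministically-computable, explicit set of such tuples that contains a witness for every non-zero circuit in the target class.

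Next, I fix a basis $\{e_1,\ldots,e_D\}$ of $\mathbb{C}_d$ with $D = d(d+1)^2+1$, and introduce $nD$ formal scalar variables $z_{i,k}$ so that each $b_i = \sum_{k=1}^{D} z_{i,k}\, e_k$. Since the product of $\mathbb{C}_d$ is bilinear and determined by its structure constants $c^{k}_{ij}$, the value at every intermediate gate $g$ of $\Psi$ can be written coordinate-wise as $\sum_{k=1}^{D} g^{(k)}(z)\, e_k$, and each product gate $g_1 \times g_2$ translates to $g^{(k)} = \sum_{i,j} c^{k}_{ij}\, g_1^{(i)} g_2^{(j)}$. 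This unrolling yields a single arithmetic circuit $\Psi'$ over $\F$ of size $O(sD^3) = \mathrm{poly}(s,d)$ and product depth exactly $\Delta$, simultaneously computing the $D$ scalar polynomials $f^{(1)},\ldots,f^{(D)} \in \F[z_{1,1},\ldots,z_{n,D}]$. Each $f^{(k)}$ has total degree $\le d$, and by Lemma \ref{lowerBound2}, $f \not\equiv 0$ in $\cna$ iff at least one $f^{(k)}$ is non-zero in $\F[z]$.

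Finally, I would invoke an explicit hitting set $H' \subseteq \F^{nD}$ for the class of $N$-variate, degree-$d$ polynomials computed by $\F$-circuits of size $\mathrm{poly}(s,d)$ and product depth $\Delta$, of size $(Nsd)^{O(\Delta)} = (nsd)^{O(\Delta)}$. Since all the $f^{(k)}$ are computed inside the same $\Psi'$, any such $H'$ will hit at least one of the non-zero $f^{(k)}$ whenever $f \not\equiv 0$. The desired hitting set $H_{n,s,d,\Delta}$ is then the image of $H'$ under the natural parametrisation $z \mapsto (b_1(z),\ldots,b_n(z)) \in (\mathbb{C}_d)^n$, and the whole construction runs deterministically in time $(nsd)^{O(\Delta)}$. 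The step I expect to be the main obstacle is the quantitative guarantee on the $\F$-hitting set, namely an explicit construction of size $(Nsd)^{O(\Delta)}$ for bounded-depth $\F$-circuits; once this ingredient is secured, either by citation from the low-depth PIT literature or by combining recent low-depth lower bounds with a hardness-to-randomness conversion, the rest of the argument is a clean reduction through the bilinearity of $\mathbb{C}_d$ together with Lemma \ref{lowerBound2}.
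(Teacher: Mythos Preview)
Your reduction to ordinary $\F$-circuits via the bilinear structure of $\mathbb{C}_d$ is fine, and Lemma~\ref{lowerBound2} indeed guarantees that some coordinate polynomial $f^{(k)}$ is non-zero whenever $f$ is. The genuine gap is the final step: you need an explicit hitting set of size $(Nsd)^{O(\Delta)}$ for \emph{arbitrary} $\F$-circuits of size $\mathrm{poly}(s,d)$ and product depth $\Delta$, and no such construction is known. The low-depth PIT results you allude to (e.g.\ those derived from \cite{LST21} via hardness-to-randomness) give only subexponential-size hitting sets for constant depth, nowhere near $(nsd)^{O(\Delta)}$; in fact, a bound of this form for general depth-$\Delta$ circuits, combined with the depth reduction of \cite{VSBR83} (which gives $\Delta = O(\log sd)$), would yield quasipolynomial black-box PIT for \emph{all} arithmetic circuits, a major open problem. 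So the ``ingredient'' you hope to cite simply does not exist.

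The paper circumvents exactly this obstacle, and this is where the real content lies. Instead of substituting fully generic elements $b_i=\sum_k z_{i,k}e_k$, it substitutes the very specific sparse elements $Z_i\in\mathbb{C}'_d$ (only the $(j,j{+}1,k)$ entries are variables). With this particular substitution, the scalar circuit that computes the relevant output coordinate is not an arbitrary $\F$-circuit but an \emph{unambiguous} one (Lemma~\ref{unambiguous-construction-commutative}): every monomial has a unique reduced parse tree, a structural property inherited from the nonassociativity of $\Psi$ via Claim~\ref{polynomialEntry2}. The paper then builds, from scratch, a basis isolating weight assignment for unambiguous circuits of product depth $\Delta$ (Theorem~\ref{BIWA}), which yields the $(nsd)^{O(\Delta)}$ hitting set of Theorem~\ref{unambiguous-hittingset}. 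Your generic unrolling discards precisely the structure that makes this possible.
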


Recall that the product depth of a circuit is the maximum number of product gates encountered on any leaf to root path in the circuit. Theorem \ref{commutative-hittingset} gives a non-trivial hitting set when the product depth $\Delta = o(d)$. In particular, when $\Delta$ is polylogarithmic, we obtain a quasipolynomial size hitting set.

The result of Kabanets and Impagliazzo \cite{KI04} shows that explicit lower bound results can give subexponential-time black-box $\pit$ algorithms over the usual commutative polynomial ring $\F[X]$. The main ingredients in their proof are the combinatorial design of Nisan and Wigderson \cite{NW94} and the factorization algorithm of Kaltofen \cite{Kal85}. Although we have strong and explicit lower bounds over the algebra $\cna$, it is unclear how to use them for $\pit$ algorithms. Note that such a connection is not known even over $\F\langle{X}\rangle$. 

We prove Theorem \ref{commutative-hittingset} in two stages. In the first stage, we reduce $\pit$ for circuits over $\cna$ to $\pit$ for \emph{unambiguous} circuits over $\mathbb{F}[Z]$ (where $Z$ is  a fresh set of variables and $\mathbb{F}[Z]$ is the usual polynomial ring) via a set-multilinearization argument. We say that a circuit $\Psi$ over is $\mathbb{F}[Z]$ is \emph{unambiguous} if for any monomial $m\in\mathbb{F}[Z]$, there exists a reduced parse tree\footnote{See \ref{reducedParseTree} for a definition.}\footnote{The term {\em unambiguous circuit} has been used in different contexts in earlier works \cite{AR16, LMP19}.} $T_m$ such that any reduced parse tree computing $m$ at any gate of $\Psi$ is isomorphic to $T_m$ as a labeled, rooted binary tree. These are the natural associative analogues of nonassociative circuits. We also observe that a similar reduction works over $\ncna$, which gives us an analogue of Theorem \ref{commutative-hittingset} over the algebra $\mathbb{A}_d$.

In the second stage, we suitably adapt the machinery of \emph{basis isolating weight assignments} developed by Agrawal et al.\ \cite{AGKS15} to construct hitting sets for unambiguous circuits. 

\begin{restatable}{theorem}{unambiguousHS}

\label{unambiguous-hittingset}

There exists a set $H_{s, n, d, \Delta}\subseteq \mathbb{F}^n$ such that for any unambiguous circuit $\Psi$ of size $s$ and product depth $\Delta$ computing a non-zero polynomial $f\in \mathbb{F}[z_1, \ldots, z_n]$ of degree $\leq d$, $f$ is non-zero on some point of $H_{s, n, d, \Delta}$. Furthermore, $|H_{s, n, d, \Delta}|=(nds)^{O(\Delta)}$ and $H_{s, n, d, \Delta}$ can be constructed in time $(nds)^{O(\Delta)}$. 
    
\end{restatable}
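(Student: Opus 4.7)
The plan is to adapt the \emph{basis isolating weight assignment} (BIWA) framework of Agrawal, Gurjar, Korwar, and Saxena, which was originally developed for derandomizing identity testing of ROABPs. Recall that a weight assignment $w:\{z_1,\dots,z_n\}\to \mathbb{N}$ is a BIWA for a coefficient space $V$ (spanned by coefficient vectors indexed by monomials) if the monomials of minimum $w$-weight in each non-zero element of $V$ isolate a basis of $V$. When $w$ is a BIWA for the polynomial $f$ computed by $\Psi$, the Kronecker-type substitution $z_i \mapsto y^{w(z_i)}$ preserves non-zeroness of $f$, reducing the problem to hitting a non-zero univariate. The aim is to construct a family $\mathcal{W}$ of at most $(nds)^{O(\Delta)}$ weight assignments such that for every unambiguous circuit $\Psi$ of the given parameters, some $w\in\mathcal{W}$ is a BIWA for the coefficient space of the polynomial computed at every gate of $\Psi$ simultaneously.

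The construction proceeds by induction on product depth $\Delta$. In the base case $\Delta = 0$ every gate computes an $\mathbb{F}$-linear combination of the $z_i$, and an assignment such as $w(z_i)=i$ trivially isolates the support. For the inductive step at a product gate $g = g_1 \times g_2$, the critical structural property is unambiguity: every monomial $m$ produced at $g$ admits a \emph{canonical} reduced parse tree $T_m$, and the two children $T_{m,L}, T_{m,R}$ of $T_m$ determine a unique factorization $m = m_L \cdot m_R$ with $m_L$ computed at $g_1$ and $m_R$ computed at $g_2$. Consequently the coefficient space at $g$ is, in a precise sense, a tensor-style product of those at $g_1$ and $g_2$, rather than an arbitrary linear combination over the many possible factorizations that could arise in a general commutative circuit.

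This tensor-product-like structure is what lets BIWAs compose. Using a universal hash family of size $\mathrm{poly}(nds)$, I would first separate, for each candidate monomial at $g$, the variables that must land in $T_{m,L}$ from those that must land in $T_{m,R}$. Combining such a separator with inductively constructed BIWAs for $g_1$ and $g_2$, and summing weights with appropriate scalings so that the left-weights and right-weights live in disjoint ranges, yields a BIWA for $g$. Propagating this through the $\Delta$ layers of product gates multiplies the family size by $\mathrm{poly}(nds)$ at each layer, producing the final bound $|\mathcal{W}| = (nds)^{O(\Delta)}$. The individual weights also remain bounded by $(nds)^{O(\Delta)}$, so each resulting univariate has degree at most $(nds)^{O(\Delta)}$ and can be hit by that many field elements. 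In total this produces $H_{s,n,d,\Delta}$ of size $(nds)^{O(\Delta)}$, constructible within the same time bound.

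The main obstacle will be the composition step: one must carefully verify that a weight assignment which separates the left/right partitions of every monomial, together with BIWAs for the two subcircuits, is indeed a BIWA for the gate $g$. Unambiguity is precisely the hypothesis that makes this work, because it rules out the possibility that a single monomial of the polynomial at $g$ has multiple realizations as $m_L \cdot m_R$; without it, cross-terms coming from distinct factorizations would interfere with the minimum-weight selection and force a much larger family. Additional care is needed to re-use the same hash family across all product gates at a fixed depth level and to argue that the resulting composed weight function is computable in time $(nds)^{O(\Delta)}$; both are handled by a standard union bound over the $\mathrm{poly}(s)$ gates and the $\mathrm{poly}(nds)$ monomial partitions that can appear at each of them.
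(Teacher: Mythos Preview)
Your proposal follows essentially the same route as the paper: build a family of basis isolating weight assignments layer by layer (the paper combines weights $w_1,\dots,w_{\Delta+1}$ via $w=\sum_i B^{\Delta-i}w_{i+1}$, with the $w_i$ drawn from an efficient Kronecker/separating family), then apply the substitution $z_i\mapsto t^{w(z_i)}$ and hit the resulting univariate. The paper organizes the induction slightly differently---by the depth $j$ of the monomial's reduced parse tree rather than by the circuit's product depth---and works with the single coefficient vector $v_m\in\mathbb{F}^s$ indexed by \emph{all} gates simultaneously, which lets sum gates be absorbed for free.

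One point in your sketch needs correction: at a product gate $g=g_1\times g_2$, unambiguity fixes the shape $T_m$ and hence the factorization $m=m_L\cdot m_R$, but it does \emph{not} pin $m_L$ to $g_1$ and $m_R$ to $g_2$. In the commutative setting both orders can contribute, and in addition $m$ can pass through $g$ intact via a constant factor on either side. The paper's composition identity at a product gate is accordingly
\[
v_m(g)=v_{m_L}(g_1)v_{m_R}(g_2)+v_{m_L}(g_2)v_{m_R}(g_1)+v_m(g_1)v_1(g_2)+v_1(g_1)v_m(g_2),
\]
and unambiguity is used to argue that the cross-terms with the ``wrong'' parse-tree depth vanish. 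Your ``separate the variables that must land in $T_{m,L}$ from those in $T_{m,R}$'' step should be replaced by this four-term analysis; once that is done, your argument and the paper's coincide.
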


Informally, given a polynomial $f$ with coefficients coming from a vector space, a \emph{basis isolating weight assignment} for $f$ is a function from the underlying set of variables to $\mathbb{N}$ that isolates a minimum weight basis (among the coefficients) for the space spanned by the coefficients of $f$. Basis isolating weight assignments were used in \cite{AGKS15} to construct quasipolynomial size hitting-sets for ROABPs (these are commutative analogues of noncommutative ABPs). Subsequently, they were also used to construct hitting sets for set-multilinear Unique Parse Tree circuits (UPT circuits for short) \cite{ST18}. UPT set-multilinear circuits generalize ROABPs. In a UPT set-multilinear circuit, every parse tree (see \ref{parseTrees} for the definition) at the output has exactly the same shape as a rooted, ordered binary tree. In particular this implies that there is a universal parse tree shape at the root that induces a unique parse tree for each monomial. In unambiguous circuits, this is no longer true. In particular, there is a unique reduced parse tree \emph{for each monomial}, but two different monomials could have different parse tree shapes. Also, note that unambiguous circuits need not be set-multilinear. On the other hand, we require that for any monomial $m$, there is a unique parse tree computing it independent of the gate at which $m$ is being computed.

Suppose we have an unambiguous circuit of product depth $\Delta$. We construct a basis isolating weight assignment $w$ for it in multiple stages (as in \cite{AGKS15}). At each stage we handle monomials of increasing depths. The proof that $w$ is a basis isolating weight assignment involves isolation of a set $M_i$ of monomials for each depth $i\in [\Delta]$ such that the coefficient of every other monomial of depth $i$ is spanned by coefficients of $M_i$.  The construction of $M_i$ and the proof that its coefficients indeed span coefficients of other monomials is the main technical content of the proof and uses the fact that the circuit is unambiguous. Combining these $M_i$'s we get the isolated set $M$ of monomials. Identity testing follows from the construction of a basis isolating weight assignment.


The result of Valiant, Skyum, Berkowitz and Rackoff \cite{VSBR83} shows that arithmetic circuits can be depth reduced to depth polylogarithmic in the size and degree of the original circuit while incurring only a polynomial blowup in size. Unfortunately, we do not know if such a depth reduction is possible while preserving unambiguousness.

\subsubsection*{Organization} The paper is organized as follows. In Section \ref{sec:prelim}, we provide the necessary background. Section \ref{Randomized} contains the randomized polynomial-time $\pit$ algorithms over nonassociative algebras. The deterministic $\pit$ algorithms (white-box and black-box) are presented in Section \ref{sec:det-algo}. We state a few questions for further research in Section \ref{sec:discuss}. 
\section{Preliminaries}\label{sec:prelim}

\begin{defn}[Algebra over a field]
    Let $\mathbb{F}$ be a field. An \textit{algebra} $\mathbb{A}$ over $\mathbb{F}$ is an $\mathbb{F}$-vector space together with a product operation on the elements of the vector space that is \textit{bilinear}. The \textit{dimension} of $\mathbb{A}$ is defined to be the dimension of the underlying vector space. In particular, if the underlying vector space is finite (say $n$) dimensional and identified with $\mathbb{F}^n$ after choice of a basis, an algebra $\mathbb{A}$ is uniquely defined by $n$ matrices $L_1, \ldots, L_n \in \mathbb{F}^{n \times n}$ as follows: for $\mathbf{x}, \mathbf{y} \in \mathbb{F}^n$, their $\mathbb{A}$-product is precisely $$\mathbf{x} \cdot \mathbf{y} = (x_1, \ldots, x_n)\cdot(y_1, \ldots, y_n) = \left(\mathbf{x}^{T}L_1\mathbf{y}, \ldots, \mathbf{x}^{T}L_n \mathbf{y}\right).$$ An algebra $\mathbb{A}$ is called \emph{unital} if it contains a multiplicative identity $i$ such that $\forall x\in\mathbb{A}, x\cdot i=i\cdot x = x$. We define the following four different polynomial algebras, depending on the relations the variables satisfy:
    \begin{itemize}
    \item[(1)] $\mathbb{F}_{A,C}[X]$: This is the polynomial ring $\mathbb{F}[X]$. The product operation is both commutative and associative.
    \item[(2)] $\mathbb{F}_{A,\widebar{C}}[X]$ is the noncommutative polynomial ring $\mathbb{F}\langle X \rangle$. The product operation is noncommutative but associative.
    \item[(3)] $\mathbb{F}_{\bar{A},C}[X]$ is the $\mathbb{F}$-vector space generated by commutative, nonassociative monomials in the variables $X$. This vector space becomes an $\mathbb{F}$-algebra with the commutative, nonassociative product of monomials extended to all of $\cna$ by bilinearity.
    \item[(4)] $\mathbb{F}_{\bar{A},\bar{C}}[X]$ is the $\mathbb{F}$-vector space generated by noncommutative, nonassociative monomials in the variables $X$. This vector space becomes an $\mathbb{F}$-algebra with the noncommutative, nonassociative product of monomials extended to all elements of $\mathbb{F}_{\bar{A},\bar{C}}[X]$ by bilinearity. 

    \end{itemize}
\end{defn}

\begin{defn}[Polynomial Identities]
    A \textit{Polynomial Identity} (PI for short) for an 
    algebra $\mathbb{A}$ is a polynomial $f(x_1, \ldots, x_n)$ in a 
    set of variables $\{x_1, \ldots, x_n\}$ such that for all $A_1, \ldots, A_n \in \mathbb{A}$, $f(A_1, \ldots, A_n) = 0$ where the multiplication is according to the product operation in $\mathbb{A}$. An algebra that satisfies nontrivial identities is called a \emph{PI-algebra}.
\end{defn}

The study of polynomial identities is a classical and very rich subject in mathematics. For a comprehensive details, see \cite{GZ05}.

\begin{defn}[Arithmetic Circuit]
An \textit{Arithmetic Circuit} $\Psi$ over a field $\mathbb{F}$ is a directed acyclic graph whose leaves (called input gates) are labeled by either variables (say $X = \{x_1, \ldots, x_n\}$) or field elements and whose internal vertices (called gates) are labeled by  either a sum ($+$) or a product $(\times)$. In our case the product operation will often be \textit{nonassociative}, and we will assume that the fan-in of each product gate is $2$. If in addition we are working in $\mathbb{F}_
{\bar{A}, \bar{C}} [X]$, the product will also be noncommutative: each product gate will have designated left and and right child.  Each gate in a circuit naturally computes a polynomial. The circuit $\Psi$ has a designated output gate and $\Psi$ is said to compute the polynomial computed at the output gate. The {\em size} of a circuit is the number of gates in it and the {\em depth} of a circuit is the length of the longest leaf-to-root path. 
\end{defn}

Next we define the concept of a {\em parse tree}, that depicts the generation of a particular monomial in the circuit.

\begin{defn}[Parse trees]\label{parseTrees}
Let $X = \{x_1, \ldots, x_n\}$ be a set of variables, $\mathbb{F}$ be a field and $\Psi$ be an arithmetic circuit computing a polynomial $f\in\mathbb{F}[X]$. The set of {\em parse trees} for $\Psi$ will be defined by induction on the size of $\Psi$ as follows:
\begin{itemize}
    \item If $\Psi$ is just a leaf labeled by either a variable or a constant, then it has only one parse tree, itself. 
    \item If the root $g$ of $\Psi$ is a sum gate with subcircuits $\Psi_1$ and $\Psi_2$ as children, the set of parse trees for $\Psi$ is the set of all trees obtained by taking the root $g$, and attaching to it a parse tree of either $\Psi_1$ or $\Psi_2$.
    
    \item If the root $g$ of $\Psi$ is a product gate with subcircuits $\Psi_1$ and $\Psi_2$, we define the set of parse trees for $\Psi$ to be the set of all trees $T$ obtained by taking a parse tree $T_1$ for $\Psi_1$, a parse tree $T_2$ for a disjoint copy of $\Psi_2$ and making $T_1, T_2$ the children of $g$. 
\end{itemize}
Note that each parse tree $T$ for $\Psi$  computes a monomial (with coefficient).
\end{defn}

\begin{defn}[Reduced Parse Trees]\label{reducedParseTree}
    From each parse tree $T$ of a circuit $\Psi$, we may obtain a \emph{reduced parse tree} $T'$ by short-circuiting the sum gates, removing leaves labeled by constants and restructuring the tree in the natural way. $T'$ is a full binary tree all of whose leaves are labeled by a variable and all of whose gates are product gates. $T'$ captures the multiplicative structure of $T$. The \emph{set of reduced parse trees} for $\Psi$ is defined as $\{T'\mid T\text{ is a parse tree for }\Psi\}$. See Figure \ref{parseTree} for an illustrative example.
\end{defn}

\begin{figure}
\begin{subfigure}[b]{0.37\textwidth}
\centering
\begin{tikzpicture}[edge from child/.style={draw,-latex}, level 1/.style={sibling distance=35mm}, level 2/.style={sibling distance=20mm}, level 3/.style={sibling distance=10mm},scale=0.75]
  \node [circle,draw]  {$\times$}
  child {node [circle,draw] {$+$}
   child {node [circle,draw] {$\times$} edge from parent
   child {node {$x_1$}}
   child {node {$x_1$}
   }}
   child {node [circle,draw] {$+$} edge from parent
   child {node {$x_2$}}
   child {node {$3$}}
   }
   }
    child {node [circle,draw] {$+$}
    child {node [circle, draw] {$\times$} edge from parent
      child {node {$x_3$}}
   child {node {$x_4$}}   
    }
   child {node [circle,draw] {$\times$} edge from parent
   child {node {$6$}}
   child {node {$x_5$}} }
};
\end{tikzpicture}
\caption{A circuit $\Psi$}
\end{subfigure}
\begin{subfigure}[b]{0.35\textwidth}
\centering
\begin{tikzpicture}[edge from child/.style={draw,-latex}, level 1/.style={sibling distance=35mm}, level 2/.style={sibling distance=20mm}, level 3/.style={sibling distance=10mm},scale=0.75]
  \node [circle,draw]  {$\times$}
  child {node [circle,draw] {$+$}
   child {node [circle,draw] {$\times$} edge from parent
   child {node {$x_1$}}
   child {node {$x_1$}}}
   child [missing]
   }
    child {node [circle,draw] {$+$}
    child [missing]
   child {node [circle,draw] {$\times$} edge from parent
    child {node {$6$}}
   child {node {$x_5$}} }
   };
\end{tikzpicture}
\caption{A parse tree $T$ for $\Psi$}
\end{subfigure}
\begin{subfigure}[b]{0.25\textwidth}
\centering
\begin{tikzpicture}[edge from child/.style={draw,-latex}, level 1/.style={sibling distance=15mm}, level 2/.style={sibling distance=10mm}, level 3/.style={sibling distance=10mm},scale=0.75]
  \node [circle,draw]  {$\times$}
   child {node [circle,draw] {$\times$} edge from parent
   child {node {$x_1$}}
   child {node {$x_1$}}}
   child {node {$x_5$} 
   };
\end{tikzpicture}
\caption{Reduced parse tree $T'$}
\label{parseTree}
\end{subfigure}
\end{figure}

We also recall the following standard result over $\mathbb{F}_{A,C}[X]$: 
\begin{lemma}[Polynomial Identity Lemma \cite{Zip79, Sch80, DL78}]
\label{lem:SZ}
    Suppose $f(x) \in \mathbb{F}[X]$ is an $n$-variate polynomial of degree $d$, and let $S \subseteq F$ be a finite set of size strictly larger than $d$. Then
$f(\bar{a})\neq 0$ for at least $\left( 1-\frac{d}{|S|}\right)$ fraction of $\bar{a}$'s in $S^n$.
\end{lemma}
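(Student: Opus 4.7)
The plan is to prove this lemma by induction on the number of variables $n$, following the classical Schwartz--Zippel--DeMillo--Lipton argument. For the base case $n = 1$, the claim reduces to the standard fact that a nonzero univariate polynomial of degree at most $d$ over a field has at most $d$ roots; hence it vanishes on at most $d$ of the strictly-more-than-$d$ points of $S$, giving the fraction $d/|S|$.

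For the inductive step, the idea is to single out one variable, say $x_n$, and expand
\[ f(x_1, \ldots, x_n) \;=\; \sum_{i=0}^{k} x_n^{i}\, g_i(x_1, \ldots, x_{n-1}), \]
where $k$ is the largest exponent of $x_n$ appearing in $f$. Then $g_k$ is a nonzero polynomial in the first $n-1$ variables of total degree at most $d - k$, so the inductive hypothesis applies to it.

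Next I would split the sample space $S^n$ according to whether the restriction kills the leading coefficient. If $g_k(a_1,\ldots,a_{n-1}) = 0$, which by induction happens on at most a $(d-k)/|S|$ fraction of tuples in $S^{n-1}$, we give up and declare the event bad. Otherwise, $f(a_1,\ldots,a_{n-1},x_n)$ is a nonzero univariate polynomial in $x_n$ of degree exactly $k$, which by the base case vanishes for at most a $k/|S|$ fraction of choices of $a_n \in S$. A union bound over these two disjoint sources of failure gives a total bad fraction of at most $(d-k)/|S| + k/|S| = d/|S|$, completing the induction.

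There is no real obstacle here, as this is a classical and elementary argument; the only point requiring care is to apply the inductive hypothesis to $g_k$ rather than to $f$ itself, since $g_k$ may have strictly smaller degree, and this is precisely what makes the two error terms combine cleanly into $d/|S|$.
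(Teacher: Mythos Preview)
Your argument is the standard Schwartz--Zippel induction and is correct as written. The paper does not give its own proof of this lemma; it is stated as a known result with citations to \cite{Zip79, Sch80, DL78}, so there is nothing to compare against.
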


Over the algebra $\mathbb{F}_{A,\bar{C}}[X]$, the following  is a well-known result of Amitsur and Levitzki \cite{AL50}. 

\begin{theorem}[Amitsur-Levitzki Theorem \cite{AL50}]\label{thm:amitsur-Levitzki}
Over any field $\mathbb{F}$, the matrix algebra $\mathbb{F}^{k\times k}$ satisfies no PI of (total)
degree less than $2k$, and satisfies exactly one (up to constant
factor) PI of degree $2k$.  
\end{theorem}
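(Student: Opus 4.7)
The theorem has two separable assertions: (i) $M_k(\mathbb{F})$ satisfies no PI of total degree below $2k$, and (ii) the standard polynomial $s_{2k}(x_1,\ldots,x_{2k}) = \sum_{\sigma\in S_{2k}}\sgn(\sigma)\,x_{\sigma(1)}\cdots x_{\sigma(2k)}$ is a PI, and any PI of degree $2k$ is a scalar multiple of it. The plan is to treat these in turn.

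For the lower bound, I would first reduce to multilinear PIs: a standard polarization/substitution argument over a sufficiently large field shows that if $M_k$ satisfies a PI of degree $d$, it satisfies a multilinear one of degree $d$. So suppose $f(x_1,\ldots,x_d) = \sum_{\sigma\in S_d} c_\sigma x_{\sigma(1)}\cdots x_{\sigma(d)}$ is a multilinear PI with $d < 2k$ and some $c_\tau\neq 0$. Substitute $x_i = E_{p_i,q_i}$ for matrix units, so that $A_{\sigma(1)}\cdots A_{\sigma(d)} = E_{p_{\sigma(1)},\,q_{\sigma(d)}}$ exactly when $q_{\sigma(i)} = p_{\sigma(i+1)}$ for every $i$, and vanishes otherwise. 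For $f$ to be identically zero, the coefficient of each output matrix unit must cancel across $\sigma$, giving a rich system of linear equations on the $c_\sigma$. I would choose the indices so that $\tau$ is the \emph{unique} $\sigma$ that satisfies the walk condition: concretely, pick an injective edge sequence of length $d$ in the complete directed graph on $\{1,\ldots,k\}$ arranged so that the only Eulerian ordering on the chosen edges is the one corresponding to $\tau$. This is possible because $d \leq 2k-1$ leaves enough freedom; then $f(A_1,\ldots,A_d) = c_\tau E_{p_{\tau(1)},q_{\tau(d)}}\neq 0$, contradicting the assumption.

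For the upper bound that $s_{2k}$ is a PI, I would follow Rosset's exterior-algebra proof. Let $E$ be the Grassmann algebra on generators $e_1,\ldots,e_{2k}$, let $E^+$ denote its even (commutative) subalgebra, and set $y = \sum_{i=1}^{2k} A_i\otimes e_i \in M_k(\mathbb{F})\otimes E$. Direct expansion, using $e_{\sigma(1)}\cdots e_{\sigma(2k)} = \sgn(\sigma)\,e_1\cdots e_{2k}$, gives $y^{2k} = s_{2k}(A_1,\ldots,A_{2k})\otimes(e_1 e_2\cdots e_{2k})$, so it suffices to show $y^{2k}=0$. The structural key is that $y^2\in M_k(E^+)$ is a matrix over a commutative ring, hence satisfies its characteristic polynomial by Cayley-Hamilton; via Newton's identities it is enough to verify $\mathrm{tr}(y^{2j})=0$ for every $1\leq j\leq k$. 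In the expansion of $\mathrm{tr}(y^{2j})$, only sequences of $2j$ distinct indices from $\{1,\ldots,2k\}$ contribute. The cyclic group of order $2j$ acts freely on such sequences, leaves $\mathrm{tr}(A_{i_1}\cdots A_{i_{2j}})$ invariant (by cyclicity of the trace), and multiplies $e_{i_1}\cdots e_{i_{2j}}$ by $(-1)^{2j-1}=-1$ at each shift; summing over a full orbit yields an alternating sum of $2j$ equal terms, hence zero. Therefore the power sums of $y^2$ vanish, its characteristic polynomial is $\lambda^k$, and $y^{2k}=0$.

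For uniqueness of the degree-$2k$ identity up to scalar, I would regard the space $V_{2k}$ of multilinear PIs of degree $2k$ as an $S_{2k}$-module under permutation of variables. Any nonzero element of $V_{2k}$ lying outside the sign-isotypic component could be projected by a suitable Young idempotent and then partially evaluated (e.g., specializing one variable to $I_k$, or cyclically symmetrizing via the trace) to produce a multilinear PI of strictly smaller degree, contradicting part (i). Hence $V_{2k}$ sits inside the one-dimensional sign component, which is spanned by $s_{2k}$. The main obstacle in the whole proof is Rosset's cyclic-shift identity: the Grassmann sign bookkeeping must be executed with care, and the passage from vanishing of the $\mathrm{tr}(y^{2j})$ to vanishing of the characteristic polynomial via Newton's identities requires $\mathrm{char}(\mathbb{F})$ either zero or larger than $k$, so for small positive characteristic one has to establish an integral form of the identity and specialize. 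The delicate combinatorial isolation needed in the lower-bound substitution (choosing matrix-unit indices that force a unique walk realization) is the other place where care is required.
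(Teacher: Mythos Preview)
The paper does not prove this theorem at all: it is quoted in the preliminaries as a classical result with a bare citation to \cite{AL50} (and \cite{BW05} elsewhere), and is used only as motivation/background for the nonassociative analogues developed later. So there is no ``paper's own proof'' to compare your proposal against.

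That said, your proposal follows the standard textbook route and is essentially sound. Two minor comments. For the lower bound, the phrase ``$d\le 2k-1$ leaves enough freedom'' is doing real work and should be made concrete; the usual way is the explicit staircase substitution $x_1=E_{1,1},\,x_2=E_{1,2},\,x_3=E_{2,2},\,x_4=E_{2,3},\ldots$, which visibly has a unique nonvanishing ordering, rather than an existential graph argument. For uniqueness at degree $2k$, the $S_{2k}$-module/Young-idempotent sketch is more machinery than needed and, as written, the step ``projecting and partially evaluating produces a PI of smaller degree'' is not justified. The clean argument is: given a multilinear PI $f$ of degree $2k$, substitute $x_{2k}=I_k$ to obtain a multilinear identity of degree $2k-1$, which must vanish identically by part (i); the resulting linear relations on the coefficients (and the analogous ones from specializing each $x_j$ to $I_k$) force $f$ to be alternating, hence a scalar multiple of $s_{2k}$.
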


\subsection{A structural lemma about nonassociative monomials}\label{prelimLemma}

\subsubsection{Over the algebra \texorpdfstring{$\mathbb{F}_{\bar{A}, \bar{C}}[X]$}{Lg}} Let $\mathbb{F}$ be a field and $X=\{x_1, \ldots,x_n\} $ be a set of variables. There is a natural correspondence between rooted binary trees with leaves labeled by elements of $X$ and monomials in $\ncna$. The nonassociativity introduces a unique product structure which may be interpreted as a binary tree: Let $m\in\ncna$ be a monomial. Then there is a unique rooted binary tree $T_m$ whose the leaves (labeled by elements of $X$) represent the variables, and whose internal nodes compute the product their two children. The root computes the monomial $m$. For instance, the binary trees in Figure 
\ref{fig:fig1}(a) and \ref{fig:fig1}(b) compute monomials $m=((x_{i_1} x_{i_2})x_{i_3})$ and $m'=(x_{i_1} (x_{i_2}x_{i_3}))$ respectively. Noncommutativity implies that each internal node has designated left and right children, and swapping this order changes the monomial. 

Suppose $m$ has degree $d$. Noncommutativity gives us a unique string $\sigma_{m}=(i_1, \ldots, i_d)\in[n]^d$ which is the unique {\em left to right order} $x_{i_1},x_{i_2}\ldots x_{i_d}$ in which the variables appear in $m$. We will think of $\sigma_m:[d]\rightarrow[n]$ as a function defined as $\sigma_m(j) = i_j$ for all $j\in[d]$. Note that $\sigma_m$ does not uniquely define a monomial. For instance, for the monomials $m, m'\in\ncna$ shown in Figure \ref{fig:fig1}, $\sigma_m=\sigma_{m'}$ but $m\neq m'$. Given a monomial $m\in\ncna$ as a binary tree $T_m$, we assign \textit{level numbers} to the nodes of $T_m$. Define the level of a node $v$ in $T_m$ to be $1+d(\text{root}, v)$, where $d(\cdot, \cdot)$ is the {\em distance} function. That is, the root is at level $1$, the children of the root are at level $2$ and so on. For each $j\in[d]$, let $l^m_j$ denote the level at which the $j^{th}$ variable (in the left to right order) appears in $m$. 

For monomials $m=((x_{i_1} x_{i_2})x_{i_3})$ and $m'=(x_{i_1} (x_{i_2}x_{i_3}))$ shown in Figure \ref{fig:fig1}, $l_1^m = 3 , l_2^m=3, l_3^m=2$ and  $l_1^{m'} = 2, l_2^{m'} = 3, l_3^{m'}=3$. In general, the variable order $\sigma_m$ and the level numbers $(l_1^m, \ldots, l_d^m)$ \emph{together} uniquely determine the monomial $m$. 

\begin{figure}
\begin{subfigure}[b]{0.32\textwidth}
   \centering
  \begin{tikzpicture}
\node[circle,draw](z){$\times$}
  child{node[circle,draw]{ $\times$}  child{node[]{$x_{i_1}$}} child{node[]{$x_{i_2}$}}}
  child{
    node[]{$x_{i_3}$}};
\end{tikzpicture}
\caption{$m=((x_{i_1} x_{i_2})x_{i_3})$}
\end{subfigure}
\begin{subfigure}[b]{0.32\textwidth}
\centering
  \begin{tikzpicture}
    \node[circle,draw](z){$\times$}
  child{node[]{$x_{i_1}$}} 
  child{node[circle,draw]{$\times$} } 
  {
  child{node[]{$x_{i_2}$}} 
  child{node[]{$x_{i_3}$}}
  }
  ;
\end{tikzpicture}
\caption{$m' = (x_{i_1} (x_{i_2}x_{i_3}))$}
\end{subfigure}
\begin{subfigure}[b]{0.32\textwidth}
\centering
    \begin{tikzpicture}
\node[circle,draw](z){$\times$}
  child{node[circle,draw]{ $\times$}  child{node[]{$x_{i_2}$}} child{node[]{$x_{i_1}$}}}
  child{
    node[]{$x_{i_3}$}};
\end{tikzpicture}
\caption{$m''=((x_{i_2} x_{i_1})x_{i_3})$}
\end{subfigure}
\caption{Examples of nonassociative, noncommutative monomials} 
\label{fig:fig1}
\end{figure}
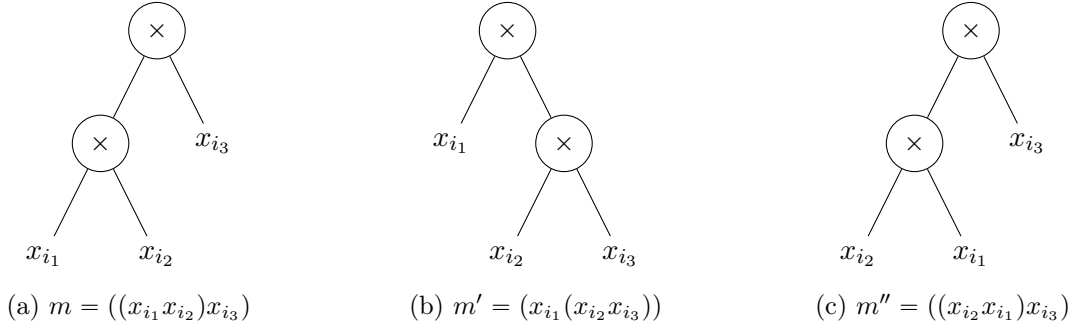

We state this formally in the following lemma:


\begin{lemma}
\label{levelSeq}
    Let $m,m'$ be distinct monomials in $\ncna$. Let ${\sf deg}(m) = d$ and ${\sf deg}(m') = d'$. Then the tuples $(\sigma_m, l^m_1, \ldots, l^m_d)$ and $(\sigma_{m'}, l^{m'}_1, \ldots, l^{m'}_{d'})$ are distinct.
\end{lemma}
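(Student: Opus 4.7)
The plan is to prove the contrapositive: if the two tuples agree, then $m = m'$. Agreement of the tuples forces $d = d'$ (equal number of entries) and $\sigma_m = \sigma_{m'}$ as functions $[d] \to [n]$, so it suffices to show that the level sequence $(l_1, \ldots, l_d)$ alone determines the shape of the unlabeled binary tree $T_m$; the common $\sigma_m$ then fixes the leaf labels and hence the monomial itself.

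I would induct on $d$. The base case $d = 1$ is trivial since $T_m$ is a single leaf at level $1$. For the inductive step, observe that the root of $T_m$ splits the leaves into a left subtree $T_L$ containing the first $k$ leaves (in left-to-right order) and a right subtree $T_R$ containing the remaining $d - k$; moreover, the level sequence of $T_L$ viewed on its own is $(l_1 - 1, \ldots, l_k - 1)$, and that of $T_R$ is $(l_{k+1} - 1, \ldots, l_d - 1)$. If the split index $k$ is uniquely determined by $(l_1, \ldots, l_d)$, the induction hypothesis reconstructs $T_L$ and $T_R$ uniquely from their level sequences, and $T_m$ follows.

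The key step is therefore uniqueness of $k$, and this is where I expect the main technical content. I would invoke Kraft's equality for full binary trees (where every internal node has exactly two children): if the leaves sit at depths $d_1, \ldots, d_n$ with the root at depth $0$, then $\sum_i 2^{-d_i} = 1$. Rewriting in terms of levels $l_i = d_i + 1$, the full sequence satisfies $\sum_{i=1}^d 2^{1-l_i} = 1$, and applying Kraft inside $T_L$ (whose leaves have depth $l_j - 2$ within $T_L$) yields the constraint $\sum_{j=1}^k 2^{1-l_j} = \tfrac{1}{2}$. Since the partial sums $S_k := \sum_{j=1}^k 2^{1-l_j}$ are strictly increasing in $k$ (each summand is positive), at most one index $k \in \{1, \ldots, d-1\}$ satisfies $S_k = \tfrac{1}{2}$; existence is guaranteed by the actual tree $T_m$. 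This pins the split down uniquely.

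The main obstacle is precisely this split-uniqueness claim. Kraft dispatches it in one line; alternatively, one can give a direct combinatorial reconstruction by noting that the first leaf forces the root's leftmost descending path to have length $l_1 - 1$, creating $l_1 - 1$ pending right subtrees that the subsequent leaves plug into in a uniquely determined left-to-right depth-first manner. Either route closes the induction and hence establishes the lemma.
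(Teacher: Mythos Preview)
Your proposal is correct. Your primary route---induction on $d$ together with Kraft's equality to pin down the root split index uniquely---differs from the paper's, which instead gives a direct left-to-right iterative reconstruction of the tree from the level sequence (essentially the alternative you sketch in your final paragraph). The paper's argument processes leaves one at a time: having placed leaf $i$, it backtracks to the nearest ancestor lacking a right child and hangs leaf $i+1$ there via a left-only path of the appropriate length, invoking the full-binary-tree property to justify that this attachment point is forced. Your Kraft argument is slicker and more conceptual---the split index drops out in one line from the strict monotonicity of the partial sums $S_k$---while the paper's reconstruction is more hands-on and algorithmic, requiring no external identity. Both make the full-binary-tree hypothesis essential (Kraft's \emph{equality} fails otherwise; the paper's backtracking step is unjustified otherwise), but your approach isolates this dependence more transparently.
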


\begin{proof}
    We assume that $\sigma_m = \sigma_{m'}$, for otherwise the statement is evidently true. In particular, assume that the degrees of $m$ and $m'$ are both equal to $d$. Next, we simply observe that it is possible to iteratively reconstruct $m$ uniquely, given the sequence $l^m_1, \ldots, l^m_d$ of levels. Start with just the root. To it, add a path of length $l_{1}^m-1$ consisting entirely of left edges (every node is a left child of it's parent). Label the ultimate node (leaf) on the path by $x_{\sigma_m(1)}$. Now suppose we already have a leaf for the $i$-th variable, $x_{\sigma_m(i)}$. To generate the leaf labeled by $x_{\sigma_m(i+1)}$, find the ancestor $v$ of $x_{\sigma_m(i)}$ closest to $x_{\sigma_m(i)}$ that only has a left child. Let the \textit{level} of $v$ be $l$. Add a right $v'$ to child to $v$. If $v'$ is at level $l_{\sigma_m(i+1)}$, label it by $x_{\sigma_m(i+1)}$. Otherwise, add to it a path of length $l_{\sigma_m(i+1)}-l-1$ consisting only of left edges and label the ultimate node on this path by $x_{\sigma_m(i)}$. This process recovers $m$. The key property of trees representing nonassociative monomials that is used in the procedure above is that they are \textit{full} binary trees, that is, each node either has exactly two children or none at all. This justifies our choice of going back to the the closest ancestor $v$ of $x_{\sigma_m(i)}$ that only has a left child and no right child.
\end{proof}

\subsubsection{Over the algebra \texorpdfstring{$\mathbb{F}_{\bar{A}, {C}}[X]$}{Lg}} 
Next, we consider the case of monomials in $\cna$. 
One may partition the set of all monomials in $\ncna$ into equivalence classes under commutativity: For all monomials $m, m'\in \ncna$, $m\sim m'$ if and only if $m$ and $m'$ are equal up commutativity. For example the monomials shown in Figure \ref{fig:fig1}(a) and Figure \ref{fig:fig1}(b) belong to different equivalence classes. On the other hand, monomials in Figure \ref{fig:fig1}(a) and Figure \ref{fig:fig1}(c) belong to the same equivalence class.   

Each equivalence class may be identified with a monomial in $\cna$. Looking at it from the other direction, suppose $m, m'$ are monomials in $\cna$ and suppose $M_m$, $M_{m'}$ are the equivalence classes of monomials in $\ncna$ they represent. Since $\sim$ as defined above is an equivalence relation, we have the following simple observation:

\begin{observation}
\label{commutativeObs}
    For any two distinct monomials $m,m'\in\cna$ we have $M_{m}\cap M_{m'} = \emptyset$. 
\end{observation}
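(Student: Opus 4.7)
The plan is to derive the observation directly from the fact that $\sim$ (``equal up to commutativity'') is an equivalence relation on the set of monomials in $\ncna$, so its equivalence classes partition that set. First I would verify that $\sim$ as defined immediately before Observation \ref{commutativeObs} is indeed an equivalence relation. Reflexivity is clear (every monomial equals itself up to commutativity), symmetry follows because if $\mu$ can be transformed to $\mu'$ by swapping left and right children at some set of product gates, then the same set of swaps applied to $\mu'$ recovers $\mu$, and transitivity follows by composing (i.e., taking the symmetric difference of) the sets of swaps.

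Once $\sim$ is an equivalence relation, the equivalence classes partition the set of monomials in $\ncna$. By construction, each monomial $m \in \cna$ is precisely identified with one such equivalence class $M_m$ (this is the identification already made in the paragraph preceding the observation). Now I would argue by contradiction: suppose $m, m' \in \cna$ are distinct monomials but some $\mu \in M_m \cap M_{m'}$. Then $\mu \sim m$ viewed as representatives of their respective classes, and $\mu \sim m'$ likewise, so by symmetry and transitivity the noncommutative representatives of $m$ and $m'$ are $\sim$-equivalent. Hence $M_m = M_{m'}$, and under the identification of $\cna$-monomials with $\sim$-classes this gives $m = m'$, contradicting distinctness.

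There is no real technical obstacle; the only thing to be careful about is the bookkeeping between the two levels of description, that is, making sure the identification ``$m \in \cna$ corresponds to the $\sim$-class $M_m$ of $\ncna$-monomials'' is used consistently so that ``distinct $m, m'$'' translates to ``distinct equivalence classes.'' Once this is set up cleanly, the observation is simply the statement that distinct equivalence classes of an equivalence relation are disjoint, and the proof is a single line.
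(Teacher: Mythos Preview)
Your proposal is correct and follows exactly the paper's own justification: the paper simply notes that since $\sim$ is an equivalence relation, the observation follows immediately from the fact that distinct equivalence classes are disjoint. Your write-up merely spells out the reflexivity, symmetry, and transitivity checks and the one-line contradiction argument, which is precisely the intended reasoning.
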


\section{Randomized Black-box \texorpdfstring{$\pit$}{Lg} for Nonassociative circuits}\label{Randomized}
Let $\mathbb{F}$ be a field and let $X=\{x_1, \ldots,x_n\}$. In this section, we work with both the nonassociative, noncommutative polynomial algebra $\ncna$ as well as the commutative $\cna$. 
In subsection \ref{nonassnoncomm}, we give a randomized, polynomial time black-box algorithms to test whether a nonassociative, noncommutative circuit $\Phi$ over $\mathbb{F}$ computes the identically zero polynomial.  The algorithm assumes that we have access to evaluations of $\Phi$ on a suitable $k$ dimensional $\mathbb{F}$-algebra $A$, and that the cost of one $A$-query is $\poly({\sf size}(\Phi), k)$. That is, the cost is polynomial in the size of the circuit and the dimension of the algebra. To the best of our knowledge, this gives the first Amitsur-Levitzki type theorem \cite{AL50} over nonassociative polynomial algebras.


\subsection{Nonassociative, Noncommutative Randomized Black-box \texorpdfstring{$\pit$}{Lg}}
\label{nonassnoncomm}

The key idea is to construct a noncommutative, nonassociative $\mathbb{F}$-algebra and show that it does not satisfy low degree polynomial identities. This will imply that a random non-zero substitution from this algebra will make a non-zero circuit $\Phi$ evaluate to something non-zero, with high probability. 

We will query noncommutative, nonassociative circuits computing a polynomial $f\in\ncna$ of degree $\leq d$ on a particular $d(d+1)^2+1$ dimensional algebra $\mathbb{A}_d$ that we now describe. First, we construct an algebra $\mathbb{A'}_d$ of dimension $d(d+1)^2$ and then construct the desired algebra $\mathbb{A}_d$ by adjoining an identity element to $\mathbb{A'}_d$. Additively, $\mathbb{A'}_d$ is an $\mathbb{F}$-vector space of dimension $d(d+1)^2$. We will think of an element of $\mathbb{A'}_d$ as a set of $d$ matrices each of dimension $(d+1)\times (d+1)$:

\begin{figure}
    \centering
    \includegraphics[width=0.5\linewidth]{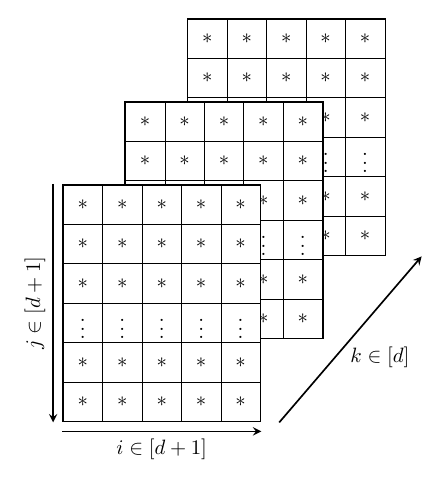}
    \caption{$3$-dimensional view of an element of $\mathbb{A}'_d$}
    \label{fig:algebra-element}
\end{figure}


Let $x, y \in \mathbb{F}^{d(d + 1)^2}$. We index $x, y$ as $x[i, j, k]$ and $y[i, j, k]$ for $1 \leq i, j \leq d + 1$ and $1 \leq k \leq d$. Here, $k$ can be thought of as indexing the set of $d$ matrices and for a fixed $k$, $i$ and $j$ index respectively the rows and columns of the $k$-th matrix. Next, we define the bilinear $\mathbb{A}_d'$-product of $x, y$ as follows: $z \triangleq x \circ y$ such that 

\[ z[i, j, k] = \begin{cases} 
        0 & k = d\\
        \displaystyle\sum_{l = 1}^{d + 1}x[i, l, k + 1]y[l, j, k + 1] & 1 \leq k \leq d-1 \\
   \end{cases}\]
Clearly, $\mathbb{A'}_d$ is an $\mathbb{F}$-algebra. From the proof of Lemma \ref{lowerBound}, it will be evident that $\mathbb{A'}_d$ is a nonassociative algebra. 

We require the algebra $\mathbb{A'}_d$ to be unital to make sense of $\mathbb{A'}_d$-evaluations of polynomials with a non-zero constant term: For any $f \in \ncna$, $f(0, \ldots, 0) = c\cdot \mathbf{1}$ where $f$ has constant term $c\in \mathbb{F}$ and $\mathbf{1}$ is the identity element of $\mathbb{A'}_d$. However, $\mathbb{A'}_d$ is non-unital. To construct the unital algebra $\mathbb{A}_d$ (from $\mathbb{A'}_d$), we use a standard procedure of adjoining an identity to an algebra: 
\paragraph{The Algebra $\mathbf{\mathbb{A}_d}$:} Define the algebra $\mathbb{A}_d$ to be the vector space $\mathbb{F}^{d(d + 1)^2+1} = \{(a, \alpha)\mid a \in \mathbb{A'}_d, \alpha \in \mathbb{F}\}$ together with the bilinear product $\cdot$ defined as follows:
$$(a_1, \alpha_1)\cdot(a_2,\alpha_2) = (a_1\circ a_2+\alpha_1 a_2 + \alpha_2 a_1, \alpha_1\alpha_2)$$
We will refer to this newly added index as the last index. It is easy to verify that $\mathbb{A}_d$ is indeed a nonassociative algebra with $(\mathbf{0}, 1)$ being the multiplicative identity, and that $\mathbb{A'}_d$ is isomorphic to the sub-algebra $\{(a, 0)\mid a \in \mathbb{A'}_d\}$ of $\mathbb{A}_d$. \\

Now, we show that $\mathbb{A}_d$ cannot have polynomial identities of degree $\leq d$. 


\begin{lemma}
\label{lowerBound}
    Let $f\in \ncna$ be a non-zero polynomial of total degree $\leq d$. Then $f$ is not a PI for $\mathbb{A}_d$.
\end{lemma}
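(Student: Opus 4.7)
The plan is to reduce the problem to a set-multilinear one via standard multilinearization and then construct an explicit substitution in $\mathbb{A}'_d \subseteq \mathbb{A}_d$ that isolates a single monomial of $f$. The guiding intuition is that the three-dimensional structure of $\mathbb{A}'_d$—two matrix indices plus a level index—is precisely tailored to encode the tuple $(\sigma_m, l_1^m, \ldots, l_e^m)$ that by Lemma \ref{levelSeq} uniquely determines a nonassociative, noncommutative monomial.

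I would first dispose of the degenerate case where $f$ is a nonzero constant $c$: evaluating at any tuple in $\mathbb{A}_d$ yields $(\mathbf{0}, c) \neq 0$, using the adjoined identity. Otherwise $f$ has some monomial of positive degree, and standard multilinearization—substituting $x_i \mapsto \sum_{j=1}^d z_{i,j}$ and extracting a set-multilinear homogeneous component via difference operators in each variable, valid because $|\mathbb{F}|$ is large—produces a nonzero polynomial $\hat f$ of some degree $e$ with $1 \leq e \leq d$ in new variables $y_1, \ldots, y_e$, each appearing exactly once in every monomial. Substitutions preserve PIs, so $\hat f$ is a PI of $\mathbb{A}_d$ whenever $f$ is, and it suffices to refute the former.

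The next step, and the heart of the argument, is to establish by induction on tree structure the evaluation formula
\[
m(a_1, \ldots, a_e)[p_0,\, p_e,\, 1] \;=\; \sum_{p_1, \ldots, p_{e-1}} \prod_{k=1}^{e} a_{\sigma_m(k)}\bigl[p_{k-1},\, p_k,\, l_k^m\bigr]
\]
for any monomial $m$ of degree $e$ in $\hat f$ and any $a_1, \ldots, a_e \in \mathbb{A}'_d$. The point is that the $\circ$-product decrements the level index by one at each internal gate, so a leaf at level $l_k^m$ is read off at its own depth-matched slice of the input tensor, while the matrix indices contract in the left-to-right order of the leaves. Now I would pick any $m^*$ in $\hat f$ with $c_{m^*} \neq 0$, write $\sigma^* = \sigma_{m^*}$ and $l_j^* = l_j^{m^*}$, and define $a_i \in \mathbb{A}'_d$ as the indicator tensor whose only nonzero entry equals $1$ and sits at position $\bigl[(\sigma^*)^{-1}(i),\; (\sigma^*)^{-1}(i)+1,\; l^*_{(\sigma^*)^{-1}(i)}\bigr]$. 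Since $e \leq d$, all matrix indices lie in $[1, e+1] \subseteq [1, d+1]$ and all relevant level indices lie in $[2, e] \subseteq [1, d]$, so the construction is legal.

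Evaluating $\hat f(a_1, \ldots, a_e)[1,\, e+1,\, 1]$, the $k$-th factor in the product chain is nonzero only if $p_{k-1} = (\sigma^*)^{-1}(\sigma_m(k))$, $p_k = (\sigma^*)^{-1}(\sigma_m(k))+1$, and $l_k^m = l^*_{(\sigma^*)^{-1}(\sigma_m(k))}$. Consistency between consecutive factors forces the permutation $\pi = (\sigma^*)^{-1} \circ \sigma_m$ to satisfy $\pi(k+1) = \pi(k)+1$ for every $k$, which combined with $\pi \in S_e$ forces $\pi = \mathrm{id}$, hence $\sigma_m = \sigma^*$; the level conditions then yield $l_k^m = l_k^*$ for all $k$, and Lemma \ref{levelSeq} forces $m = m^*$. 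Thus $\hat f(a_1, \ldots, a_e)[1,\, e+1,\, 1] = c_{m^*} \neq 0$, contradicting that $\hat f$ is a PI of $\mathbb{A}'_d \subseteq \mathbb{A}_d$. The main technical work will lie in carefully verifying the evaluation formula in the inductive step and bookkeeping the index-matching in the isolation argument; once those are in place, everything follows directly from Lemma \ref{levelSeq}.
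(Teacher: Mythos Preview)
Your approach is correct and close in spirit to the paper's, but the execution differs. The paper avoids multilinearization entirely: it substitutes $x_i \mapsto Z_i$ where the $(j,j+1,k)$-entry of $Z_i$ is a fresh commuting variable $z_{i,j,k}$, then proves (Claim~\ref{claim:polynomialEntry}) that the $(1,d'+1,1)$-entry of $\Phi(m)$ equals the commutative monomial $\prod_{t} z_{\sigma_m(t),t,l_t^m}$, so that by Lemma~\ref{levelSeq} distinct monomials of $f$ land on distinct $z$-monomials and nothing cancels. Your route instead first multilinearizes $f$ and then plugs in a tailored $0/1$ tensor to isolate one chosen monomial $m^*$. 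Both rest on the same evaluation identity and the same appeal to Lemma~\ref{levelSeq}; yours is more combinatorial and avoids the function-field detour, while the paper's symbolic substitution is cleaner (no multilinearization step) and, crucially, yields the structural Claim~\ref{claim:polynomialEntry} in a form that is reused verbatim in Section~\ref{blackbox-commutative} and Section~\ref{reduction}.

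Two small points to tighten. First, the large-field assumption you invoke for multilinearization is unnecessary: the inclusion--exclusion (discrete difference) extraction of the multilinear component works over any field, and in $\ncna$ the $\prod_i k_i!$ many multilinear terms arising from a single monomial are genuinely distinct (positions are distinguishable), so no cancellation occurs. Second, your evaluation formula is stated only for the third index equal to $1$, but the induction on tree structure needs it at an arbitrary level $k_2$ (the subtrees live at level $\geq 2$); the correct inductive hypothesis is the obvious generalization with $l_k^m$ replaced by $l_k^m + k_2 - 1$, exactly as in the paper's Claim~\ref{claim:polynomialEntry}. Once that is in place, your index-chasing argument (forcing $\pi = \mathrm{id}$ from $\pi(k+1)=\pi(k)+1$ together with $p_0=1$) goes through as written.
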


\begin{proof}
It suffices to consider polynomials that do not have a constant term, since a polynomial with a non zero constant term evaluates to a non-zero value at the all zeroes input. Also, it suffices to prove Lemma \ref{lowerBound} for the algebra $\mathbb{A'}_d$ instead of $\mathbb{A}_d$, since $\mathbb{A'}_d$ is a subalgebra of $A_d$. In order to prove the lemma, we reduce the problem to the associative, commutative setting. 

For each $x_i$, we introduce an \textit{associative, commutative} set of variables $\{z_{i, j, k}\mid 1 \leq j, k \leq d\}$. For convenience, denote the vector $(z_{i, j, k})_{j, k \in[d]}$ by $\mathbf{z_i}$. 
Extend the ground field $\mathbb{F}$ to the function field $\mathbb{F}'=\mathbb{F}(\mathbf{z}_1, \ldots, \mathbf{z}_n)$. We define the algebra $\mathbb{A'}_d$ over the field $\mathbb{F}'$ as described earlier. Eventually, the $\mathbf{z}$ variables will be fixed suitably from the base field $\mathbb{F}$. 

Next, consider the evaluation map $\Phi:\ncna\rightarrow \mathbb{A'}_d$
that sends $x_i$ to $Z_i$ where for each $1 \leq j, k \leq d$, the $(j, j + 1,k)$-th entry of $Z_i$ is $z_{i, j, k}$ and the rest of the entries are zero. For an illustration, we describe $Z_1$ explicitly in Figure \ref{fig:Z_1}.

\begin{figure}
    \centering    \includegraphics[width=0.5\linewidth]{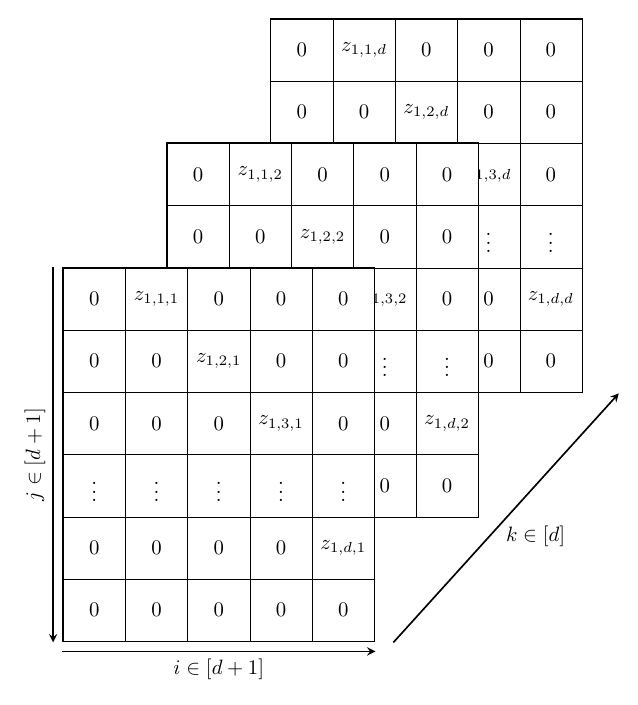}
    \caption{$Z_1 = \Phi(x_1)$ visualized as an element of $\mathbb{A}'_d$ 
    }
    \label{fig:Z_1}
\end{figure}

Let us look at the image of a monomial $m\in \ncna$ (of degree at most $d$) under $\Phi$. Let $d'\leq d$ be the degree of $m$. We interpret $m$ as a binary tree with leaves labeled by variables. Since we are in the noncommutative setting, there is a unique function $\sigma_m:[d']\rightarrow [n]$ that describes the left-to-right order in which the variables appear in $m$. We will \textit{level} the nodes (including leaves) of $m$ as described in Section \ref{prelimLemma}. Let $l^m_t$ denote the \textit{level} at which the $t$-th variable in $\sigma_m$ (i.e., variable $x_{\sigma_m(t)}$) appears in $m$. Let the \textit{depth} of $m$ (i.e., ${\sf max}_i\{l_i\}$) be $l$. 

    \begin{claim}
    \label{claim:polynomialEntry}
        Let $m$ be as above. For each $k_1\in [d-d'+1]$ and each $k_2\in [d - l + 1]$, the $(k_1, k_1 + d', k_2)$-th entry of $\Phi(m)$ is the monomial $\prod_{t = 1}^{d'} z_{\sigma_m(t), t + k_1 - 1, l^m_{t} + k_2 - 1}$, and every other entry is zero.
    \end{claim}

    \begin{proof}
        We prove this by induction on the degree $d'$ of $m$. For ${\sf deg}(m) = 1$, the claim follows from the definition of $Z_i$'s. Now suppose $d'>1$ and $m$ is uniquely written as $m_1m_2$ such that ${\sf deg}(m_1) = d_1$, ${\sf deg}(m_2) = d_2$ and $d_1 + d_2 = d'$. Then,
        \begin{align*}
            \Phi(m)[k_1, k_1 + d', k_2] = \sum_{i = 1}^{d+1}\Phi(m_1)[k_1, i, k_2+1]\Phi(m_2)[i, k_1+d', k_2+1]
        \end{align*}
        By induction, we see that exactly one term in this sum is non zero, the one corresponding to $i=k_1+d_1$ and the sum is therefore equal to 
        \begin{equation}\label{eq1}
            \left(\prod_{t = 1}^{d_1}z_{\sigma_{m_1}(t), t + k_1 - 1, l^{m_1}_{t}+k_2}\right)\left(\prod_{t = 1}^{d_2}z_{\sigma_{m_2}(t), t + d_1 +k_1 - 1, l^{m_2}_{t}+k_2}\right)
        \end{equation}

        Notice that

        \begin{equation*}
        \sigma_m(t)=\begin{cases}
                \sigma_{m_1}(t) \quad 1 \leq t \leq d_1 \\
                \sigma_{m_2}(t-d_1) \quad d_1+1\leq t \leq d' 
            \end{cases}
        \end{equation*}

        and that 

        \begin{equation*}
        l^m_t=\begin{cases}
                l^{m_1}_t+1 \quad 1 \leq t \leq d_1 \\
                l^{m_2}_{t-d_1}+1 \quad d_1+1\leq t \leq d' 
            \end{cases}
        \end{equation*}

        Using these observations, we find that (\ref{eq1}) is exactly equal to $\prod_{t = 1}^{d'} z_{\sigma_m(t), t + k_1 - 1, l^m_{\sigma_m(t)} + k_2 - 1}$. \\

        Now let us look at $\Phi(m)[i_1, i_2, i_3]$ such that $i_2 \neq i_1 + d'$ and $i_3 \in [d - 1]$ (if $i_3 = d$, $\Phi(m)[i_1, i_2, i_3] = 0$ by definition of the $\mathbb{A}_d$ product). 

        $$\Phi(m)[i_1,i_2,i_3] = \sum_{i = 1}^{d+1}\Phi(m_1)[i_1, i, i_3+1]\Phi(m_2)[i, i_2, i_3+1]$$

        Using the induction hypothesis, we see that each summand in the sum above is actually zero, and therefore so is $\Phi(m)[i_1,i_2,i_3]$. 

        Let us also look at $\Phi(m)[i_1, i_2, i_3]$ with $i_3>d-l+1$. If $d'=2$ then all such entries of $\Phi(m)$ are easily seen to be zero. Now suppose $d'>2$ and assume, without loss of generality, that the depth of $m_1$ is $l-1$. Then by induction, all entries $\Phi(m_1)[i_1, i_2, i_3]$ of $\Phi(m_1)$ with $i_3>d-l$ are zero and therefore so are all the entries $\Phi(m)[i_1, i_2, i_3]$ of $\Phi(m)$ with $i_3>d-l+1$.
        
        This concludes the proof of Claim \ref{claim:polynomialEntry}.
    \end{proof}
    Now, by setting $k_1, k_2=1$ in the statement of Claim \ref{claim:polynomialEntry} we see that for a monomial $m$ of degree $d'$, the $(1, d'+1, 1)$-th entry of $\Phi(m)$ is  $\prod_{t = 1}^{d'} z_{\sigma_m(t), t, l_{t}}$. For any monomial of degree $\neq d'$, this entry is zero. This observation combined with Lemma \ref{levelSeq} gives us Lemma \ref{lowerBound}. 
\end{proof}

Using Lemma \ref{lowerBound}, we exhibit a randomized black-box identity testing algorithm for nonassociative, noncommutative circuits.
\begin{theorem}\label{NonAssociative-SZ}
    Let $\mathbb{F}$ be a field with $|\mathbb{F}|>d$, and $S \subset \mathbb{F}$. Let $f \in \ncna$ be a non-zero polynomial of degree $\leq d$ given as a black-box with query access to evaluations of $f$ on elements of $\mathbb{A}_d$. 
    Let $S\subseteq \mathbb{F}$ with $|S|>d$. Sample $b_1, \ldots, b_n\in \mathbb{A}_{d}$ as follows: Pick each of the $d(d+1)^2+1$ entries of each of the $b_i$'s uniformly and independently from $S$. Then 
    $$\Pr_{b_1, \ldots, b_n \in \mathbb{A}_d}[f(b_1, \ldots, b_n) = 0] \leq d/|S|.$$
\end{theorem}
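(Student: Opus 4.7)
The plan is to reduce this to an application of the standard Polynomial Identity Lemma (Lemma \ref{lem:SZ}) by viewing the evaluation of $f$ on $\mathbb{A}_d$-inputs as a tuple of commutative polynomials in the entries of the inputs. Let me sketch the steps.

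First, I would introduce formal variables for the coordinates of the $b_i$'s: for each $i \in [n]$, let $y_i$ denote the tuple of $d(d+1)^2+1$ formal commutative variables corresponding to the entries of $b_i$, and let $Y = \bigcup_i y_i$. Since the $\mathbb{A}_d$-product of two elements is bilinear in the coordinates (the first component of the product $(a_1, \alpha_1) \cdot (a_2, \alpha_2)$ is a quadratic form in the combined entries of $(a_1, \alpha_1)$ and $(a_2, \alpha_2)$, and the last component is likewise quadratic), a monomial of $\ncna$-degree $k$ evaluates, in each of the $d(d+1)^2+1$ coordinates, to an element of $\mathbb{F}[Y]$ of total degree exactly $k$. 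Therefore $f(b_1, \ldots, b_n)$, viewed with symbolic entries, is an element of $\mathbb{A}_d \otimes \mathbb{F}[Y]$ each of whose coordinates is a polynomial in $\mathbb{F}[Y]$ of total degree at most $d$.

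Next, I would invoke Lemma \ref{lowerBound}: since $f$ is a non-zero polynomial in $\ncna$ of total degree $\leq d$, it is not a polynomial identity for $\mathbb{A}_d$. This means that there exists some concrete choice $(b_1^{\ast}, \ldots, b_n^{\ast}) \in (\mathbb{A}_d)^n$ making $f(b_1^{\ast}, \ldots, b_n^{\ast}) \neq 0$. Consequently, at least one of the $d(d+1)^2+1$ coordinate polynomials of $f(b_1, \ldots, b_n)$, call it $g \in \mathbb{F}[Y]$, is not identically zero as a polynomial in the symbolic entries. By the previous paragraph, $\deg(g) \leq d$.

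Finally, I would apply the Polynomial Identity Lemma (Lemma \ref{lem:SZ}) to $g$: when the entries of $Y$ are sampled uniformly and independently from $S$, the probability that $g$ vanishes is at most $\deg(g)/|S| \leq d/|S|$. Since $f(b_1, \ldots, b_n) = 0$ implies in particular that the $g$-coordinate vanishes, we conclude
\[
\Pr_{b_1, \ldots, b_n \in \mathbb{A}_d}[f(b_1, \ldots, b_n) = 0] \;\leq\; \Pr_{Y \in S^{\,n(d(d+1)^2+1)}}[g(Y) = 0] \;\leq\; d/|S|,
\]
as required. I do not expect any real obstacle here; all the technical work has been done in Lemma \ref{lowerBound}, and this theorem is essentially a packaging of that lemma with the classical Schwartz–Zippel–DeMillo–Lipton bound, the only mild point to verify being the bound $\deg(g) \leq d$, which follows from bilinearity of the $\mathbb{A}_d$-product.
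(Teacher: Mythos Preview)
Your proposal is correct and follows essentially the same approach as the paper: introduce formal commutative variables for the entries of the $b_i$'s, use Lemma \ref{lowerBound} to conclude that some coordinate of $f(Y_1,\ldots,Y_n)$ is a nonzero polynomial in $\mathbb{F}[Y]$, and apply the Polynomial Identity Lemma to that coordinate. If anything, you are slightly more explicit than the paper in justifying the degree bound $\deg(g)\le d$ via bilinearity of the $\mathbb{A}_d$-product, which the paper uses tacitly when invoking the $d/|S|$ bound.
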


\begin{proof}
From Lemma \ref{lowerBound}, it follows that $f$ is not a PI for $\mathbb{A}_d$. By slight abuse of notation, define $\mathbb{A}_d$ over the field $\mathbb{F}(Y)$ where $Y$ is a commutative, associative set of variables and $|Y|=nd(d+1)^2+1$. Replace each $x_i$ by $Y_i\in \mathbb{A}_d$ defined as follows: $Y_i$ has dimension $d(d+1)^2+1$ and each entry of each $Y_i$ is a fresh variable from $Y$. Since $f$ is not a PI for $\mathbb{A}_d$, $f(Y_1, \ldots, Y_n)\nequiv 0$ and so at least one entry of $f(Y_1, \ldots, Y_n)$ is a non-zero polynomial in $\mathbb{F}[Y]$.
By the Polynomial Identity Lemma (Lemma \ref{lem:SZ}), under a random $S$-substitution, the probability that that entry of $f(Y_1, \ldots, Y_n)$ evaluates to zero is $\leq d/|S|$.
\end{proof}

Theorem \ref{NonAssociative-SZ} can be thought of as a version of Theorem \ref{thm:amitsur-Levitzki} over 
$\mathbb{F}_{\bar{A}, \bar{C}}$. 
It immediately gives us the desired black-box PIT algorithm.

\subsection{Nonassociative, Commutative Randomized Black-box $\pit$}
\label{blackbox-commutative}

Next, we construct a nonassociative, \emph{commutative} algebra $\mathbb{C}_d$ that does not satisfy low degree identities in $\cna$. $\mathbb{C}_d$ is constructed using the algebra $\mathbb{A}_d$ from the previous section (Section \ref{nonassnoncomm}). Recall that $\cdot$ denotes the $\mathbb{A}_d$-product. 

\paragraph{The Algebra $\mathbb{C}_d$:} 
$\mathbb{C}_d$ is isomorphic to $\mathbb{A}_d$ as a vector space, and the $\mathbb{C}_d$ product of $a, b$ (denoted by $\odot$) is simply the \emph{anticommutator} of $a, b$ with respect to the $\mathbb{A}_d$ product $``\cdot"$. That is, $a \odot b = a \cdot b + b \cdot a$. Let $\mathbb{C'}_d$ denote the sub-algebra of $\mathbb{C}_d$ obtained by setting the last index to $0$. It is easily verified that $\mathbb{C}'_d$ is isomorphic to the algebra whose product is the anticommutator with respect to the $\mathbb{A}'_d$ product $\circ$. For convenience, in the sequel, we will drop the last index of elements of $\mathbb{C}'_d$.

\nonassocAL*

\begin{proof}
    
    The proof of Lemma \ref{lowerBound2} is similar to the proof of Lemma \ref{lowerBound}. As before, we note that it suffices to prove the lemma for constant free polynomials and that it suffices to prove the claim for $\mathbb{C'}_d$ instead of $\mathbb{C}_d$, because any identity of $\mathbb{C}_d$ is also an identity of $\mathbb{C'}_d$.
    

    For each $x_i$, we introduce the same \textit{associative, commutative} set of variables $\{z_{i, j, k}\mid 1 \leq j, k \leq d\}$ as before. Denote the vector $(z_{i, j, k})_{j, k \in[d]}$ by $\mathbf{z_i}$. As in the proof of Lemma \ref{lowerBound}, we consider the extended field $\mathbb{F}'=\mathbb{F}(\mathbf{z}_1, \ldots, \mathbf{z}_n)$ and define $\mathbb{C'}_d$ over $\mathbb{F}'$.  
    We consider the evaluation map $\Phi:\cna\rightarrow \mathbb{C'}_d$ that sends $x_i$ to $Z_i$ where for each $1 \leq j, k \leq d$, the $(j, j + 1,k)$-th entry of $Z_i$ is $z_{i, j, k}$ and the rest of the entries are zero. Again, we would like to inspect the image of a monomial $m\in \cna$ of degree $d'\leq d$ under $\Phi$. As before, we interpret $m$ as a binary tree. Without loss of generality, let the variables appearing in $m$ be $x_1, \ldots, x_{d'}$. Unlike in the noncommutative case, there is no unique left to right order of variables that one can associate with the monomial $m$. 
    
    There is, however, a \textit{set} of orders that one can associate with $m$: For each internal node in the tree representing $m$, arbitrarily designate one of the children to be the left child and the other to be the right child. This procedure induces an order $\sigma$ on the variables of $m$. Furthermore, every distinct way of designating left and right children at the internal nodes of $m$ induces a unique order. Consider the union of all these orders and denote this set by $\Sigma_m$. Each $\sigma\in\Sigma_m$ corresponds to a unique monomial from $\ncna$ in the equivalence class $M_m$ corresponding to the monomial $m$ (see Section \ref{prelimLemma}, the commutative case). Also, for a fixed $\sigma\in\Sigma_m$, let $l^{m, \sigma}_t$ denote the \textit{level} at which the $t$-th variable in the order $\sigma$ (i.e., variable $x_{\sigma(t)}$) appears in $m$. Let the \textit{depth} of $m$ (i.e., ${\sf max}_{i}\{l_i\}$) be $l$.

    \begin{claim}\label{polynomialEntry2}
        Let $m$ be as above. For each $k_1\in [d-d'+1]$ and each $k_2\in [d - l + 1]$, the $(k_1, k_1 + d', k_2)$-th entry of $\Phi(m)$ is the polynomial $\displaystyle\sum_{\sigma \in \Sigma_m} \prod_{t = 1}^{d'} z_{\sigma(t), t + k_1 - 1, l^{m, \sigma}_{t} + k_2 - 1}$, and every other entry is zero.
    \end{claim}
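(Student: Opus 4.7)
The plan is to prove Claim \ref{polynomialEntry2} by induction on the degree $d'$ of $m$, closely paralleling the proof of Claim \ref{claim:polynomialEntry} but accounting for commutativity through the anticommutator product. The base case $d'=1$ is immediate: $m$ is a single variable $x_i$, $\Sigma_m$ consists of the trivial order, and the nonzero entries of $\Phi(m)=Z_i$ agree with the formula by construction (levels are all $1$, and the only relevant entries are $(k_1, k_1+1, k_2)$).

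For the inductive step, write $m = m_1 \odot m_2$ with ${\sf deg}(m_i)=d_i$ and $d_1+d_2=d'$. Since the $\mathbb{C}'_d$-product is the anticommutator with respect to $\circ$,
\begin{align*}
    \Phi(m) \;=\; \Phi(m_1)\circ\Phi(m_2) \;+\; \Phi(m_2)\circ\Phi(m_1).
\end{align*}
Each summand is an $\mathbb{A}'_d$-product to which I would apply the inductive hypothesis together with the entry-wise computation carried out in the proof of Claim \ref{claim:polynomialEntry}. For the first summand, only the intermediate index $i=k_1+d_1$ produces a nonzero contribution at position $(k_1,k_1+d',k_2)$, yielding the sum $\sum_{\sigma_1\in\Sigma_{m_1},\,\sigma_2\in\Sigma_{m_2}} \prod_{t=1}^{d'} z_{\tau(t),\,t+k_1-1,\,l^{m,\tau}_t+k_2-1}$ where $\tau$ is the concatenation $(\sigma_1,\sigma_2)$. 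The second summand analogously yields the sum over concatenations $(\sigma_2,\sigma_1)$.

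The combinatorial heart of the argument is to identify these two families of concatenations with $\Sigma_m$. The key observation is that $\Sigma_m$ naturally splits according to whether the variables of the subtree $m_1$ precede those of $m_2$ in the induced left-to-right order, or vice versa, with the level relations $l^{m,\sigma}_t = l^{m_i,\sigma_i}_{t'}+1$ (where $t'$ is the local index inside the corresponding subtree). Adding the two contributions therefore reconstructs $\sum_{\sigma \in \Sigma_m} \prod_{t=1}^{d'} z_{\sigma(t),\,t+k_1-1,\,l^{m,\sigma}_t+k_2-1}$ as required. The vanishing of every other entry follows directly because each of the two summands vanishes off the prescribed range by the inductive hypothesis, exactly as in the noncommutative case.

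The step I expect to be the main obstacle is the symmetric case in which $m_1$ and $m_2$ coincide as commutative trees: there the two summands yield overlapping families of orderings, and one must verify that combining them still reproduces the sum indexed by $\Sigma_m$ with the correct multiplicities. This is the analogue of the automorphism-induced collapse between distinct left/right designations in the tree representation of $m$, and resolving it amounts to invoking the bijection between $\Sigma_m$ and the noncommutative equivalence class $M_m$ from Observation \ref{commutativeObs}, together with the fact that the inductive hypothesis, applied to $m_1$, already encodes the corresponding symmetry one level lower.
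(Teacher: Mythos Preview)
Your approach is correct and mirrors the paper's proof: both induct on the degree, expand the anticommutator into two $\circ$-products, apply the inductive hypothesis to each factor, and identify the resulting terms with the partition of $\Sigma_m$ according to which subtree is designated as the left child of the root. The symmetric case $m_1=m_2$ that you flag as the main obstacle is in fact vacuous here, since the claim is stated under the paper's standing assumption that $m$ is multilinear (``the variables appearing in $m$ are $x_1,\ldots,x_{d'}$''), so $m_1$ and $m_2$ are variable-disjoint and hence distinct.
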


    \begin{proof}
    We prove this by induction on the degree $d'$ of $m$. For ${\sf deg}(m) = 1$, $|\Sigma_m| = 1$ and the claim follows from the definition of $Z_i$'s. Now suppose $d'>1$ and $m$ is uniquely written as $m_1m_2$ such that ${\sf deg}(m_1) = d_1$, ${\sf deg}(m_2) = d_2$ and $d_1 + d_2 = d'$. Then,

        \begin{align*}
            \Phi(m)[k_1, k_1 + d', k_2] = &\sum_{i = 1}^{d+1}\Phi(m_1)[k_1, i, k_2+1]\Phi(m_2)[i, k_1+d', k_2+1]\\+ &\sum_{i = 1}^{d+1}\Phi(m_2)[k_1, i, k_2+1]\Phi(m_1)[i, k_1+d', k_2+1]
        \end{align*}
        By induction, we see that the first sum $\displaystyle\sum_{i = 1}^{d+1}\Phi(m_1)[k_1, i, k_2+1]\Phi(m_2)[i, k_1+d', k_2+1]$ is equal to 
        
        $$\left(\sum_{\pi\in \Sigma_{m_1}}\prod_{t = 1}^{d_1}z_{\pi(t), t + k_1 - 1, l^{m_1, \pi}_{t}+k_2}\right)\left(\sum_{\tau\in \Sigma_{m_2}}\prod_{t = 1}^{d_2}z_{\tau(t), t + d_1 + k_1 - 1, l^{m_2, \tau}_{t}+k_2}\right)$$
while the second sum $\sum_{i = 1}^{d+1}\Phi(m_2)[k_1, i, k_2+1]\Phi(m_1)[i, k_1+d', k_2+1]$ is equal to 

        $$\left(\sum_{\tau\in \Sigma_{m_2}}\prod_{t = 1}^{d_2}z_{\tau(t), t + k_1 - 1, l^{m_2, \tau}_{t}+k_2}\right)\left(\sum_{\pi\in \Sigma_{m_1}}\prod_{t = 1}^{d_1}z_{\pi(t), t + d_2 + k_1 - 1, l_{t}^{m_1, \pi}+k_2}\right)$$

        Expanding the products, we see that the first sum generates the monomials in the polynomial $\displaystyle\sum_{\sigma \in \Sigma_m} \prod_{t = 1}^{d'} z_{\sigma(t), t + k_1 - 1, l^{m, \sigma}_{t} + k_2 - 1}$ corresponding to the $\sigma\in\Sigma_{m}$ that make $m_1$ the left child of the root of $m$ and $m_2$ the right, and the second term generates the rest of the monomials. \\

        Also, note that the same argument as in the proof of Lemma \ref{lowerBound} tells us that if we have $i_1, i_2, i_3 $ such that $i_2 \neq i_1 + d'$ and $i_3 \in [d]$ then $\Phi(m)[i_1, i_2, i_3] = 0$ and that if $i_3>d-l+1$ then again $\Phi(m)[i_1, i_2, i_3]=0$.
    \end{proof}

    In particular, by setting $k_1, k_2=1$ in the statement of Claim \ref{polynomialEntry2} we see that for a monomial $m$ of degree $d'$, the $(1, d'+1, 1)$-th entry of $\Phi(m)$ is $\displaystyle\sum_{\sigma \in \Sigma_m} \prod_{t = 1}^{d'} z_{\sigma(t), t, l^{m, \sigma}_{t}}$. For a monomial of degree $\neq d'$, this entry is zero.
    
    Therefore, Combining Claim \ref{polynomialEntry2}, Lemma \ref{levelSeq} and Observation \ref{commutativeObs}, we see that $f$ cannot be an a PI for $\mathbb{C'}_d$ (and therefore for $\mathbb{C}_d$).
\end{proof}

The proof of Theorem \ref{Comm-NonAssociative-SZ} follows from Lemma \ref{lowerBound2}, in a way similar to the proof of Theorem \ref{NonAssociative-SZ}. We record the statement here for completeness

\commnonassocSZ*

\section{Deterministic $\pit$ Algorithms Over \texorpdfstring{$\cna$}{Lg}}\label{sec:det-algo}
In this section, we develop efficient deterministic $\pit$ algorithms over the algebra $\cna$. First, we present the white-box algorithm and then the black-box algorithm. 

\subsection{White-box Deterministic $\pit$ over \texorpdfstring{$\cna$}{Lg}}\label{WhiteboxPIT}

We give a polynomial time white-box $\pit$ algorithm for commutative, nonassociative circuits. We use linear algebraic ideas from the $\pit$ algorithm by Raz and Shpilka \cite{RS04} for noncommutative algebraic branching programs. These ideas have later been used to give polynomial time white-box $\pit$ algorithms for various models, such as nonassociative, \textit{noncommutative} circuits \cite{ADMR17} and noncommutative unique parse tree circuits \cite{Lagarde2019}.

\DetPIT*

\begin{proof}

For each monomial $m$ of degree $\leq d$, we may associate a vector $v_m \in \mathbb{F}^s$ where $s$ is the number of gates in $\Psi$. The vector $v_m$ is indexed by the gates of $\Psi$ such that $v_m(g) = {\sf coeff}_m(g)$. For each $i \in \{0\} \cup [d]$, we wish to maintain a polynomially bounded set $M_i$ of monomials of degree $i$ and a corresponding set $B_i = \{v_m\mid m \in M_i\}$ of vectors such that ${\sf span}\{B_i\} = {\sf span}\{v_m\mid {\sf deg}(m) = i\}$, and we build these sets inductively, starting from $i = 0, 1$. For $i = 0, 1$, we set $M_i$ to be the set of all monomials of degree $i$ and populate the vectors in $B_i$ in a brute force manner.

Next, suppose we have the sets $M_i, B_i$ for $0 \leq i < j$ and we want to construct $M_j$ and $B_j$ ($j \geq 2$). We set $$M'_j = \bigcup_{\substack{i + k = j \\ i, k \geq 1}}\{m\times m'\mid m \in M_i \text{ and }m' \in M_k\}$$ For all $m = m_1m_2\in M'_j$ we do the following: we sort the gates of $\Psi$ in topological order and fill the entries of $v_m$ in that order as follows:

\begin{itemize}

    \item If $g$ is a leaf, we set $v_{m}(g) = 0$ (since $j = {\sf deg}(m) \geq 2$).
    \item If $g = g_1 \times g_2$ is a product gate, set $v_{m}(g) = v_{m_1}(g_1)v_{m_2}(g_2) + v_{m_2}(g_1) v_{m_1}(g_2) + v_{m}(g_1)v_{1}(g_2) + v_{1}(g_1) v_m(g_2)$. We can do this since we know the vectors $v_{m_1}, v_{m_2}$ by induction on the degree, and we know $v_{m}(g_1), v_{m}(g_2)$ as $g_1, g_2$ appear before $g$ in the topological order.
    \item If $g = g_1 + g_2$ is a sum gate, set $v_m(g) = v_{m}(g_1) + v_{m}(g_2)$. Again, we know $v_{m}(g_1), v_{m}(g_2)$ as $g_1, g_2$ appear before $g$.
\end{itemize}


Finally, we select a maximal linearly independent subset $B_j$ (using Gaussian elimination) from the set of vectors $\{v_m\mid m \in M'_j\}$ and call the corresponding set of monomials $M_j$. Clearly, $|M_j| \leq s$.

\begin{claim}\label{span}
    For any monomial $m$ such that ${\sf deg}(m) = j$, $v_m \in {\sf span}\{B_j\}$.
\end{claim}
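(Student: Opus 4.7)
The plan is to prove Claim \ref{span} by strong induction on the degree $j$, relying on the recursive definition of $v_m$ at each gate. The base cases $j = 0$ and $j = 1$ hold by construction, since $M_0$ and $M_1$ are set to be the full sets of monomials of those degrees. For the inductive step, fix $j \geq 2$ and a monomial $m$ of degree $j$. Because nonassociative multiplication gives a unique top-level factorization of any monomial, we may write $m = m_1 \cdot m_2$ with ${\sf deg}(m_1) = i$, ${\sf deg}(m_2) = k$, $i + k = j$, $i, k \geq 1$. Since $B_j$ is, by construction, a maximal linearly independent subset of $\{v_{m'} : m' \in M'_j\}$, we have ${\sf span}(B_j) = {\sf span}\{v_{m'} : m' \in M'_j\}$; so it is enough to show that $v_m$ lies in the span of $\{v_{m'_1 \cdot m'_2} : m'_1 \in M_i,\, m'_2 \in M_k\} \subseteq \{v_{m'} : m' \in M'_j\}$.

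By the outer induction hypothesis there are scalars $\alpha_{m'_1}, \beta_{m'_2} \in \mathbb{F}$ with $v_{m_1} = \sum_{m'_1 \in M_i} \alpha_{m'_1} v_{m'_1}$ and $v_{m_2} = \sum_{m'_2 \in M_k} \beta_{m'_2} v_{m'_2}$. The natural candidate identity is $v_m = u$, where $u := \sum_{m'_1 \in M_i,\, m'_2 \in M_k} \alpha_{m'_1} \beta_{m'_2}\, v_{m'_1 \cdot m'_2}$. I would verify $u(g) = v_m(g)$ gate-by-gate in the topological order on $\Psi$. At a leaf, both sides vanish since $j \geq 2$. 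At a sum gate $g = g_a + g_b$, both $u(g)$ and $v_m(g)$ split as the sum of their values at $g_a$ and $g_b$, so the equality propagates from the children by the inner gate-level induction.

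The interesting case is a product gate $g = g_a \times g_b$, where the recursive formula for $v_{m'_1 \cdot m'_2}(g)$ has four terms. After substitution into $u(g)$ and interchange of summations, the two bilinear contributions collapse using the outer induction hypothesis: $\sum_{m'_1, m'_2} \alpha_{m'_1} \beta_{m'_2}\, v_{m'_1}(g_a)\, v_{m'_2}(g_b) = v_{m_1}(g_a)\, v_{m_2}(g_b)$, and the symmetric cross contribution equals $v_{m_2}(g_a)\, v_{m_1}(g_b)$. The two ``carry'' contributions reduce to $u(g_a)\, v_1(g_b) + v_1(g_a)\, u(g_b)$, which by the inner gate-level induction equal $v_m(g_a)\, v_1(g_b) + v_1(g_a)\, v_m(g_b)$. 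Assembling the four pieces reproduces exactly the recursive definition of $v_m(g)$, establishing $u(g) = v_m(g)$. The main delicacy is avoiding circularity between the two layers of induction: this is handled by making degree the outer layer (so that the scalars $\alpha_{m'_1}, \beta_{m'_2}$ are fixed before the gate sweep begins) and gate order the inner one (so that the carry terms at $g$ reference only values of $u$ and $v_m$ at strictly earlier gates). I expect this bookkeeping, together with the symmetric handling of the commutative bilinear terms, to be the principal obstacle.
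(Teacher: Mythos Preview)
Your proposal is correct and mirrors the paper's proof almost exactly: the paper also uses an outer induction on the degree $j$, writes $m=m_1m_2$, invokes the induction hypothesis to express $v_{m_1}$ and $v_{m_2}$ as combinations over $M_i$ and $M_k$, and then verifies the identity $v_m(g)=\sum_{m',m''}\alpha_{m'}\beta_{m''}v_{m'm''}(g)$ gate-by-gate (the paper phrases the inner induction as ``induction on the depth of the gate'' rather than topological order, but this is the same thing). The handling of leaves, sum gates, and the four-term expansion at product gates is identical to what you outline.
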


\begin{proof}
    The proof is by induction on the degree $j$. For $j = 0, 1$, the claim is trivially true, so assume $j \geq 2$. Now suppose $m$ is uniquely decomposed (up to commutativity) as $m = m_1m_2$ such that ${\sf deg}(m_1) = i$ and ${\sf deg}(m_2) = k$ (with $i, k\geq 1$). By induction on degree, we assume that 

    \begin{equation}\label{eq:1}
        v_{m_1} = \sum_{m' \in M_i}\alpha_{m'}v_{m'} \text{ and } v_{m_2} = \sum_{m'' \in M_k}\beta_{m''}v_{m''}
    \end{equation}

\noindent  for constants $\{\alpha_{m'}\mid m' \in M_i\}$ and $\{\beta_{m''}\mid m'' \in M_k\}$. We will prove that for each gate $g$ of $\Psi$

    \begin{equation} \label{eq:2}
        v_{m}(g) = \sum_{\substack{m' \in M_i \\ m'' \in M_k}}\alpha_{m'}\beta_{m''}v_{m'm''}(g)
    \end{equation}

     This puts $v_m$ in ${\sf span}\{B_j\}$ by construction. We prove (\ref{eq:2}) gate by gate, by induction on the depth of the gate.

     \begin{itemize}
         \item For a leaf $g$, $g$ is labeled by a variable or a constant, so $v_{m}(g) = 0$ (recall that $j\geq 2$) and $v_{m'm''}(g)$ for each $m' \in B_i, m'' \in B_k$ is also all zero by construction, so (\ref{eq:2}) is true in this case.
         \item For a product gate $g = g_1 \times g_2$, \begin{align*}
             v_{m}(g) &= v_{m_1}(g_1)v_{m_2}(g_2) + v_{m_2}(g_1)v_{m_1}(g_2) + v_{m}(g_1)v_1(g_2) + v_{1}(g_1)v_{m}(g_2).
         \end{align*}
        After substituting (\ref{eq:1}) for $v_{m_1}$ and $v_{m_2}$ using the (degree) induction hypothesis, substituting (\ref{eq:2}) for $v_{m}(g_1), v_{m}(g_2)$ using induction on the depth of $g_1, g_2$ and simplifying, we get that 
        
             $$v_{m}(g) = \displaystyle\sum_{\substack{m' \in M_i \\ m'' \in M_k}}\alpha_{m'}\beta_{m''} \left(\sum_{(r,t) \in \{(1, 2), (2, 1)\}}v_{m'}(g_r)v_{m''}(g_t)+v_{m'm''}(g_r)v_{1}(g_t)\right)$$ 
        Note that the inner expression is nothing but $v_{m'm''}(g)$ (by construction), and therefore (\ref{eq:2}) is true when $g$ is a product gate.

        \item For a sum gate $g = g_1 + g_2$, we have
        \begin{align*}
            v_m(g) &= v_{m}(g_1) + v_{m}(g_2) \\ &= \sum_{\substack{m' \in M_i \\ m'' \in M_k}}\alpha_{m'}\beta_{m''}v_{m'm''}(g_1) + \sum_{\substack{m' \in M_i \\ m'' \in M_k}}\alpha_{m'}\beta_{m''}v_{m'm''}(g_2) \\ &= \sum_{\substack{m' \in M_i \\ m'' \in M_k}}\alpha_{m'}\beta_{m''}(v_{m'm''}(g_1) + v_{m'm''}(g_2))\\ &=  \sum_{\substack{m' \in M_i \\ m'' \in M_k}}\alpha_{m'}\beta_{m''}v_{m'm''}(g)
        \end{align*}
        where the second step follows by induction on depth and the fourth by construction of $v_{m'm''}$. This verifies (\ref{eq:2}) when $g$ is a sum gate.
    \end{itemize}
    These three cases together prove the claim.
\end{proof}

Clearly, this procedure of constructing $B_j$'s and $M_j$'s takes polynomial time since $|B_j|\leq {\max\{s, n, 1\}}$ for each $0 \leq j \leq d$. To check whether $\Psi \equiv 0$, we simply check whether there exists an $i \in \{0, \ldots, d\}$ such that there exists a monomial $m\in M_i$ such that $v_m(g_s) \neq 0$, where $g_s$ is the output gate of $\Psi$. If yes, we say $\Psi \not\equiv 0$, otherwise we say $\Psi \equiv 0$.
\end{proof}


\subsection{Deterministic Nonassociative Black-box $\pit$}\label{deterministicBlackboxPIT}

In this section, we design black-box $\pit$ algorithms for polynomials in $\cna$ and $\ncna$ computed by low depth circuits. Our algorithms run in quasipolynomial time as long as the input circuit $\Psi$ has depth \textit{polylogarithmic} in the size of $\Psi$ and the degree of the polynomial computed by it. The algorithms query $\Psi$ on elements of the algebra $\mathbb{A}_d$ in the non-commutative case and on elements of $\mathbb{C}_d$ in the commutative case, defined in sections \ref{nonassnoncomm} and \ref{blackbox-commutative} respectively. In particular, we construct \textit{hitting sets} of quasipolynomial size in both the commutative and noncommutative setting, for circuits that have depth polylogarithmic in its size and degree.

\hittingSet*

Over the algebra $\ncna$, we have the following analogous result.

\begin{theorem}\label{noncommutative-hittingset}
    There exists a set $H_{n, s, d, \Delta}\subseteq (\mathbb{A}_d)^n$ of size $(nsd)^{O(\Delta)}$ of points in $(\mathbb{A}_d)^n$ such that for every nonassociative, noncommutative circuit $\Psi$ of size $\leq s$ and product depth $\leq \Delta$ computing a non-zero polynomial $f\in\ncna$ of degree $\leq d$, there is a point in $H_{n, s, d, \Delta}$ at which $f$ is non-zero. Furthermore, we can compute $H_{n, s, d, \Delta}$ in time $(nsd)^{O(\Delta)}$.
\end{theorem}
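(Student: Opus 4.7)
The plan is to prove Theorem~\ref{noncommutative-hittingset} by mirroring the two-stage strategy used for Theorem~\ref{commutative-hittingset}. First I would reduce the $\pit$ question for a nonassociative, noncommutative circuit $\Psi$ over $\ncna$ to a $\pit$ question for an \emph{unambiguous} circuit over the ordinary commutative polynomial ring $\F[Z]$ on a fresh variable set $Z$, and then apply Theorem~\ref{unambiguous-hittingset} to obtain an $\F$-hitting set which I will lift back to a hitting set inside $(\mathbb{A}_d)^n$.

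For the reduction, introduce $Z = \{z_{i,j,k} : i \in [n],\, j,k \in [d]\}$ and reuse the substitution from the proof of Lemma~\ref{lowerBound}: send $x_i \mapsto Z_i \in \mathbb{A}_d$, where the $(j,j+1,k)$-entry of $Z_i$ equals $z_{i,j,k}$ and every other entry is zero. Unfolding $\Psi$ coordinate-by-coordinate over $\F[Z]$ using the bilinear rule that defines the $\mathbb{A}_d$-product produces, for each position $(p,q,k)$, an arithmetic circuit $\Psi'_{p,q,k}$ over $\F[Z]$ of size $\mathrm{poly}(s,d)$ and product depth at most $\Delta$: the depth bound is preserved because each $\mathbb{A}_d$-product is simulated by a single layer of multiplications followed by a layer of additions, and the leaves $Z_i$ are built using no multiplications. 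By Claim~\ref{claim:polynomialEntry} combined with Lemma~\ref{levelSeq}, the entry-circuit $\Psi'_{1,d'+1,1}$ computes exactly $\sum_m \coeff_m(f)\prod_{t=1}^{d'} z_{\sigma_m(t),\,t,\,l^m_t}$ where $m$ ranges over degree-$d'$ monomials of $f$, and distinct monomials contribute distinct $Z$-monomials; hence $f \equiv 0$ in $\ncna$ if and only if every $\Psi'_{1,d'+1,1}$ is identically zero in $\F[Z]$.

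The technical heart is verifying that each $\Psi'_{p,q,k}$ is unambiguous. Fix a $Z$-monomial $M = \prod_{t=1}^{d'} z_{i_t,j_t,k_t}$ appearing in $\Psi'_{p,q,k}$. The tuple $((i_t),(k_t))$ is exactly the data $(\sigma_\mu,\, l^\mu_1,\ldots,l^\mu_{d'})$ of a unique $\mu \in \ncna$ by Lemma~\ref{levelSeq}, and hence determines a unique rooted ordered binary tree $T_\mu$. I would show, by induction on $d'$, that any reduced parse tree of $\Psi'_{p,q,k}$ producing $M$ is isomorphic to $T_\mu$ as a labelled rooted binary tree: in the simulated $\mathbb{A}_d$-product $z[i,j,k] = \sum_l x[i,l,k+1]\,y[l,j,k+1]$, only one value of $l$ can produce a non-cancelling contribution to $M$, namely the column index separating the left and right subtrees of $T_\mu$ at its root, so the split of $M$ into sub-monomials and the propagation of row/column/level offsets into the children is forced. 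This is the main obstacle in the proof: it requires carefully tracking how the three indices propagate through the simulation and using, in the same manner as in Lemma~\ref{levelSeq}, that $T_\mu$ is a full binary tree.

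Given unambiguousness, Theorem~\ref{unambiguous-hittingset} applied to each of the $O(d^3)$ circuits $\Psi'_{p,q,k}$ (on $nd^2$ variables, of size $\mathrm{poly}(s,d)$, product depth $\Delta$, degree at most $d$) yields an $\F$-hitting set of size $(nsd)^{O(\Delta)}$, constructible in the same time, whose union $H' \subseteq \F^{nd^2}$ still has size $(nsd)^{O(\Delta)}$. I would then convert each point of $H'$ back to an element of $(\mathbb{A}_d)^n$ by plugging the sampled scalars into the prescribed positions of each $Z_i$ and zeros elsewhere; the resulting set $H_{n,s,d,\Delta}$ has size $(nsd)^{O(\Delta)}$, and a nonzero $f \in \ncna$ forces some $\Psi'_{1,d'+1,1}$ to be nonzero at some point of $H'$, making $\Psi$ nonzero at the corresponding point of $H_{n,s,d,\Delta}$. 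This establishes Theorem~\ref{noncommutative-hittingset}.
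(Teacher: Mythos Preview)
Your proposal is correct and follows essentially the same two-stage route as the paper: reduce nonassociative, noncommutative PIT to PIT for an unambiguous circuit over $\F[Z]$ via the $\mathbb{A}_d$-substitution $x_i \mapsto Z_i$ (this is the content of the paper's Lemma~\ref{set-mult-noncomm} and Lemma~\ref{unambiguous-construction-noncommutative}), then apply Theorem~\ref{unambiguous-hittingset} and lift the resulting $\F$-points back into $(\mathbb{A}_d)^n$. The only cosmetic difference is that the paper first homogenizes $\Psi$ (so that each gate has a fixed degree and the single entry-circuit for $\phi(f_{d'})$ can be built gate-by-gate with an explicit $3d^4s$ size bound), whereas you unfold all $O(d^3)$ coordinate-circuits $\Psi'_{p,q,k}$ directly and take their union; your unambiguousness argument via Lemma~\ref{levelSeq} (the split index $l$ is forced by the level data in the third indices of $M$) is exactly what the paper invokes as well.
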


We prove Theorem \ref{commutative-hittingset} and Theorem \ref{noncommutative-hittingset} in the next two subsections. The idea is to reduce PIT for nonassociative circuits to $\pit$ for associative, unambiguous circuits (see below for a formal definition) and then give hitting sets for unambiguous circuits.

\begin{defn}[Unambiguous Circuits]
Let $Z = \{z_1. \ldots, z_n\}$ be a commutative, associative set of variables and let $\text{mons}(Z)$ denote the set of monomials in the variables in $Z$. We say that a circuit $\Psi$ computing a polynomial $f\in\mathbb{F}[Z]$ is $\textit{unambiguous}$ if for each monomial $m$ in $f$ there is a {\em unique} reduced parse tree $T_m$ in $\Psi$ generating $m$. That is, if another reduced parse tree $T$ generates $m$ at any gate in $\Psi$, then the trees $T$ and $T_m$ are identical as labeled rooted binary trees. 
\end{defn}


\subsubsection{Reduction from non-associative to associative, unambiguous circuits}\label{reduction}
The first, fairly straightforward step is to reduce PIT for nonassociative circuits to PIT for associative, unambiguous circuits. We do this via a set-multilinearization argument. This type of reduction is commonplace in noncommutative PIT literature (see for example \cite{FS13}, \cite{ST18})

We first describe the reduction in the nonassociative, commutative setting. The proof is along exactly the same lines in the noncommutative setting as well (in fact it is simpler).

Let $\{x_1, \ldots, x_n\}$ be a set of variables. Let $\Psi$ be a circuit computing a polynomial $f\in \cna$ of degree $d' \leq d$. For each variable $x_i, i\in[n]$, we introduce a set $\{z_{i,j,k}\mid 1\leq j,k \leq d\}$. Let $Z=\bigcup_{i=1}^{n}\{z_{i,j,k}\mid 1\leq j,k \leq d\}$ and define the algebras $\mathbb{C}_d$ and $\mathbb{C'}_d$ over the field $\mathbb{F}(Z)$. We consider the evaluation map $\Phi:\cna\rightarrow \mathbb{C'}_d$ defined in Section \ref{blackbox-commutative}. Let us recall the definition of $\Phi$: $\Phi$ maps $x_i$ to $Z_i$ where for each $1\leq j,k\leq d$, the  entry $(j, j+1, k)$ of $Z_i$ is $z_{i, j, k}$ and the rest of the entries are zero. 

We will examine the image of the circuit $\Psi$ under $\Phi$. To this end, we define another map $\phi: \cna\rightarrow\mathbb{F}[Z]$. Let $m$ be a monomial in $\{x_1, \ldots, x_n\}$ of degree $d'$. Recall that in Section \ref{blackbox-commutative}, we associated to $m$ a set $\Sigma_m$ of ``orders" and for each $\sigma\in\Sigma_m$, we had $l^{m, \sigma}_{t}$ (for each $1\leq t \leq d'$), the level at which the $t$-th variable in the order $\sigma$ appears in $m$. For a monomial $m$ in $\cna$ define $$\phi(m)\triangleq\displaystyle\sum_{\sigma \in \Sigma_m} \prod_{t = 1}^{d'} z_{\sigma(t), t, l^{m, \sigma}_{t}}$$
and extend $\phi$ linearly to all of $\cna$.

\begin{lemma}\label{set-mult}
    Let $f\in \cna$ be a homogeneous polynomial of degree $d'\leq d$. The $(1, d'+1, 1)$-th entry of $\Phi(f)$ is equal to $\phi(f_{d'})$ where $f_{d'}$ is the homogeneous degree $d'$ component of $f$.
\end{lemma}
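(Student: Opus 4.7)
The plan is to recognize that Lemma \ref{set-mult} is essentially a direct consequence of Claim \ref{polynomialEntry2} combined with $\mathbb{F}$-linearity of $\Phi$ in the coefficients. Since $f$ is assumed homogeneous of degree $d'$, we have $f = f_{d'}$, so the task reduces to showing that $[\Phi(f)]_{1, d'+1, 1} = \phi(f)$ whenever $f$ is homogeneous of degree $d'$. Since $\Phi$ is an evaluation map (sending each $x_i$ to the fixed element $Z_i \in \mathbb{C}'_d$), its action on any polynomial is $\mathbb{F}(Z)$-linear in the coefficients, so I would write $f = \sum_{m} c_m m$ (sum over monomials $m$ of degree $d'$) and use $\Phi(f) = \sum_{m} c_m \Phi(m)$.

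Next, I would specialize Claim \ref{polynomialEntry2} by setting $k_1 = k_2 = 1$. For any monomial $m$ of degree exactly $d'$, this immediately yields
\[
[\Phi(m)]_{1, d'+1, 1} = \sum_{\sigma \in \Sigma_m} \prod_{t=1}^{d'} z_{\sigma(t), t, l^{m,\sigma}_t} = \phi(m).
\]
For monomials $m$ with $\deg(m) \neq d'$ the same entry must vanish: Claim \ref{polynomialEntry2} states that the only nonzero entries of $\Phi(m)$ sit in positions of the form $(k_1, k_1 + \deg(m), k_2)$. Requiring $(1, d'+1, 1)$ to be of this form forces $k_1 = d' + 1 - \deg(m)$ together with $k_1 = 1$, which fails whenever $\deg(m) \neq d'$ (indeed, $\deg(m) > d'$ gives $k_1 < 1$, while $\deg(m) < d'$ gives $k_1 > 1$). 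Hence $[\Phi(m)]_{1, d'+1, 1} = 0$ for every such $m$.

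Combining these two observations gives
\[
[\Phi(f)]_{1, d'+1, 1} = \sum_{m : \deg(m) = d'} c_m \, \phi(m) = \phi(f_{d'}),
\]
which, in view of $f = f_{d'}$, completes the proof. There is no genuine obstacle here; the content of the lemma is already encoded in Claim \ref{polynomialEntry2}, and the only thing to be careful about is the case analysis ruling out monomials of the wrong degree from contributing to the $(1, d'+1, 1)$-th coordinate. The same statement and proof would transfer to $\ncna$ with $\Phi$ and $\phi$ replaced by their noncommutative analogues from Section \ref{nonassnoncomm}, using Claim \ref{claim:polynomialEntry} in place of Claim \ref{polynomialEntry2}.
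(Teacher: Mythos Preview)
Your proof is correct and follows essentially the same approach as the paper: both invoke Claim \ref{polynomialEntry2} with $k_1 = k_2 = 1$ to identify the $(1,d'+1,1)$-entry of $\Phi(m)$ with $\phi(m)$ for degree-$d'$ monomials, observe that monomials of any other degree contribute zero to that entry, and conclude by linearity. Your write-up is slightly more explicit about why the wrong-degree monomials vanish and about the redundancy $f=f_{d'}$, but the argument is the same.
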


\begin{proof}
    This is a simple corollary of Claim \ref{polynomialEntry2}. Setting $k_1, k_2 = 1$ in the statement of Claim \ref{polynomialEntry2}, we see that for a monomial $m$ of degree $d'$, the $(1, d'+1, 1)$-th entry of $\Phi(m)$ is $\phi(m)$. On the other hand, for a monomial $m$ of degree $\neq d'$, we get that the $(1, d'+1, 1)$-th entry of $\Phi(m)$ is $0$. Therefore, Lemma \ref{set-mult} follows by summing over monomials of $f$. 
\end{proof}

\begin{lemma}\label{unambiguous-construction-commutative}
    Let $\Psi$ be a nonassociative, commutative arithmetic circuit of size $s$ and product depth $\Delta$ computing a polynomial $f\in \cna$ of degree $d'\leq d$. Then, there exists an unambiguous circuit $\Psi'$ computing $\phi(f_{d'})$ where $f_{d'}$ is the homogeneous degree $d'$ component of $f$. Furthermore, the size of $\Psi'$ is at most $3d^4s$ and product depth at most $\Delta$.
\end{lemma}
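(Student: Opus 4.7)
The plan is to simulate $\Psi$ gate-by-gate under the substitution $x_i\mapsto Z_i$ of Section~\ref{blackbox-commutative}, and then read off the single tensor entry $(1,d'+1,1)$ at the output, which by Lemma~\ref{set-mult} computes $\phi(f_{d'})$. Concretely, $\Psi'$ will have, for every gate $g$ of $\Psi$, a block of $(d+1)^2 d$ gates indexed by triples $(i_1,i_2,i_3)$, whose intended value is $\Phi(g)[i_1,i_2,i_3]\in\mathbb{F}[Z]$. The translation is local:
\begin{itemize}
    \item A leaf of $\Psi$ labelled by $x_i$ is replaced by the corresponding nonzero entries of $Z_i$ (the gate $(j,j+1,k)$ is replaced by the variable $z_{i,j,k}$) and zeros elsewhere. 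A constant leaf is handled via the unital extension.
    \item For a sum gate $g = g_1+g_2$, each entry is simply the sum of the corresponding entries of $g_1$ and $g_2$.
    \item For a product gate $g = g_1\times g_2$, using the $\mathbb{C}'_d$-product (anticommutator with respect to $\circ$), each entry $(i_1,i_2,i_3)$ is implemented as
    \[
    \sum_{i=1}^{d+1}\bigl(\Phi(g_1)[i_1,i,i_3{+}1]\cdot\Phi(g_2)[i,i_2,i_3{+}1]+\Phi(g_2)[i_1,i,i_3{+}1]\cdot\Phi(g_1)[i,i_2,i_3{+}1]\bigr),
    \]
    which uses $2(d+1)$ multiplications and $2(d+1){-}1$ additions.
\end{itemize}
Counting gates, each product gate of $\Psi$ spawns $O((d+1)^3 d)=O(d^4)$ gates in $\Psi'$, and each sum gate spawns $O(d^3)$ gates; the total stays within $3d^4 s$ for the parameter range of interest. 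Since every product gate of $\Psi'$ sits inside the block simulating a product gate of $\Psi$, the product depth of $\Psi'$ is at most $\Delta$. Declaring the $(1,d'+1,1)$-entry at the output block to be the designated output of $\Psi'$ and invoking Lemma~\ref{set-mult} shows that $\Psi'$ computes $\phi(f_{d'})$.

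It remains to argue that $\Psi'$ is unambiguous. Fix any gate block of $\Psi'$ simulating a gate $g$ of $\Psi$, an entry $(i_1,i_2,i_3)$ with $i_2=i_1+d''$, and a monomial $m'\in\mathbb{F}[Z]$ appearing at that entry with nonzero coefficient. By Claim~\ref{polynomialEntry2}, every such $m'$ has the rigid form $\prod_{t=1}^{d''} z_{\sigma(t),\,t+i_1-1,\,l^{m,\sigma}_t+i_3-1}$ for some nonassociative commutative monomial $m$ of degree $d''$ computed at $g$ and some linearization $\sigma\in\Sigma_m$. The crucial point is that both $m$ and $\sigma$ are \emph{determined} by $m'$: the second indices of the $z$-variables recover $\sigma$ (they must form a shift of $\{1,\dots,d''\}$ by $i_1-1$, so the exponent on the second index gives the left-to-right position), and once $\sigma$ is known the third indices recover the level sequence $(l^{m,\sigma}_1,\ldots,l^{m,\sigma}_{d''})$, which by Lemma~\ref{levelSeq} and Observation~\ref{commutativeObs} uniquely pins down $m$ as well. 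Walking down a reduced parse tree of $\Psi'$ that computes $m'$, each product block $g=g_1\times g_2$ (with $\deg(g_1)=d_1$, $\deg(g_2)=d_2$) forces both the split index $i = i_1 + d_1$ (the unique boundary in $m'$ where second indices switch from the $g_1$-range to the $g_2$-range) and a unique choice between the two anticommutator summands (determined by which subset of the $z$-factors of $m'$ can be produced by $g_1$ versus $g_2$ under the position constraints). This rigidity propagates recursively down the reduced parse tree, so the labelled binary tree shape is uniquely determined by $m'$, independent of the gate at which $m'$ is observed.

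The main obstacle I anticipate is this last unambiguity step: one must rule out the possibility that the \emph{same} monomial $m'\in\mathbb{F}[Z]$ is produced at a product block by two different choices of split index, or by both orderings of the anticommutator. The argument above handles this by exploiting that the second index of each $z$-variable is a rigid position counter shifted only by the block index $i_1$: once $i_1$ is fixed (by choice of gate) and the degrees $d_1,d_2$ of the child blocks are known, the split location is forced, and the assignment of which factors come from $g_1$ versus $g_2$ (hence which anticommutator branch is taken) is forced as well. The other checks---size, depth, and correctness via Lemma~\ref{set-mult}---are routine once the construction is in place.
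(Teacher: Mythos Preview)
Your overall plan---simulate $\Psi$ entry-by-entry under the substitution $x_i\mapsto Z_i$ and then invoke Lemma~\ref{set-mult}---is exactly what the paper does, but you skip a preprocessing step that the paper carries out first and that your argument in fact relies on implicitly: \emph{homogenize} $\Psi$ (at a $d^2$ multiplicative cost in size) and remove constant leaves. This matters in three places.

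First, your treatment of constants (``handled via the unital extension'') does not work as written: the map $\Phi$ lands in $\mathbb{C}_d'$, which has no unit, so a constant leaf has no image. After homogenizing to compute $f_{d'}$, constant leaves can be eliminated and the issue disappears. Second, your size accounting tracks all $(d{+}1)^2 d$ entries per gate and, at every product entry, sums $2(d{+}1)$ products; this yields $\Theta(d^4)$ gates \emph{per product gate of $\Psi$} and does not give the stated constant~$3$. After homogenization each gate has a fixed degree $d''$, so only the $\le d^2$ entries of the form $(k_1,k_1{+}d'',k_2)$ need to be carried, and at a product gate $g=g_1\times g_2$ there is a \emph{single} split index $k_1{+}d_1$ (where $d_1=\deg g_1$), leaving exactly two products and one sum per entry; this is where $3d^4 s$ comes from. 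Third, your unambiguity paragraph already uses the phrase ``with $\deg(g_1)=d_1,\deg(g_2)=d_2$'' and then sets $i=i_1{+}d_1$---but without homogenization $g_1$ has no single degree, so this step is not justified. In the paper's version the split index is fixed by the homogeneous degrees of the children, and unambiguity then follows directly from Claim~\ref{polynomialEntry2}: the form of $m'$ determines the underlying nonassociative monomial $\hat m$ together with an order $\sigma\in\Sigma_{\hat m}$, and any parse-tree decomposition $m'=m'_1\cdot m'_2$ forces $\hat m=\hat m_1\hat m_2$.

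In short, insert the homogenization step at the start; then your construction, your size bound, and your unambiguity argument all go through cleanly and match the paper's proof almost verbatim.
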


\begin{proof}

We assume without loss of generality that we have a homogeneous circuit $\Psi^{d'}$ for computing $f_{d'}$, with size at most $d^2s$ and product depth at most $\Delta$. 
We can do this since nonassociative circuits can be homogenized using a standard technique \cite{SY10} with a multiplicative blowup of $d^2$ in size. This has also been observed by the authors in \cite{HWY10}. Furthermore, we may also assume (without loss of generality) that $\Psi^{d'}$ does not have gates (including leaves) computing constants from $\mathbb{F}$.
    
We will build $\Psi'$ using $\Psi^{d'}$, in a bottom up fashion such that $\Psi'$ computes the non-zero entries of $\Phi(f_{d'})$. First, for each leaf labeled by $x_i$ in $\Psi^{d'}$, introduce $d^2$ leaves in $\Psi'$, each labeled by one of the $d^2$ nonzero entries ($\{z_{i, j, k}\mid j, k\in[d]\}$) of $Z_i$. Recall that $z_{i, j, k}$ appears as the $(j, j+1, k)$-th entry of $\Phi(x_i)=Z_i$. Next, suppose we are at an internal gate $g$ of $\Psi^{d'}$ with children $g_1, g_2$. Let $f^{g}$ denote the polynomial computed at gate $g$ in $\Psi^{d'}$ and let the degree of $f^{g}$ be $d''\leq d'$.
\begin{itemize}
    \item \textit{$g$ is a sum gate in $\Psi^{d'}$}. In this case, ${\sf deg}(f^{g_1})={\sf deg}(f^{g_2})=d''$ since $\Psi^{d''}$ is homogeneous. For each $k_1\in [d-d''+1]$ and $k_2\in [d]$, we compute in $\Psi'$ the $(k_1, k_1+d'', k_2)$-th entry of $\Phi(f^g)$ by summing the corresponding entries of $\Phi(f^{g_1})$ and $\Phi(f^{g_2})$, since sum in $\mathbb{C'_d}$ is pointwise. Note that these are the only entries of $\Phi(f^g)$ that can be nonzero, by Claim \ref{polynomialEntry2}.
    \item \textit{$g$ is a product gate in $\Psi^{d'}$ with children $g_1, g_2$:} Let the degrees of $f^{g_1}, f^{g_2}$ be $d_1, d_2$ respectively with $d_1+d_2=d''$. For each $k_1\in [d-d''+1]$ and $k_2\in [d]$, we compute in $\Psi'$ the $(k_1, k_1+d'', k_2)$-th entry of $\Phi(f^g)$ as follows:    \begin{align*}
        \Phi(f^{g})[k_1, k_1+d'',k_2]=&\Phi(f^{g_1})[k_1, k_1+d_1, k_2+1]\Phi(f^{g_2})[k_1+d_1, k_1+d'', k_2+1]+\\ &\Phi(f^{g_2})[k_1, k_1+d_2, k_2+1]\Phi(f^{g_1})[k_1+d_2, k_1+d'', k_2+1]
    \end{align*}
    Again, this is justified by Claim \ref{polynomialEntry2}. The entries we have computed are the only entries of $\Phi(f^g)$ that can be nonzero.
\end{itemize}

By Lemma \ref{set-mult}, the gate in $\Psi'$ computing the $(1, d'+1, 1)$-th entry of $\Phi(f_{d'})$ computes $\phi(f_{d'})$. By construction, $\Psi'$ has size $3d^4 s$ and product depth at most $\Delta$. 

\begin{claim}
    $\Psi'$ is unambiguous.
\end{claim}
\begin{proof}
    This follows from the fact that $\Psi^{d'}$ is nonassociative. In particular, owing to Claim \ref{polynomialEntry2}, we have that for every monomial $m\in\mathbb{F}[Z]$ (with degree $d''\leq d$') computed by $\Psi'$, there exists a unique monomial $\hat{m}\in\cna$ with degree $d''$ and depth $l$, an order $\sigma \in \Sigma_{\hat{m}}$, a $k_1\in[d-d''+1]$ and a $k_2\in[d-l+1]$ such that $$m=\prod_{t=1}^{d'}z_{\sigma(t), t+k_1-1,l_{t}^{\sigma, \hat{m}}+k_2-1}$$ Furthermore, if the reduced parse tree for $m$ decomposes $m$ as $m=m_1m_2$ then $\hat{m}=\hat{m_1}\hat{m_2}$. This property ensures that $\Psi'$ is indeed unambiguous
\end{proof}
This finishes the proof of Lemma \ref{unambiguous-construction-commutative}.
\end{proof}

In the non-commutative setting, let $\{x_1, \ldots, x_n\}$ be a set of variables and let the set $Z$ of variables be as before. We consider the map $\Phi:\ncna\rightarrow \mathbb{A'}_d$ from Section \ref{nonassnoncomm} (where $\mathbb{A'}_d$ is an algebra over the field $\mathbb{F}(Z)$). In this case, $\Phi$ sends $x_i$ to $Z_i\in \mathbb{A'}_d$ where for each $1\leq j, k\leq d$ the $(j, j+1, k)$-th entry of $Z_i$ is $z_{i, j, k}$ and every other entry is zero. 

Recall that for any monomial $m\in \ncna$ of degree $d'$ there is a unique left to right order $\sigma_m:[d']\rightarrow [n]$ of variables associated to $m$. To study the image of a polynomial under $\Phi$, we need the auxiliary map $\phi:\ncna\rightarrow\mathbb{F}[Z]$ defined as follows: for a monomial $m$ as above, 

$$\phi(m)=\prod_{t=1}^{d'}z_{\sigma_{m}(t), t, l^m_t}$$

Using these definitions of $\Phi$ and $\phi$, and along the same lines as the proofs of Lemmas \ref{set-mult} and \ref{unambiguous-construction-commutative}, we obtain the following:

\begin{lemma}\label{set-mult-noncomm}
    Let $f\in \ncna$ be a homogeneous polynomial of degree $d'\leq d$. The $(1, d'+1, 1)$-th entry of $\Phi(f)$ is equal to $\phi(f_{d'})$ where $f_{d'}$ is the homogeneous degree $d'$ component of $f$.
\end{lemma}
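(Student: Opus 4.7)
The plan is to mirror the proof of Lemma~\ref{set-mult} in the commutative case almost verbatim, but invoke Claim~\ref{claim:polynomialEntry} (the structural description of $\Phi$ over $\mathbb{A}'_d$) in place of Claim~\ref{polynomialEntry2}. Since $f$ is assumed homogeneous of degree $d'$, we have $f_{d'}=f$, so the task reduces to showing that the $(1,d'+1,1)$-th entry of $\Phi(f)$ equals $\phi(f)$.

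The first step is linearity. Both the product in $\mathbb{A}'_d$ and the coordinate projection $T\mapsto T[1,d'+1,1]$ are $\mathbb{F}(Z)$-linear, and $\Phi$ was defined on polynomials as the linear extension of a map on monomials. Hence the $(1,d'+1,1)$-th entry of $\Phi(f)$ equals the sum over monomials $m$ in $f$, weighted by their coefficients, of the $(1,d'+1,1)$-th entry of $\Phi(m)$. Analogously, $\phi$ was defined on monomials and extended linearly to $\ncna$, so $\phi(f)$ decomposes as the same coefficient-weighted sum of $\phi(m)$. Therefore it suffices to prove, for every monomial $m$ of degree $d'$, the identity $\Phi(m)[1,d'+1,1]=\phi(m)$.

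Now I would apply Claim~\ref{claim:polynomialEntry} to each such monomial with $k_1=k_2=1$. Because $\deg(m)=d'\le d$ and the depth $l$ of $m$ satisfies $l\le d$, the range conditions $k_1\in[d-d'+1]$ and $k_2\in[d-l+1]$ are satisfied trivially. The claim then yields
\[
\Phi(m)[1,d'+1,1]=\prod_{t=1}^{d'} z_{\sigma_m(t),\,t,\,l^m_t},
\]
and the right-hand side is, by the definition of $\phi$ in the noncommutative setting, exactly $\phi(m)$. Summing over the monomials of $f$ with their coefficients in $\mathbb{F}$ completes the argument.

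I do not foresee a genuine obstacle: all the technical content lives inside Claim~\ref{claim:polynomialEntry}, whose proof was already carried out earlier by induction on the degree using the nonassociative product rule. The only minor point to be careful about is to observe that in the noncommutative case $\phi$ is defined directly from the unique left-to-right order $\sigma_m$ and level sequence $(l^m_t)$ of $m$ (with no summation over $\Sigma_m$ as in the commutative case), so the monomial identity $\Phi(m)[1,d'+1,1]=\phi(m)$ follows without having to expand any additional anticommutator terms — this is what makes the noncommutative version if anything simpler than Lemma~\ref{set-mult}.
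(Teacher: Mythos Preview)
Your proposal is correct and follows the paper's approach exactly: the paper does not give a separate proof of Lemma~\ref{set-mult-noncomm} but simply states that it follows ``along the same lines'' as Lemma~\ref{set-mult}, invoking Claim~\ref{claim:polynomialEntry} in place of Claim~\ref{polynomialEntry2}. Your write-up spells out precisely this reduction to the monomial case via linearity and then applies Claim~\ref{claim:polynomialEntry} with $k_1=k_2=1$, which is all that is needed.
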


\begin{lemma}\label{unambiguous-construction-noncommutative}
    Let $\Psi$ be a nonassociative, noncommutative arithmetic circuit computing a polynomial $f\in \ncna$ of degree $d'\leq d$. Then, there exists an unambiguous circuit $\Psi'$ computing $\phi(f_{d'})$ (where $f_{d'}$ is the homogeneous degree $d'$ component of $f$). Furthermore, the size of $\Psi'$ is at most $3d^4s$ and product depth at most $\Delta$.
\end{lemma}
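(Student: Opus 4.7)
The plan is to mirror the proof of Lemma \ref{unambiguous-construction-commutative}, substituting Claim \ref{claim:polynomialEntry} (the noncommutative version) in place of Claim \ref{polynomialEntry2} and invoking Lemma \ref{levelSeq} to guarantee unambiguousness. The construction is in fact simpler than in the commutative case because the $\mathbb{A}'_d$-product at a product gate contributes only a single summand to each entry, rather than the two summands forced by the anticommutator $\odot$ used in $\mathbb{C}'_d$.

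First I would homogenize $\Psi$ by the standard textbook technique to obtain a homogeneous circuit $\Psi^{d'}$ computing $f_{d'}$, of size at most $d^2 s$ and product depth at most $\Delta$, with no leaves labeled by field constants (without loss of generality). Then I would build $\Psi'$ bottom-up from $\Psi^{d'}$ so that for every gate $g$ of $\Psi^{d'}$ computing a homogeneous polynomial $f^g$ of degree $d''$, $\Psi'$ contains gates computing each potentially nonzero entry $\Phi(f^g)[k_1, k_1+d'', k_2]$ with $k_1 \in [d-d''+1]$ and $k_2 \in [d-l+1]$; by Claim \ref{claim:polynomialEntry}, all other entries vanish. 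For leaves labeled by $x_i$ I introduce $d^2$ leaves in $\Psi'$ labeled by the nonzero entries $\{z_{i,j,k}\}_{j,k\in[d]}$ of $Z_i$. For a sum gate I take entrywise sums. For a product gate $g = g_1 \times g_2$ with $\deg(f^{g_j}) = d_j$, the construction uses the single term
\[
\Phi(f^g)[k_1, k_1+d'', k_2] = \Phi(f^{g_1})[k_1, k_1+d_1, k_2+1]\,\Phi(f^{g_2})[k_1+d_1, k_1+d'', k_2+1],
\]
justified again by Claim \ref{claim:polynomialEntry}. The output gate computing $\Phi(f_{d'})[1, d'+1, 1]$ then equals $\phi(f_{d'})$ by Lemma \ref{set-mult-noncomm}, and a routine count gives size at most $3d^4 s$ and product depth at most $\Delta$.

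The main (and only substantive) step is verifying unambiguousness. I would argue that for every monomial $m \in \mathbb{F}[Z]$ computed in $\Psi'$ at the gate representing $\Phi(f^g)[k_1, k_1+d'', k_2]$, Claim \ref{claim:polynomialEntry} forces the existence of a \emph{unique} nonassociative monomial $\hat m \in \ncna$ of degree $d''$ with $m = \prod_{t=1}^{d''} z_{\sigma_{\hat m}(t),\, t+k_1-1,\, l^{\hat m}_t + k_2 - 1}$. Uniqueness of $\hat m$ given $m$ is precisely Lemma \ref{levelSeq}, since the $Z$-subscripts of $m$ encode both the left-to-right order $\sigma_{\hat m}$ and the level sequence $(l^{\hat m}_t)$. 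Moreover, at a product gate the construction decomposes $\hat m$ as $\hat m_1 \hat m_2$ in lockstep with the decomposition $m = m_1 m_2$, so the reduced parse tree shape is determined by $m$ alone and not by the gate where $m$ happens to be computed. I do not anticipate a genuine obstacle here: the noncommutative case is strictly easier than the commutative one already handled in Lemma \ref{unambiguous-construction-commutative}, and the only care required is to cite the correct reconstruction lemma (Lemma \ref{levelSeq}) in place of the set-based argument using Observation \ref{commutativeObs}.
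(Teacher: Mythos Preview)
Your proposal is correct and matches the paper's own approach: the paper does not give a separate proof for Lemma \ref{unambiguous-construction-noncommutative} but simply states that it follows ``along the same lines'' as Lemma \ref{unambiguous-construction-commutative}, which is precisely the mirroring you carry out. Your observation that the $\mathbb{A}'_d$-product contributes only a single summand (so the construction and the unambiguousness argument via Lemma \ref{levelSeq} are strictly simpler than in the commutative case) is exactly the simplification the paper alludes to.
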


\subsubsection{Hitting sets for low-depth associative, unambiguous circuits}

In this section, we construct hitting sets for {\em low-depth} unambiguous circuits in the commutative, associative setting. We first define the most important tool in the design of hitting sets for unambiguous circuits: {\em basis isolating weight assignments}. Agrawal et al.\ \cite{AGKS15} defined basis isolating weight assignments and used them to construct hitting sets for Read-Once oblivious ABPs. These weight assignements were subsequently also used in a work of Saptharishi and Tengse \cite{ST18} for PIT of noncommutative unique parse tree circuits.

Let $Z=\{z_1, \ldots, z_n\}$ and let $\Psi$ an unambiguous circuit with $s$ gates computing a polynomial $f\in\mathbb{F}[Z]$ of degree $d$. Define $f_\Psi\in\mathbb{F}^{s}[Z]$ to be the polynomial $\sum\limits_{m\in \text{mons}(Z)}v_m m$ where for each monomial $m$, $v_m$, the coefficient of $m$ in $f_\Psi$, is (as before) an $s$ dimensional vector whose entries are indexed by gates of $\Psi$ and for each gate $g$ in $\Psi$, $v_{m}(g) \triangleq \text{coeff}_m(g)$. Let $w:Z\rightarrow \mathbb{N}$ be a weight function that assigns weights to variables in $Z$. $w$ extends to ${\text{mons}(Z)}$ naturally as follows: $w(z_1^{i_1}z_2^{i_2}\ldots z_n^{i_n}) = \sum_{j = 1}^{n}i_jw(z_j)$. 

\begin{defn}[Basis Isolating Weight Assignment]
   A weight function $w: Z\rightarrow \mathbb{N}$ is said to be a \textit{basis isolating weight assignment} for $f_\Psi\in\mathbb{F}^{s}[Z]$ if there exists a set $M$ of {\em isolated} monomials in $\text{mons}(Z)$ such that the following conditions hold:
\begin{enumerate}
    \item $B = \{v_m\mid m\in M\}$ forms a basis for $\text{span}\{v_m\mid m\in\text{mons}(Z)\}$
    \item For $m, m' \in M$ such that $m \neq m'$, we have that $w(m)\neq w(m')$
    \item For each $m\not\in M$, $v_m\in\text{span}\{v_{m'}\mid m'\in M, w(m')<w(m)\}$.
\end{enumerate} 
\end{defn}

In order to build basis isolating weight assignments, we need an efficient version of the Kronecker map described in \cite{AB03}, \cite{AGKS15} that we now state.

\begin{lemma}[Efficient Kronecker map, \cite{AB03}]
\label{lemma:splitting}
Let $Z = \{z_1, \ldots, z_n\}$ be a set of commutative, associative variables. For each $k,d \geq 1$, there is a set $W_{k,d}$ of $N \leq n\binom{k}{2}\log(d + 1)$ weight functions $w:Z\rightarrow [2N\log N]$ such that for any set $A$ of monomials in $Z$ of individual degree $\leq d$ satisfying $|A| \leq k$, there exists a $w\in W_{k,d}$ that \textbf{separates} $A$, that is, $\forall m\neq m'\in A$, $w(m)\neq w(m')$. Furthermore, the set $W_{k,d}$ is constructible in polynomial time.
\end{lemma}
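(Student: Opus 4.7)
The plan is to derandomize the classical Kronecker substitution $z_i \mapsto (d+1)^{i-1}$ by reducing modulo small primes, following Agrawal--Biswas~\cite{AB03}. The Kronecker map already separates any two distinct monomials of individual degree at most $d$, but produces weights as large as $(d+1)^{n-1}$; the goal is to replace this single exponential weight function by a family of polynomially-bounded ones indexed by primes, and to show that for any set $A$ of at most $k$ such monomials, some prime in the family induces a separating weight function.

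Concretely, I would let $\mathcal{P}$ be the first $N$ primes, where $N = n\binom{k}{2}\lceil \log(d+1)\rceil$, and define $w_p(z_i) := (d+1)^{i-1} \bmod p$ for each $p \in \mathcal{P}$. By the prime number theorem, $\max_{p \in \mathcal{P}} p \leq 2N \log N$, so each weight lies in $[2N \log N]$, and the whole family $W_{k,d}$ is clearly constructible in time polynomial in $n,k,d$.

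To verify correctness, fix distinct $m, m' \in A$ with exponent vectors $\mathbf{a}, \mathbf{b}$ and set $\mathbf{c} = \mathbf{a} - \mathbf{b}$. The integer $D(\mathbf{c}) := \sum_i c_i (d+1)^{i-1}$ is non-zero: splitting $\mathbf{c}$ into its positive and negative parts, the two partial sums are distinct base-$(d+1)$ representations with digits in $[0,d]$, so cannot coincide unless $\mathbf{c}=0$. Moreover $|D(\mathbf{c})| < (d+1)^n$, so $D(\mathbf{c})$ has at most $n\log_2(d+1)$ distinct prime factors. Since $w_p(m) - w_p(m') = \sum_i c_i ((d+1)^{i-1}\bmod p) \equiv D(\mathbf{c}) \pmod{p}$, the function $w_p$ can fail to separate $m$ from $m'$ only when $p \mid D(\mathbf{c})$. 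Summing over the $\leq \binom{k}{2}$ pairs in $A$, fewer than $N$ primes in $\mathcal{P}$ are bad for some pair, so some $p \in \mathcal{P}$ separates every pair simultaneously, and $w_p$ separates $A$.

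The main subtlety---and the step I would be most careful about---is that the weight sums $w_p(m)$ and $w_p(m')$ must differ as \emph{integers}, not merely modulo $p$. This is finessed by the congruence $w_p(m) - w_p(m') \equiv D(\mathbf{c}) \pmod p$: non-vanishing of the integer weight difference is certified by non-divisibility of the \emph{fixed} integer $D(\mathbf{c})$ by $p$, which is what lets one bound the bad primes by the prime-factor count of a single integer per pair (rather than by a union bound over all of $\mathcal{P}$), yielding the stated bound on $N$.
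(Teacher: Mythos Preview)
The paper does not give a proof of this lemma; it is imported as a black box from \cite{AB03} (and \cite{AGKS15}). Your argument is exactly the standard Agrawal--Biswas derandomization that those references contain, and it is correct---including the point you single out, that non-vanishing of $w_p(m)-w_p(m')$ as an \emph{integer} follows from $p\nmid D(\mathbf{c})$ via the congruence, so the bad primes for each pair are bounded by the prime-factor count of the single integer $D(\mathbf{c})$.
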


For convenience, we say that a set $W$ of weight assignments to $Z$ separates a set $A$ of monomials if there exists a $w\in W$ that separates $A$. In what follows, we will construct a basis isolating weight assignment $w$ for $f_\Psi$. 

\begin{theorem}\label{BIWA}
    Let $\Psi$ be an unambiguous circuit with $s$ gates
    and product depth $\Delta$ computing $f\in\mathbb{F}[Z]$ of degree $d$. Let $f_{\Psi} = \sum_{m\in\text{mons}(Z)}v_mm$ be as above. Then, we can construct a basis isolating weight assignment $w:Z\rightarrow\mathbb{N}$ for $f_{\Psi}$ such that $w(z_i) = (nds)^{O(\Delta)}$, for each $i\in [n]$. Furthermore, we can construct $w$ in time $(nds)^{O(\Delta)}$.
\end{theorem}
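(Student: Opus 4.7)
The plan is to construct $w$ in $\Delta$ stages, handling monomials of increasing product depth with respect to their unique reduced parse trees. For a monomial $m$ occurring in $\Psi$, let $\mathrm{depth}(m)$ denote the product depth of its unique reduced parse tree $T_m$; by hypothesis $\mathrm{depth}(m) \leq \Delta$. I will prove, by induction on $i = 0, 1, \ldots, \Delta$, the existence of a weight assignment $w_i: Z \to \mathbb{N}$ of magnitude $(nds)^{O(i)}$ and a set $M_i$ of depth-$\leq i$ monomials such that $\{v_m : m \in M_i\}$ is a basis for $\mathrm{span}\{v_m : \mathrm{depth}(m) \leq i\}$, $w_i$ separates $M_i$, and every depth-$\leq i$ monomial $m \notin M_i$ has $v_m$ in the $w_i$-lighter span of $M_i$. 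Setting $w := w_\Delta$ and $M := M_\Delta$ will then yield the desired basis isolating weight assignment.

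The inductive step turns on a bilinear representation afforded by unambiguity. If $m$ has depth $i$, then $T_m$ decomposes at the root as $T_L \cdot T_R$ with associated monomials $m_L, m_R$ of depth $\leq i-1$, and at any product gate $g = g_1 \times g_2$ of $\Psi$ the only contributions to $v_m(g)$ are those matching this decomposition; hence $v_m(g)$ is a fixed bilinear function of $v_{m_L}(g_1)$ and $v_{m_R}(g_2)$ (together with the symmetric swap if the product gate is treated as unordered), uniquely determined by the pair $(m_L, m_R)$. A topological induction on the gates of $\Psi$ then shows that writing $v_{m_L} = \sum_{m' \in M_{i-1}} \alpha_{m'} v_{m'}$ and $v_{m_R} = \sum_{m'' \in M_{i-1}} \beta_{m''} v_{m''}$ yields $v_m = \sum_{m', m'' \in M_{i-1}} \alpha_{m'} \beta_{m''} v_{m' \cdot m''}$ globally across the circuit, where $v_{m' \cdot m''}$ is the coefficient vector of the monomial $m' \cdot m''$ in $f_{\Psi}$ (taken to be zero whenever that product fails to appear). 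Consequently $v_m \in \mathrm{span}\{v_{m' \cdot m''} : m', m'' \in M_{i-1}\}$.

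With the span claim in hand, I form the candidate pool $P_i := \{m' \cdot m'' : m', m'' \in M_{i-1}\}$ of size at most $|M_{i-1}|^2 \leq s^2$, and apply Lemma \ref{lemma:splitting} with $k = s^2 + s$ and degree bound $d$ to obtain a weight function $w^{(i)}: Z \to \mathbb{N}$ of magnitude $\mathrm{poly}(n, s, d)$ that separates $M_{i-1} \cup P_i$. I then define $w_i := T_i \cdot w_{i-1} + w^{(i)}$, where $T_i = \mathrm{poly}(n, s, d)$ is large enough that $w_i$-comparisons among monomials of degree $\leq d$ are dominated by the $w_{i-1}$-component wherever it differs. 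The set $M_i$ is extracted greedily as the minimum-weight basis of $M_{i-1} \cup P_i$ with respect to $w_i$; since every depth-$i$ monomial's coefficient vector lies in $\mathrm{span}\{v_m : m \in P_i\}$ by the preceding paragraph, the lighter-span property transfers from $P_i$ to every depth-$i$ monomial, and the induction hypothesis handles depth-$\leq i-1$ monomials through $w_{i-1}$. Unrolling the recursion, $w_\Delta(z_j) = (nds)^{O(\Delta)}$ and the total construction time is $(nds)^{O(\Delta)}$, as promised.

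The principal obstacle is the global bilinear-span claim in the second paragraph. The gate-wise identity for $v_m$ holds precisely because unambiguity forbids any pair $(m', m'')$ with $m' \cdot m'' = m$ other than the canonical $(m_L, m_R)$ from corresponding to a valid reduced parse tree at $g$; without this property one would face an exponentially large family of decompositions contributing to $v_m$, and the candidate pool $P_i$ would no longer span $\{v_m : \mathrm{depth}(m) = i\}$. Verifying carefully that the bilinear representation propagates through both sum and product gates in a way consistent with the expansion by elements of $M_{i-1}$, and that this expansion agrees as a global vector identity (not just at a single gate), is the technical heart of the argument and is where the unambiguous structure is crucially exploited.
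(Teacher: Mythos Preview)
Your overall strategy---stratifying by the depth of the unique reduced parse tree, building a candidate pool $P_i$ of products from $M_{i-1}$, and combining weight functions with a large base $T_i$ to preserve lexicographic order---is exactly the paper's approach.

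There is, however, a real error in the bilinear identity $v_m = \sum_{m',m''\in M_{i-1}} \alpha_{m'}\beta_{m''}\, v_{m'\cdot m''}$. The problem arises for pairs $(m',m'')$ whose product $m'm''$ occurs in $\Psi$ but whose unique reduced parse tree $T_{m'm''}$ decomposes at the root as $p\times q$ with $\{p,q\}\neq\{m',m''\}$. For such a pair, $v_{m'm''}$ is governed by that other decomposition: at a product gate $g=g_1\times g_2$ one has $v_{m'm''}(g)=v_p(g_1)v_q(g_2)+v_q(g_1)v_p(g_2)+\cdots$, which need not vanish and is unrelated to $v_{m'}(g_1)v_{m''}(g_2)+v_{m''}(g_1)v_{m'}(g_2)$. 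When you expand $v_{m_L},v_{m_R}$ and attempt to regroup into $v_{m'm''}$-terms, these bad pairs spoil the identity. Your final paragraph invokes unambiguity to pin down the root decomposition of $m$ itself; what is actually needed is control over the root decompositions of the various $m'm''$.

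The paper's fix is to restrict the sum to only those $(m',m'')$ for which $T_{m'm''}$ really does decompose as $m'\times m''$ at its root. The key observation is that for every other pair, unambiguity forces $v_{m'}(g_1)v_{m''}(g_2)=0$ at \emph{every} product gate $g$ (otherwise one would witness a parse tree for $m'm''$ with the forbidden root split), so those pairs contribute nothing to the gate-wise expansion of $v_m$ anyway and can simply be dropped. With this restriction the induction on gate depth goes through, and since the restricted set is still contained in your $P_i$, the span claim, the weight inequality $w_{i-1}(m'm'')<w_{i-1}(m)$, and the greedy isolation all survive intact.
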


\begin{proof}
For any monomial $m$, the {\em depth} of $m$ (with degree $\geq 1$) in $\Psi$ is the depth of $T_m$, the unique reduced parse tree in $\Psi$ computing the monomial $m$. Since the product depth of $\Psi$ is at most $\Delta$, we have that for each monomial $m$, the depth of $T_m$ is also $\leq\Delta$. 
Let $f_{\Psi} = \sum\limits_{m\in\text{mons}(Z)}v_mm$ where $v_m\in \mathbb{F}^s$ is as defined before. The basis isolating weight assignment $w$ for $f_{\Psi}$ is obtained by taking a carefully chosen linear combination of $\Delta+1$ many weight functions. That is, the $\Delta+1$ many weight functions are chosen and scaled appropriately to obtain the final basis isolating weight function as in Agrawal et al.\ \cite{AGKS15}. Recall that $W_{k,d}$ is the set of weight assignments from Lemma \ref{lemma:splitting} that separates every set of at most $k$ monomials of degree $\leq d$.

\begin{claim}
\label{claim:existence}
There exist $w_2, \ldots, w_{\Delta+1}\in W_{s^2\Delta, d}$ such that $$w \triangleq \sum_{i = 0}^{\Delta} B^{\Delta - i}w_{i+1}$$
is a basis isolating weight assignment for $f_\Psi$, where $w_1$ sends each $z_i$ to the value $i$, and $B = 1 + \max\{w_i(m)\}$ where the $\max$ is over all monomials $m$ of degree $\leq d$ in $Z$, and all $i$ belonging to $[\Delta+1]$. 
\end{claim}
\begin{proof}
We construct the weight function $w$ by iteratively constructing $w_1, \ldots ,w_{\Delta}$. For this, we will need the following function:
\begin{equation}
w^j \triangleq \sum_{i = 0}^{j-1} B^{j-1 - i}w_{i+1}
\end{equation}
First, we sort the variables in $Z$ in ascending order of their weight with respect to $w_1$ and pick  a basis $B_1$ for ${\sf span}\{v_{z_i}\mid i \in [n]\}$ greedily starting from the lowest weight variable. Let the corresponding set of monomials (variables) be $M_1$. Clearly, $|M_1|\leq s$ and $w^1 = w_1$. Further, by construction of $B_1$ and $M_1$ and as $w^1 = w_1$, we have that for every monomial $m$ of depth $1$, $v_m\in{\sf span}\{v_{m'}\mid m'\in M_1, w^1(m')<w^1(m)\}$.  Now, we extend this idea to monomials of depth $j \in \{2, \ldots, \Delta \}$. More precisely, we show the following:\\

\noindent For each $2 \leq i \leq \Delta$, there exists $w_i\in W_{s^2\Delta, d}$ such that for each $j = 1, \ldots, \Delta$ there exists a subset $M_j$ of monomials of depth $j$ and a corresponding set $B_j = \{v_m \mid m\in M_j\}$ of coefficients of $f_{\Psi}$ such that for every monomial $m$ of depth $j$, $v_m\in{\sf span}\{v_{m'}\mid m'\in M_j, w^j(m')<w^j(m)\}$. Furthermore,  $|M_j| \leq s$ for each $j\in [\Delta]$, and $w^j$ separates $M_j$. \\

\noindent We will prove this by induction on $j$. Observe that the base case $j=1$ has already been discussed above. Suppose we  already have weight functions $w_1, \ldots, w_{j - 1}$ (with $w_2, \ldots, w_{j-1}\in W_{s^2\Delta,d}$) satisfying our requirements. Then we know that there exist corresponding sets $M_1, \ldots, M_{j-1}$ and $B_1, \ldots, B_{j-1}$ also satisfying the above mentioned conditions. Define $M'_j$ and $B'_j$ as follows:
\begin{align*}
M'_{j}&\triangleq\bigcup_{k \leq j-1}\left\{
  m'\cdot m''\;\middle|\;
  \begin{aligned}
  & m'\in M_{j - 1},m''\in M_k, {\sf depth}(T_{m'm''})=j,\\
  & T_{m'm''} \text{ decomposes }m'\cdot m''\text{ as }m'\times m''
  \end{aligned}
\right\}\\
B'_j &\triangleq \{v_{m}\mid m \in M'_j\}
\end{align*}
where we say that a parse tree $T$ for a monomial $m=m_1\cdot m_2$ {\em decomposes} $m$ as $m_1\times m_2$ if one of the children of the root of $T$ computes $m_1$ and the other computes $m_2$. 

Let $m$ be any monomial of depth $j$ with reduced parse tree $T_m$ such that $T_m$ decomposes $m$ as $m=m_1\times m_2$ with ${\sf depth}(m_1) = j-1$ and ${\sf depth}(m_2)=k\leq j-1$. Let us define $M_{m_1}$ and $M_{m_2}$ as follows: 
\begin{align*}
M_{m_1} &\triangleq \{ m'\in M_{j-1} \mid   w^{j-1}(m') < w^{j-1}(m_1) \} \subseteq M_{j-1} \\
M_{m_2} &\triangleq \{ m'' \in M_k \mid   w^{k}(m'') < w^{k}(m_2) \}  \subseteq M_k
\end{align*}
Let us also define the set $M_{m_1m_2}\subseteq M_j'$ as follows:

\begin{equation*}
 M_{m_1m_2}\triangleq\left\{
  m'\cdot m''\;\middle|\;
  \begin{aligned}
  & m'\in M_{m_1},m''\in M_{m_2}, {\sf depth}(T_{m'm''})=j,\\
  & T_{m'm''} \text{ decomposes }m'\cdot m''\text{ as }m'\times m''
  \end{aligned}
\right\}
\end{equation*}

We will show that for each monomial $m$ as defined above, we have 
\begin{equation}
\label{eq:vspan}
    v_{m}\in{\sf span}\{v_{m'm''}\mid m'\cdot m''\in M_{m_1m_2}\}
\end{equation}

The proof of Equation \ref{eq:vspan} is similar to the proof of Claim \ref{span}, although in this case we will need to be careful about the depth of the monomials involved. By the induction hypothesis, we have that $v_{m_1}\in{\sf span}\{v_{m'}\mid  m'\in M_{m_1}\}$ and  $v_{m_2}\in{\sf span}\{v_{m''}\mid  m''\in M_{m_2}\}$. That is, 

\begin{eqnarray}
     v_{m_1}&= \sum\limits_{m'\in M_{m_1}}\alpha_{m'}v_{m'} \label{eq:m1} \\
     v_{m_2}&=\sum\limits_{m''\in M_{m_2}}\beta_{m''}v_{m''} \label{eq:m2}
\end{eqnarray}

where the $\alpha_{m'}$'s and $\beta_{m''}$'s are scalars in $\mathbb{F}$. In order to prove Equation \ref{eq:vspan}, we will show that for each gate $g$ in $\Psi$, 

\begin{equation}\label{eq:depthSpan}
    v_{m_1m_2}(g) = \sum_{\substack{m'\in M_{m_1}, m''\in M_{m_2}\\m'm''\in M_{m_1m_2}}}\alpha_{m'}\beta_{m''}v_{m'm''}(g)
\end{equation}

We do this, as in the proof of Claim \ref{span}, by induction on the depth of the gate $g$ in $\Psi$.

\begin{enumerate}
    \item If $g$ is a leaf, then both the LHS and RHS in Equation \ref{eq:depthSpan} are $0$ (since $j\geq 2$). Therefore, Equation \ref{eq:depthSpan} is true in this case.
    \item If $g=g_1+g_2$ is a sum gate, we have $$v_{m_1m_2}(g)=v_{m_1m_2}(g_1)+v_{m_1m_2}(g_2)$$ By induction on the depth of the gates $g_1, g_2$, Equation \ref{eq:depthSpan} is true for $v_{m_1m_2}(g_1)$ and $v_{m_1m_2}(g_2)$ and therefore it is also true for $v_{m_1m_2}(g)$.
    \item The interesting case is when $g=g_1\times g_2$ is a product gate. In this case, $$v_{m_1m_2}(g)=v_{m_1}(g_1)v_{m_2}(g_2)+v_{m_1}(g_2)v_{m_2}(g_1)+v_{m_1m_2}(g_1)v_1(g_2)+v_{1}(g_1)v_{m_1m_2}(g_2)$$
    Substituting (\ref{eq:m1}) for $v_{m_1}$ and (\ref{eq:m2}) for $v_{m_2}$ we get

    \begin{equation}\label{eq:sum}
    v_{m_1}(g_1)v_{m_2}(g_2)+v_{m_1}(g_2)v_{m_2}(g_1)=\sum_{\substack{m'\in M_{m_1}\\m''\in M_{m_2}}}\alpha_{m'}\beta_{m''}(v_{m'}(g_1)v_{m''}(g_2)+v_{m'}(g_2)v_{m''}(g_1))
    \end{equation}

    The key observation here is that if for some $m'\in M_{m_1}$ and $m''\in M_{m_2}$ it is the case that $m'm''\not\in M_{m_1m_2}$, then $v_{m'}(g_1)v_{m''}(g_2)=0$ and $v_{m'}(g_2)v_{m''}(g_1)= 0$, for otherwise $T_{m'm''}$ it would decompose $m'm''$ as $m'\times m''$ and it would have depth $j$ (because ${\sf depth}(T_{m'})=j-1$ and ${\sf depth}(T_{m''})=k\leq j-1$).

    Therefore, we have that the only surviving terms in the RHS of Equation \ref{eq:sum} are those corresponding to the $m'\in M_{m_1}$ and $m''\in M_{m_2}$ such that $m'm''\in M_{m_1m_2}$. Also, by induction on the depth of the gates $g_1$ and $g_2$, Equation \ref{eq:depthSpan} is true for $g_1$ and $g_2$. Combing these observations, we see that $v_{m_1m_2}(g)$ is equal to $$\sum_{\substack{m'\in M_{m_1}, m''\in M_{m_2}\\m'm''\in M_{m_1m_2}}}\alpha_{m'}\beta_{m''}(v_{m'}(g_1)v_{m''}(g_2)+v_{m'}(g_2)v_{m''}(g_1)+v_{m}(g_1)v_{1}(g_2)+v_{1}(g_1)v_{m}(g_2))$$ This quantity is exactly equal the RHS of Equation \ref{eq:depthSpan}.
\end{enumerate}
These three cases together prove Equation \ref{eq:depthSpan}, and therefore, Equation \ref{eq:vspan}. Note that Equation \ref{eq:vspan} puts $v_m$ in the span of $B'_j$ by definition of $M_j'$. \\

\noindent Furthermore, observe the following:
\begin{itemize}
\item For any $m'' \in M_{m_2}$, $w^{k}(m'')< w^{k}(m_2)$, by the definition of set $M_{m_2}$.
\item  For any two monomials $\tilde{m_1}, \tilde{m_2}$ and any $1 \leq l\leq t\leq \Delta$, 
 $w^l(\tilde{m_1})<w^l(\tilde{m_2})\implies w^{t}(\tilde{m_1})<w^{t}(\tilde{m_2})$.  This is by the choice of $B$: $w^{t}(\tilde{m_1})=B^{t-l}w^{l}(\tilde{m_1})+B^{t-l-1}w_{l+1}(\tilde{m_1})+\ldots+w_{t}(\tilde{m_1})\leq B^{t-l}w^l(\tilde{m_1})+B^{t-l}-1 < B^{t-l}w^{l}(\tilde{m_2})\leq w^t(\tilde{m_2})$.
\item Therefore we have that for any $m''\in M_{m_2}$, $w^{j-1}(m'')< w^{j-1}(m_2)$ (since $k\leq j-1$).
\end{itemize}

From the observations above, we see that for any $m'\in M_{m_1}, m''\in M_{m_2}$ we have:
\begin{equation}
\label{eq:weight}
w^{j-1}(m)= w^{j}(m_1\cdot m_2) = w^{j-1}(m_1)+w^{j-1}(m_2)>w^{j-1}(m')+w^{j-1}(m'')=w^{j-1}(m'\cdot m'')
\end{equation}
 Hence, from Equations (\ref{eq:vspan}) and (\ref{eq:weight}), for any monomial $m$ of depth $j$ we have that $v_m\in{\sf span}\{v_{m'}\mid m'\in M'_{j}, w^{j-1}(m')<w^{j-1}(m)\}$. To complete the induction step, pick a $w_j\in W_{s^2\Delta,d}$ that separates $M_j'$ (this fixes $w^j$ as well). By the way $w^j$ is defined, $w^j$ also separates $M_j'$. Sort the elements of $M_j'$ (and therefore $B'_j$) in ascending order of weight with respect to $w^j$ and pick a greedy basis for ${\sf span}\{B_j'\}$, going over vectors in $B_j'$ from left to right starting with the lowest weight monomial. Set that basis to be $B_j$ and the corresponding set of monomials to be $M_j$. Clearly, for all monomials $m$ of depth $j$, $v_m\in{\sf span}\{v_{m'}\mid m'\in M_j, w^j(m')<w^j(m)\}$, and $|M_j| \leq s$. Also, $w^j$ separates $M_j$ by construction. This completes the inductive step.

        At $j=\Delta$, we will have constructed $w^{\Delta}$ with the following properties: 

        \begin{itemize}
            \item[(a)] $w^\Delta$ separates $M_j$ for all $1 \leq j \leq \Delta$. This is because $w^{j}$ separates $M_j$ and therefore so do all $w^k$ for $j \leq k \leq \Delta$.
            \item[(b)]\label{obs(b)} For all $j\in [\Delta]$ and every monomial $m$ of depth $j$, $v_m\in {\sf span}\{v_{m'}\mid w^{\Delta}(m')<w^{\Delta}(m)\}$. Again, this is because $w^j(m')<w^{j}(m)$ implies $w^{k}(m')<w^{k}(m)$ for all $j \leq k \leq \Delta$.     
\end{itemize}
 Consider the set $M'=\bigcup\limits_{j=1}^{\Delta} M_j$ (note that $|M'|\leq s\Delta$). Pick a $w_{\Delta+1}\in W_{s^2\Delta,d}$ that separates $M'$. Consider the weight assignment $w = w^{\Delta+1}\triangleq Bw^{\Delta}+w_{\Delta+1}$. It is not hard to see that $w$ is a basis isolating weight assignment for $f_{\Psi}$. Sort the monomials in $M'$ in ascending order of weight with respect to $w^{\Delta+1}$, and pick a greedy basis for ${\sf span}\{v_m\mid m\in M'\}$, as before. Call this set $B$ and the corresponding set of monomials $M$. $M$ is the set of monomials \textit{isolated} by $w^{\Delta+1}$:
\begin{enumerate}
\item $w^{\Delta+1}$ separates $M$.
\item Let $m$ be any monomial. Let depth of $m$ be $j$. Then by observation (b) above, know that $v_m\in \{v_{m'}\mid m'\in M_j, w^{\Delta}(m')<w^{\Delta}(m)\}$. This implies that $v_m\in \{v_{m'}\mid m'\in M_j, w^{\Delta+1}(m')<w^{\Delta+1}(m)\}$.  But for each $m'\in M_j\subseteq M'$, $v_{m'}\in {\sf span}\{v_{m''}\mid m''\in M, w^{\Delta+1}(m'')<w^{\Delta+1}(m')\}$. Therefore, for each monomial $m$, we have $v_{m}\in {\sf span}\{v_{m''}\mid m''\in M, w^{\Delta+1}(m'')<w^{\Delta+1}(m)\}$. 

\item Since the vectors in $B$ are linearly independent, $B$ is in fact a basis for $\{v_m|m\in \text{mons}(Z)\}$.
\end{enumerate}   
This finishes the proof of Claim \ref{claim:existence}. \end{proof}
Now, having proved existence, in order to construct the basis isolating weight assignment from Claim \ref{claim:existence}, simply try all tuples $(w_2, \ldots, w_{\Delta+1})$ in $(W_{s^2\Delta,d})^\Delta$. At least one of them is sure to work. The cost of doing this is $\poly(n, s, d)^{\Delta}$, and the weight assignments arising from these tuples give at most $\poly(n,s,d)^\Delta$ weight to any variable.
\end{proof}

Next, we will show that a basis isolating weight assignment is \textit{useful} for PIT. 

\begin{lemma}\label{BIWA-map}
    Let $\Psi$ be a circuit computing $f(z_1, \ldots, z_n)\in \mathbb{F}[Z]$ and $f_\Psi = \displaystyle\sum_{m \in \text{mons}(Z)}v_mm$ be as defined earlier. Suppose $w$ is a basis isolating weight assignment for $f_\Psi$. Let $\phi$ be the polynomial map that sends $z_i$ to $t^{w(z_i)}$ where $t$ is a new variable. Then $\phi(f) \not\equiv 0 \iff f \not\equiv 0$. 
\end{lemma}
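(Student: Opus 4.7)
The plan is as follows. One direction, $\phi(f)\not\equiv 0 \implies f\not\equiv 0$, is immediate: if $f$ is identically zero then any polynomial substitution yields zero. So the content of the lemma is the converse, namely that the Kronecker-style substitution $z_i\mapsto t^{w(z_i)}$ does not kill a non-zero $f$. The strategy is to use the structure of the basis isolating weight assignment to rewrite $f_\Psi$ so that each ``basis coefficient'' $v_m$ (for $m\in M$) is multiplied by a polynomial whose image under $\phi$ has a unique leading monomial in $t$.

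Concretely, I would first invoke condition (3) of the basis isolating property to write, for every monomial $m'\notin M$, an expression $v_{m'} = \sum_{m\in M,\, w(m)<w(m')}\alpha^{m}_{m'}\,v_m$ for scalars $\alpha^{m}_{m'}\in\mathbb{F}$. Substituting this into $f_\Psi=\sum_{m'}v_{m'}m'$ and swapping the order of summation gives
\[
f_\Psi \;=\; \sum_{m\in M} v_m\cdot p_m, \qquad p_m \;\triangleq\; m \;+\; \sum_{\substack{m'\notin M\\ w(m')>w(m)}}\alpha^{m}_{m'}\,m'.
\]
Looking at the coordinate of $f_\Psi$ corresponding to the output gate $g_{\text{out}}$ of $\Psi$, we obtain $f=\sum_{m\in M}(v_m)_{g_{\text{out}}}\,p_m$, and hence $\phi(f)=\sum_{m\in M}(v_m)_{g_{\text{out}}}\,\phi(p_m)$.

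Now I would analyse the univariate polynomial $\phi(p_m)=t^{w(m)}+\sum_{m'\notin M,\, w(m')>w(m)}\alpha^{m}_{m'}\,t^{w(m')}$. Its lowest-degree term in $t$ is precisely $t^{w(m)}$. Assuming $f\not\equiv 0$, at least one coordinate of some $v_m$ (with $m\in M$) is non-zero at $g_{\text{out}}$; let $m^*\in M$ be the element of minimum weight $w(m^*)$ among those with $(v_{m^*})_{g_{\text{out}}}\neq 0$. The key step is to isolate the coefficient of $t^{w(m^*)}$ in $\phi(f)$: contributions come either from the leading $t^{w(m)}$ of some $\phi(p_m)$, which by condition (2) of the basis isolating property forces $m=m^*$ since $w$ separates $M$; or from higher-weight terms $t^{w(m')}$ inside some $\phi(p_m)$, which requires $w(m)<w(m^*)$ and hence $(v_m)_{g_{\text{out}}}=0$ by minimality of $m^*$. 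Thus the coefficient of $t^{w(m^*)}$ in $\phi(f)$ is exactly $(v_{m^*})_{g_{\text{out}}}\neq 0$, so $\phi(f)\not\equiv 0$.

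There is no real obstacle; the only subtlety, which the argument above handles, is that $w$ is only guaranteed to separate monomials inside the isolated set $M$, so one must rule out cancellations coming from non-isolated monomials $m'\notin M$ that happen to share a weight with $m^*$. This is precisely why the minimality of $m^*$ together with condition (3) (which only allows $m'$ to appear with $w(m')>w(m)$) is invoked. The final argument is essentially the standard ``leading monomial'' trick for basis isolating weight assignments as in Agrawal et al.\ \cite{AGKS15}, now applied to the vector-valued polynomial $f_\Psi$ computed by the unambiguous circuit $\Psi$.
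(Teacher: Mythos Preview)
Your argument is correct and follows essentially the same approach as the paper's proof: pick the minimum-weight isolated monomial $m^*\in M$ with non-zero coefficient at the output gate, use condition~(2) to rule out interference from other isolated monomials, and use condition~(3) together with the minimality of $m^*$ to rule out interference from non-isolated monomials of the same weight. The only difference is presentational: you first rewrite $f_\Psi=\sum_{m\in M}v_m\,p_m$ explicitly and then read off the coefficient of $t^{w(m^*)}$, whereas the paper argues directly that every monomial $m'\neq m^*$ with $w(m')=w(m^*)$ has $v_{m'}(g_{\text{out}})=0$.
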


\begin{proof}
Clearly, $f\equiv 0 \implies \phi(f)\equiv 0$. For the other direction, first notice that $\phi(m) = t^{w(m)}$ for any monomial $m$. Let $M\subseteq\text{mons}(Z)$ be the set of monomials isolated by $w$. Let $B=\{v_m\mid m\in M\}$. Let $g$ be the output gate of $\Psi$ which computes $f$. If $f \not\equiv 0$ then $f$ must contain a monomial from $M$ with non-zero coefficient, because $B$ is a basis for ${\sf span}\{v_m\mid m\in \text{mons}(Z)\}$. Let $M'\subseteq M$ be the monomials from $M$ that occur in $f$ with non-zero coefficient. Let $m \triangleq {\sf argmin}_{m\in M'}\{w(m)\}$. $m$ is the unique minimizer, since $w$ separates $M$. Now suppose for $m'\not\in M$, we have that $w(m') = w(m)$. We know that $v_{m'}\in{\sf span}\{v_{m''}\mid m''\in M, w(m'')<w(m')\}$. But for all such $m''$, $v_{m''}(g) = 0$ by minimality of $m$. Therefore, $v_{m'}(g) = 0$. Therefore, $m$ is the only non-zero monomial in $f$ that receives weight $w(m)$, and so it survives $\phi$. 
\end{proof}

PIT for unambiguous circuits of low depth follows easily from Lemmas \ref{BIWA} and \ref{BIWA-map}, just by noting that the degree of the univariate polynomial $\phi(f)$ (as above) is $\poly(n, d, s)^{\Delta}$. To check if a univariate is identically zero, we can query it on ${\sf degree}+1$ points in $\mathbb{F}$ (recall that a non-zero univariate has at most degree many roots). If $\Delta$ is polylogarithmic in $n, d, s$, we get the quasipolyomial running time bound. We record this as our next theorem. 

\unambiguousHS*

Theorems \ref{commutative-hittingset} and \ref{noncommutative-hittingset} are a direct consequence of combining the reduction from nonassociative circuits to commutative, associative, unambiguous circuits and the hitting set for unambiguous circuits.

\begin{proof}[Proof of Theorem \ref{commutative-hittingset}]

    To embed the hitting set from Theorem \ref{unambiguous-hittingset} into the algebra $\mathbb{C}_d$, we let $Z=\{z_{i, j, k}\mid i\in [n], j\in[d]\}$ as in Section \ref{blackbox-commutative}. Let $\mathbb{C}_d$ and $\mathbb{C'_d}$ be the algebras defined in Section \ref{blackbox-commutative}, over the extension field $\mathbb{F}(Z)$. Recall the map $\Phi:\cna\rightarrow \mathbb{C'}_d$ as defined in Section \ref{reduction}. $\Phi$ sends $x_i$ to $Z_i\in\mathbb{C'}_d$ for each $i\in[n]$ and each $j, k\in[d]$, the $(j, j+1, k)$-th entry of $Z_i$ is one and all other entries are zero. We will embed $\Phi$ into a map $\Phi':\cna \rightarrow\mathbb{C}_d$ defined as follows: we let $\Phi'$ map $x_i$ to $(Z_i, 0)$, $Z_i\in\mathbb{C}'_d$, $c\in\mathbb{F}$ to $(\mathbf{0}, c)$ (where $\mathbf{0}\in \mathbb{C}'_d$) and extend it to all of $\mathbb{C}_d$ by linearity. Note that for a polynomial $f\in \cna$ with constant term $c$ we have that $\Phi'(f)=(\Phi(f-c), c)$.

    Let $\Psi$ be a commutative, nonassociative circuit of size $s$ and product depth $\Delta$ computing a polynomial $f\in\cna$, with ${\sf degree}(f)=d'\leq d$. Consider the image $\Phi'(f) = (\Phi(f-c), c)$ of $f$ under $\Phi'$, where $c$ is the constant term in $f$. We know from Lemma \ref{set-mult} that the $(1, d'+1, 1)$-th entry of $\Phi(f-c)$ is $\phi(f_{d'})$ (see Section \ref{reduction} for details). By Lemma \ref{unambiguous-construction-commutative}, $\phi(f_{d'})$ has an unambiguous circuit $\Psi'$ of size $\leq3d^4s$ and depth $\leq \Delta$. Note that $\Phi(f_{d'})$ is a polynomial in the $d^2n$ many $Z$ variables. Therefore, embedding the hitting set $H_{3d^4s, d^2n, d, \Delta}$ on the $Z$ variables into the map $\Phi'$ gives us the theorem.
\end{proof}

The proof of Theorem \ref{noncommutative-hittingset} is also exactly the same as above, except in this case we will obtain a hitting set with elements in $(\mathbb{A}_d)^n$.


\section{Discussion}\label{sec:discuss}

Two interesting question that stem from our work are the following:
\begin{enumerate}
    \item \emph{Depth reduction for unambiguous circuits:} As briefly mentioned in the introduction, we leave open the question of whether unambiguous circuits can be depth reduced without too much blowup in size. More precisely, suppose we have an unambiguous circuit $\Psi$ of size $s$, computing a polynomial $f\in\mathbb{F}[z_1, \ldots, z_n]$ of degree $d$. Does there exist another unambiguous circuit $\Psi'$ computing $f$, with size quasipolynomial in $n,s,d$ and depth polylogarithmic in $n,s,d$? A positive answer to this question would imply that the size of the hitting sets from Theorem \ref{commutative-hittingset} and Theorem \ref{noncommutative-hittingset} can be improved to quasipolynomial, irrespective of the depth of the circuit. As far as we know, standard depth reduction techniques due to Valiant et al.\ \cite{VSBR83} and Brent \cite{Brent74} do not preserve unambiguity of circuits, even if we allow quasipolynomial blowup in size.
    \item \emph{Tightness of the dimension of $\mathbb{C}_d$ and $\mathbb{A}_d$:} Consider the following, purely mathematical question: What is the smallest possible dimension $k(d)$ of a unital algebra that does not satisfy \emph{any} identity $f\in\cna$ of degree $\leq d$? We show $k(d)\leq d(d+1)^2+1$. One could also ask the analogous question over the algebra $\ncna$. The Amitsur-Levitzki Theorem (Theorem \ref{thm:amitsur-Levitzki}) gives such a tight characterization over $\mathbb{F}\langle X\rangle$, for matrix algebras.
\end{enumerate}

\bibliographystyle{plainurl}
\bibliography{ref}

\appendix 

\end{document}